\def\defit{\bf }
\def\Bbb{\mathbb}
\def\Bbb{\mathbb}
\def\reals{\Bbb R}
\def\cal{\mathcal}
\theoremstyle{plain}                    %------- 'regular' theorem types
\newtheorem{thm}{Theorem}[section]
\newtheorem{cor}[thm]{Corollary}
\newtheorem{lemma}[thm]{Lemma}
\newcounter{ques}
\numberwithin{equation}{section}
\begin{document}
\baselineskip=18pt

%%%%%%%%%%%%%%%%%%   title  %%%%%%%%%%%%%%%%%%%%%%%%%%%%%%

%\renewcommand{\thepage}{\roman{page}}
%\renewcommand{\thepage}{}              % omit pages here altogether.
%

%\title [Polynomial sized nonobtuse triangulation of PSLGs]
%          {Polynomial sized nonobtuse triangulation of PSLGs}
\title [Nonobtuse Triangulations of PSLGs]
          {Nonobtuse triangulations of PSLGs}

%\footnote{This is a preliminary version. Please do not distribute 
%copies without author's consent}
\subjclass{Primary: 68U05  Secondary: 52B55, 68Q25 }
\keywords{
   nonobtuse triangulation, acute triangulation, conforming triangulation, 
   PSLG,  Delaunay triangulation, Gabriel condition, nearest-neighbor 
    learning, quadrilateral mesh, Voronoi diagram,  thick-thin decomposition, 
    polynomial complexity,  propagation paths, return regions, spirals}
\author {Christopher J. Bishop}
\address{C.J. Bishop\\
         Mathematics Department\\
         SUNY at Stony Brook \\
         Stony Brook, NY 11794-3651}
\email {bishop@math.sunysb.edu}
\thanks{The  author is partially supported by NSF Grant DMS 13-05233.  } 
%\date
\date{March  2011; revised Nov 2014; re-revised Jan 2016}
\maketitle

%%%%%%%%%%%%%%%%%%%%%%%% abstract %%%%%%%%%%%%%%%%%%%%%%%%

\begin{abstract}
We show that any planar straight line 
graph  with $n$ vertices  has a conforming triangulation 
by $O(n^{2.5})$ nonobtuse triangles (all angles $\leq 90^\circ$),
 answering the question of whether any polynomial bound exists. 
%The triangles may be chosen to be all acute or all right.
% This is the first polynomial 
%bound for nonobtuse triangulation of PSLGs. 
%An acute triangulation
%must also be a  conforming Delaunay triangulation, so 
A nonobtuse triangulation is Delaunay, so this result also improves 
a previous $O(n^3)$ bound of Eldesbrunner and Tan for 
conforming Delaunay triangulations of PSLGs.
In the special case that  the PSLG is the triangulation of a simple polygon, 
we will show that only 
$O(n^2)$ triangles are needed,
 improving an $O(n^4)$ bound of Bern and Eppstein.
 We also 
show that for any $\epsilon >0$, every PSLG has  a conforming 
triangulation with $O(n^2 /\epsilon^2)$ elements   and with all 
angles bounded above by $90^\circ + \epsilon$. 
This  improves a result of S.  Mitchell when $\epsilon = 
\frac 38 \pi =67.5^\circ $ and Tan 
 when $\epsilon = \frac 7{30} \pi = 42^\circ $.
%Finally, we prove that any PSLG has a conforming 
%quadrilateral mesh with $O(n^2)$ elements and all new angles between 
%$60^\circ$ and $120^\circ$ (the complexity and angle bounds are both 
%sharp). Moreover, all but $O(n)$ of the angles may be taken in 
%a smaller interval, say $[89^\circ, 91^\circ]$.
\end{abstract}

\clearpage

%%%%%%%%%%%%%%%%%%%%%%%% Introduction %%%%%%%%%%%%%%%%%%%%

\setcounter{page}{1}
\renewcommand{\thepage}{\arabic{page}}
\section{Introduction} \label{Intro}

A  planar straight line graph  $\Gamma$ 
(or PSLG from now on) is  the 
disjoint union of a finite number
 (possibly none) of  non-intersecting 
open line segments (called the edges of $\Gamma$)
  together with a disjoint finite point set  (the vertices 
of $\Gamma$) 
that includes all the endpoints of the line segments,
but may include other points as well. 

If $V$ is a  finite  point set in the plane, 
a {\defit triangulation} of $V$ is a PSLG with vertex 
set $V$ and a maximal set of edges.
If $\Gamma$ is a PSLG with vertex set 
$V$, then a conforming triangulation for 
$\Gamma$ is a triangulation of a point 
set $V'$ that contains $V$ and such that  
the union of  the vertices and edges of the triangulation 
covers  $\Gamma$.  We allow $V'$ to be strictly 
larger than $V$; in this case $V'\setminus V$ are
called the  {\defit Steiner points}.
 We want to build conforming triangulations 
for $\Gamma$ that have  small 
complexity (the number of triangles used)
 and good  geometry (the shape 
of each triangle; no angles too large or too small),
 but these two goals are often in conflict.
In this paper, we are  interested in  finding the best 
 angle bounds on the triangles  that allow us to    
polynomially bound the number of triangles needed  
 in  terms of $n$,  the number of vertices of  $\Gamma$.

If we triangulate a  $1 \times r$ rectangle into a fixed 
number of elements, it is easy to check that some angles
must tend to zero as $r \to \infty$, so there is 
no  uniform, strictly positive  lower angle bound possible,
if we want 
the number of triangles  to be bounded only in
terms of the  number of vertices of the given PSLG.
 Since the angles 
of a triangle sum to $180^\circ$, if we had an upper 
bound of $90^\circ - \epsilon$  on the angles of 
a triangulation, then we also have a $2 \epsilon$ lower 
angle bound. 
Therefore, no upper bound strictly less than $90^\circ$
is possible.
Thus {\defit nonobtuse  triangulation}
(all angles $\leq 90^\circ$) is the 
best we can hope for.

In 1960 Burago and Zalgaller \cite{BZ60} showed that any polyhedral 
surface has an {\defit acute triangulation}
 (all angles $< 90^\circ$),
 but without giving a  bound on
the number of triangles needed. This was used as a technical 
lemma in their proof of a polyhedral version of the Nash embedding 
theorem. 
%  \cite{BZ-96} is a 1995 sequel to this paper. 
In 1984 Gerver \cite{Gerver} 
 used the Riemann mapping theorem to show that if a polygon's  
angles all exceed $36^\circ$, then there 
exists a dissection of it  into triangles with maximum angle $72^\circ$ (in 
a dissection, adjacent triangles need not meet along an entire
edge). 
In 1988 Baker,  Grosse  and  Rafferty  \cite{BGR}
again proved that any  polygon has  a nonobtuse triangulation,
and their construction also gives a  lower
angle bound. As noted above, in this case no complexity
  bound in terms of $n$ 
alone is possible, although there is a sharp bound  in terms of 
integrating the local feature size over the polygon. 
For details, see 
\cite{BEG-1994},
\cite{Ruppert-1993}
 or the survey \cite{Edelsbrunner-2000}.

A linear bound for nonobtuse triangulation of point sets was given 
by Bern, Eppstein and Gilbert  \cite{BEG-1994},
and  Bern and Eppstein \cite{BE92}  gave a quadratic 
bound for simple polygons with holes 
(this is a polygonal region  where every boundary component
 is a simple closed curve or an 
isolated point).  Bern, Dobkin and Eppstein \cite{BDE} improved this 
to $O(n^{1.85})$ for convex domains.
Bern,   S.  Mitchell  and  Ruppert
  \cite{BMR95}   gave a  $O(n)$  algorithm for nonobtuse 
triangulation of simple polygons with holes   in 1994 and 
their construction uses  only right triangles.
We shall make use of their result in this paper.
%We shall closely follow some of their arguments in this paper.
These and related results are discussed in the surveys 
 \cite{BE-OptMesh} and  \cite{Bern-Plassmann}.
Other papers that deal with algorithms for finding nonobtuse 
and acute triangulations include 
  \cite{Erten-Ungor-2007},
\cite{KS2004},
\cite{Li-Zhang},
and \cite{MS92}.
% and \cite{Erten-Ungor-2009},   Delaunay refinement
Giving a polynomial upper  bound for the
complexity of  
nonobtuse triangulation  of PSLGs has remained open 
(e.g., see Problem 3 of  \cite{BE-OptMesh}).
We give such a bound by proving:

\begin{thm} \label{NonObtuse} 
Every PSLG with $n$ vertices 
has a  $O(n^{2.5})$ conforming  nonobtuse triangulation.
%The triangles may be taken to be either all acute or all right.
\end{thm}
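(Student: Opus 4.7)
The plan is to separate the plane around $\Gamma$ into three kinds of regions and mesh each by a different technique; the $O(n^{2.5})$ bound will then come from controlling the way Steiner points propagate between the pieces.

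First I would build a protected neighborhood of $\Gamma$: around each vertex $v$ a small polygonal protector $P_v$, and around each edge $e$ a thin ``ribbon'' $R_e$ of width comparable to the distance from $e$ to the nearest competing feature of $\Gamma$. The complement $\Omega$ of $\bigcup_v P_v \cup \bigcup_e R_e$ inside a large bounding box is a simple polygonal region with holes and $O(n)$ vertices, so by Bern--Mitchell--Ruppert it admits a right-triangle nonobtuse mesh of size $O(n)$. This settles the ``thick'' part of the plane where PSLG features are well separated.

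Next, each ribbon $R_e$ would be meshed by a strip of right trapezoids aligned with $e$, and each $P_v$ by a fan of right triangles joining the incoming ribbons. In isolation each of these costs $O(n)$ triangles and can be made nonobtuse by a standard construction, except that when two edges meet at $v$ with a very small angle the fan must \emph{spiral} into $v$ with geometrically shrinking scale. The real trouble is that the meshes on opposite sides of a ribbon, and on adjacent ribbons sharing a vertex, must agree along their common boundaries.

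Matching these meshes is the heart of the argument. A Steiner point on one boundary of a ribbon forces a reflected Steiner point on the other boundary; passing through a vertex fan sends this to an adjacent ribbon, producing a \emph{propagation path} of forced Steiner points. The main obstacle is to show that each such path either escapes into the thick part of an appropriate thick-thin decomposition of $\Gamma$ (where reflections expand transverse scales by a definite factor, so the path terminates within $O(\sqrt{n})$ steps) or closes off inside a bounded ``return region'' around a spiral at a small-angle vertex. Counting at most $O(n^2)$ possible path origins (roughly one per ordered pair of interacting features) times a length bound of $O(\sqrt{n})$ per path gives $O(n^{2.5})$ Steiner points in total, and these combined with the base mesh prove the theorem.
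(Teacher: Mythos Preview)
Your outline has the right flavor---propagation paths, return regions, a thick/thin split---but the mechanism by which $\sqrt{n}$ enters is misidentified, and the hardest part of the argument is missing.

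First, the claim that a propagation path ``escapes into the thick part \ldots\ so the path terminates within $O(\sqrt{n})$ steps'' because ``reflections expand transverse scales by a definite factor'' is not correct for the propagation that actually arises here. In the paper's setup (each triangle is split by its inscribed-circle tangent points into a central acute triangle and three isosceles triangles), the propagation through an isosceles piece is a \emph{length-preserving} reflection across the axis of symmetry; there is no definite expansion. A $P$-path can therefore wander through $O(n)$ pieces before it is forced to hit a return region (Lemma~\ref{must cross} / Lemma~\ref{return regions lemma}), not $O(\sqrt{n})$. The $\sqrt{n}$ in the final bound comes from an entirely different source: the Gabriel constraint on bending a path inside a piece allows displacement only of order $|a-b|^2/w$ (quadratic in the crossing length, not linear as for $\theta$-bending), so a simple return tube built from $k$ pieces must be sliced into $O(\sqrt{k})$ parallel sub-tubes before entering paths can be bent to terminate at a corner (Lemma~\ref{G tube bending}, Corollary~\ref{narrow tubes}). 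That is: $O(n)$ return regions, each contributing $O(\sqrt{n})$ new propagation paths, each of which runs $O(n)$ steps, gives $O(n^{2.5})$. Your accounting of $O(n^2)$ origins times $O(\sqrt{n})$ steps lands on the right number for the wrong reason.

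Second, your treatment of spirals (``closes off inside a bounded return region around a spiral at a small-angle vertex'') is where the real work lies, and you have not supplied it. A pure spiral can have arbitrarily many windings $N$, unrelated to $n$, so simply propagating the $O(\sqrt{k})$ sub-tube corners through it produces $\Theta(N\sqrt{k}\,k)$ vertices---unbounded. The paper handles this with a multi-stage construction inside each spiral: dyadic merging of adjacent paths out to winding $\sim k^{1/3}$, a single surviving path out to $\sim k^{1/2}$, a carefully spaced sequence of $O(\sqrt{k})$ closed Gabriel loops between windings $\sim k^{1/2}$ and $\sim k$, and an empty region beyond. Each stage needs the quadratic displacement estimate and a Cauchy--Schwarz argument to show the required bending is achievable; the total cost per spiral is $O(k^{1.5})$ independent of $N$. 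Without something playing this role, your proposal does not bound the number of Steiner points.
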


Maehara \cite{Maehara-2002} showed that any
 nonobtuse triangulation 
using $N$  triangles can be refined to an 
acute triangulation (all angles $< 90^\circ$) 
with $O(N)$ elements. 
%Thus any polygon with holes has an
%acute triangulation with a linear number of elements.
 A different proof  was given by 
Yuan  \cite{Yuan-2005}.
% We will sketch Yuan's idea in Section \ref{Refine sec}. 
In our  proof of Theorem \ref{NonObtuse}   the triangulation 
will consist of all right triangles, but the arguments of 
Maehara or Yuan then show the theorem also holds with 
an acute triangulation, at the cost of a
larger constant in the $O(n^{2.5})$. 
As noted above, simple examples
 give a quadratic lower bound for PSLGs (see \cite{BE92}),
so a gap remains between 
our  upper  bound and the worst known  example.
However, this gap can be eliminated in some special cases, e.g., 

\begin{thm} \label{Refine Triangulation}
A triangulation of a simple $n$-gon has a
$O(n^2)$ nonobtuse  refinement.
\end{thm}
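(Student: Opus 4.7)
The input triangulation of a simple $n$-gon consists of exactly $n-2 = O(n)$ triangles. The plan is to refine each of these triangles into a nonobtuse sub-mesh of $O(n)$ right triangles in a way that agrees on every shared edge, for a total of $O(n^2)$ right triangles. The first step is to prescribe, on each edge $e$ of the input, a set of $O(n)$ Steiner points whose positions depend symmetrically on the (one or two) input triangles adjacent to $e$; this symmetry across shared edges automatically yields conformance of the two sub-meshes on $e$. The density along $e$ is chosen to resolve all geometric features of the adjacent triangles --- in particular short altitudes and narrow corners --- at the scale needed to support right-triangle elements.

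The second step is a sub-lemma: a single input triangle $T$ with $O(n)$ prescribed boundary Steiner points admits a nonobtuse mesh of $O(n)$ right triangles using only those prescribed boundary vertices. Because $T$ is convex and has only three corners, this is far more tractable than the analogous statement for a general simple polygon. The construction I have in mind drops the altitude from an obtuse vertex of $T$ (if any) to split $T$ into two right sub-triangles, fits a constant-complexity right-triangle gadget at each corner of $T$, and fills the remainder with a brick-wall or fishbone pattern of right triangles whose boundary endpoints on $\partial T$ are exactly the prescribed Steiner points.

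The principal obstacle is this sub-lemma. Standard nonobtuse meshing algorithms such as Bern--Mitchell--Ruppert \cite{BMR95} mesh a simple polygon with $k$ boundary vertices using $O(k)$ right triangles, but they are free to introduce further Steiner points on the boundary, which would destroy conformance with the neighboring input triangle. I would handle this either by pre-refining $\partial T$ densely enough that the BMR construction needs no new boundary points, or by giving a direct right-triangle construction that exploits convexity of $T$ to respect any prescribed boundary subdivision. Crucially, because the dual graph of a triangulation of a simple polygon is a tree (it has $n-2$ vertices, $n-3$ edges and is connected), refinement induced by boundary Steiner points cannot propagate cyclically through the triangulation; this absence of closed propagation paths, spirals, or return regions is what enables the improved $O(n^2)$ bound here, in contrast to the $O(n^{2.5})$ bound of Theorem \ref{NonObtuse} for general PSLGs.
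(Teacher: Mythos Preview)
Your proposal has the right skeleton --- mesh each of the $n-2$ input triangles with $O(n)$ nonobtuse elements so that shared-edge subdivisions agree --- and you correctly identify the tree structure of the dual graph as the reason no cyclic propagation occurs. But the core of the argument is missing: you never specify \emph{which} $O(n)$ Steiner points go on each edge, and your sub-lemma as stated is not true without further hypotheses. A triangle with an arbitrary set of prescribed boundary points need not admit a nonobtuse mesh using \emph{exactly} those boundary vertices; Bern--Mitchell--Ruppert requires that the boundary points make the triangle Gabriel (and even then adds the midpoints $V'$). ``Pre-refining densely'' does not produce the Gabriel condition, and the brick-wall/altitude sketch does not handle arbitrary prescribed subdivisions of all three sides.

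The paper supplies exactly the mechanism you are missing. For each input triangle $T_k$ it takes the three tangent points of the inscribed circle (the ``cusp points''); these split $T_k$ into a central acute triangle $T_k'$ and three isosceles triangles. The cusp points are then \emph{propagated} through adjacent triangles along $P$-segments (segments parallel to the isosceles bases). Because the dual graph is a tree, each propagation path terminates after at most $n-1$ steps, giving $O(n^2)$ new edge-vertices in total. A short reflection argument shows that this specific set of points makes every input triangle Gabriel (with no triple disk intersections), so Theorem~\ref{BMR} applies to each $T_k$ and the resulting meshes agree on shared edges because the edge points --- and hence their midpoints --- are common to both sides. The content is in this propagation-and-Gabriel step; your outline gestures at propagation in the last paragraph but does not actually use it in the construction.
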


This improves a $O(n^4)$ bound given by Bern and Eppstein 
\cite{BE92}. 
 We can also approach the quadratic lower bound 
if we consider ``almost nonobtuse'' triangulations:

\begin{thm} \label{Triangles} 
Suppose $\theta>0$.
Every PSLG with $n$ vertices has a  conforming
 triangulation   with 
$O(n^2/\theta^2)$  elements and all angles 
$\leq 90^\circ + \theta$.
\end{thm}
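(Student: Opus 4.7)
The plan is to follow the high-level architecture of the proof of Theorem~\ref{NonObtuse}, replacing only the most expensive subroutine---the rigid propagation of right triangles through thin channels---with a cheaper $\theta$-tolerant fill-in. In the strict nonobtuse setting, once a right-triangle mesh is committed to a direction at one end of a thin corridor, it must carry that direction through to the other end; when the corridor spirals or forms a return region (as flagged by the paper's keywords), the forced propagation path can have length $\Theta(\sqrt{n})$, and this is the source of the $n^{1/2}$ overhead in the $O(n^{2.5})$ bound. Permitting angles up to $90^\circ+\theta$ destroys this rigidity and lets each propagation be short-circuited in $O(1/\theta)$ steps.

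First, I would reuse the thick--thin decomposition from the proof of Theorem~\ref{NonObtuse}, which produces $O(n^2)$ cells: the thick cells, of bounded combinatorial complexity, are triangulated with $O(1)$ right triangles each, contributing $O(n^2)$ to the total with no modification. The remaining work is in the thin cells, each of which after a bi-Lipschitz normalization is a rectangle of some large aspect ratio $r$ whose two short ``cap'' sides inherit a prescribed set of Steiner points from the adjacent thick cells.

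Second, in each thin cell I would place a tilted, two-scale mesh. Fan $O(1/\theta)$ wedges of angular aperture $\theta$ out of one cap, and fill each wedge with a staircase of $O(1/\theta)$ slightly tilted right triangles whose legs are aligned with the wedge axis rather than the corridor axis. Because any wedge is rotated by at most $\theta$ relative to the walls, every interior angle of every staircase triangle lies within $\theta$ of $90^\circ$. The cost is $O(1/\theta^2)$ triangles per thin cell, independent of $r$; when $r<1/\theta$ a cheaper direct fill of $O(r/\theta)$ elements suffices, but the uniform $O(1/\theta^2)$ estimate is all we need. Summing $O(n^2)$ cells times $O(1/\theta^2)$ triangles each yields the advertised bound $O(n^2/\theta^2)$.

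The main obstacle I expect is the conformal matching at the cap sides: the Steiner points on the caps were placed by the thick-cell construction and may have incommensurate spacings forced by the PSLG geometry, so the endpoints of the tilted staircase have to be reconciled with them without exceeding the $\theta$ tolerance, and this is where the constant hidden in $O(1/\theta^2)$ must be checked carefully. A secondary technical point is nested spirals: one has to verify that truncating each spiral after $O(1/\theta)$ levels does not force a neighboring spiral to restart its propagation, which would otherwise leak an extra $\log n$ factor and spoil the clean $n^2/\theta^2$ scaling.
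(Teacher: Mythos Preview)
Your proposal rests on a decomposition that the paper does not actually contain. There is no ``thick--thin decomposition producing $O(n^2)$ cells'' in the proof of Theorem~\ref{NonObtuse}, nor are there thin corridors that normalize to rectangles with prescribed Steiner points on two cap sides. The paper's machinery is different: starting from a triangulation of the PSLG, one removes from each triangle a central polygon (approximating a circular-arc triangle) so that the complement $\Omega$ has a $\theta$-nice isosceles dissection with $O(n)$ chains and $O(n/\theta)$ pieces. One then propagates $P$-paths through this dissection; the pigeonhole argument of Lemma~\ref{return regions lemma} gives $O(n)$ disjoint \emph{return regions} (C-, S-, or G-tubes, or spirals) such that every $P$-path of length $O(n/\theta)$ hits one. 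The key step---and the only place $\theta$ enters---is that inside each return region one slices into $O(1/\theta)$ parallel sub-tubes and uses Lemma~\ref{conn corners} to bend a path by angle $\theta$ across each sub-tube, terminating it at an opposite corner. The count is then $O(n/\theta)$ paths (from cusp points plus sub-tube corners), each of length $O(n/\theta)$, giving $O(n^2/\theta^2)$.

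Because your thick--thin framework is invented rather than quoted, the burden of proof falls entirely on your Step~2, and that step is not worked out. The ``fan of $O(1/\theta)$ wedges'' with a ``staircase of tilted right triangles'' is asserted to fill an arbitrary-aspect-ratio rectangle with $O(1/\theta^2)$ elements whose angles stay within $\theta$ of $90^\circ$, but no such construction is given, and in fact a rectangle of aspect ratio $r\gg 1/\theta$ cannot be triangulated into $O(1/\theta^2)$ elements with all angles $\le 90^\circ+\theta$: each triangle has diameter $O(1/\theta)$ times its inradius, so covering a $1\times r$ rectangle needs $\Omega(r\theta)$ elements, which is unbounded. You correctly identify the cap-matching as the obstacle, but the real obstacle is earlier: the wedge fill itself does not exist with the claimed size. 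The paper avoids this because its tubes are never meshed in isolation; paths entering a tube are routed to a single corner on the far end and then exit, so the tube's aspect ratio never appears in the count.
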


This improves a result of S. Mitchell  \cite{Mitchell93}
with upper angle bound 
$ \frac 78 \pi = 157.5^\circ$ and a result of Tan \cite{Tan96}   with
$\frac {11}{15} \pi = 132^\circ$. 

A triangulation is called {\defit Delaunay} if whenever two triangles
share an edge $e$, the two angles opposite $e$  sum to $180^\circ$
or less.  If all the triangles are  nonobtuse, then this is certainly the
case, so Theorem \ref{NonObtuse} immediate implies

\begin{cor} \label{Delaunay} 
Every PSLG with $n$ vertices 
 has a $O(n^{2.5})$  conforming  Delaunay triangulation. 
\end{cor}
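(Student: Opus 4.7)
The plan is to derive this corollary as an immediate consequence of Theorem \ref{NonObtuse}, exploiting the particular definition of Delaunay triangulation adopted in the paper (the angle-sum form of the empty-circumcircle condition). The statement just before the corollary asserts that a triangulation is Delaunay provided that whenever two triangles share an edge $e$, the two angles opposite $e$ sum to at most $180^\circ$, so all I need to do is verify that any nonobtuse triangulation satisfies this inequality and then invoke Theorem \ref{NonObtuse} for the complexity bound.

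First I would apply Theorem \ref{NonObtuse} to the given PSLG $\Gamma$ with $n$ vertices, obtaining a conforming triangulation $\mathcal{T}$ with $O(n^{2.5})$ triangles, each of whose interior angles is at most $90^\circ$. Next, I would let $e$ be any edge of $\mathcal{T}$ that is shared by two triangles $T_1, T_2 \in \mathcal{T}$, and let $\alpha_i$ denote the angle of $T_i$ opposite $e$. By the nonobtuse condition, $\alpha_1 \leq 90^\circ$ and $\alpha_2 \leq 90^\circ$, whence $\alpha_1 + \alpha_2 \leq 180^\circ$. This is exactly the Delaunay condition as stated in the excerpt, so $\mathcal{T}$ qualifies as a conforming Delaunay triangulation of $\Gamma$.

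There is no real obstacle here: the only subtlety worth recording is that the bound on the number of triangles is inherited verbatim from Theorem \ref{NonObtuse}, and no further Steiner points or refinements are needed to pass from the nonobtuse property to the Delaunay property. In particular, the same triangulation witnesses both statements, so the implicit constant in the $O(n^{2.5})$ bound for the Delaunay conclusion is the same as in the nonobtuse theorem.
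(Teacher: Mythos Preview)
Your proposal is correct and matches the paper's own argument exactly: the paper simply observes that in a nonobtuse triangulation the two angles opposite any shared edge are each at most $90^\circ$, hence sum to at most $180^\circ$, so the Delaunay condition follows immediately from Theorem~\ref{NonObtuse}. No additional ideas are needed.
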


This improves a  1993 result of Edelsbrunner and Tan \cite{ET93}
that any PSLG has a conforming Delaunay
 triangulation of size $O(n^3)$.
Conforming Delaunay triangulations for $\Gamma$
  are also called Delaunay
refinements of $\Gamma$. There are numerous papers discussing
Delaunay refinements including
\cite{Edelsbrunner-2000},
\cite{Erten-Ungor-2009},
\cite{NS-94},
\cite{Ruppert-1993},
  \cite{Saalfeld1991}
and
 \cite{Shewchuk-2002}.
The argument in this paper does not seem to 
give a better estimate for Delaunay triangulations
than for nonobtuse triangulations, nor does the proof 
appear  to simplify in the Delaunay case. 
Finding an improvement (either for the estimate 
or the argument) in the Delaunay case would be 
extremely interesting.

An alternative formulation of the Delaunay condition is that 
every edge in the triangulation is the chord of an open disk 
that contains no vertices of the triangulation. We say the 
triangulation is {\defit Gabriel} if every edge is the diameter
of such disk. It is easy to check that a nonobtuse triangulation 
must be Gabriel, so we also obtain a stronger version of the 
previous corollary: 

\begin{cor} \label{Gabriel} 
Every PSLG with $n$ vertices 
 has a $O(n^{2.5})$  conforming  Gabriel triangulation. 
\end{cor}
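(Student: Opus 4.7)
The plan is to deduce Corollary~\ref{Gabriel} directly from Theorem~\ref{NonObtuse}, together with the observation (already flagged by the author) that every nonobtuse triangulation is automatically Gabriel. Theorem~\ref{NonObtuse} furnishes a conforming triangulation $\mathcal{T}$ of $\Gamma$ of size $O(n^{2.5})$ with every angle at most $90^\circ$, so the only remaining task is to verify the Gabriel property for $\mathcal{T}$.

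Fix an edge $e = ab$ of $\mathcal{T}$, let $D$ be the open disk of diameter $e$, and let $T_1 = abw_1$ be a triangle of $\mathcal{T}$ adjacent to $e$. Set up coordinates so that $a = (-r,0)$, $b = (r,0)$ with $r = |e|/2$ and $w_1$ in the upper half-plane. Since $\angle aw_1 b \leq 90^\circ$, the inscribed-angle theorem places the circumcenter of $T_1$ at $(0,h)$ with $h \geq 0$, so the open circumdisk $D_1$ of $T_1$ is the set $\{x^2 + y^2 - 2hy < r^2\}$. For any $(x,y) \in D$ with $y \geq 0$ one has $x^2+y^2 < r^2$ and hence $x^2 + y^2 - 2hy \leq r^2 - 2hy \leq r^2$ (with strict inequality inherited from $D$), so the upper half of $D$ is contained in $D_1$.

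The nonobtuse hypothesis forces the Delaunay condition (opposite angles across any interior edge sum to at most $180^\circ$), which is equivalent to the statement that the open circumdisk of every triangle in $\mathcal{T}$ is empty of vertices of $\mathcal{T}$. In particular the upper half of $D$ is empty of vertices. Repeating the argument on the other side of $e$ with $T_2 = abw_2$ disposes of the lower half when $e$ is an interior edge; when $e$ is a boundary edge of the triangulated region, the ``outside'' half of $D$ is automatically empty of vertices of $\mathcal{T}$, which all lie in that region. Thus $D$ contains no vertex of $\mathcal{T}$, giving the Gabriel condition. The only delicate point is that a vertex with inscribed angle exactly $90^\circ$ lies on $\partial D$, but Gabriel demands only emptiness of the \emph{open} disk, so such borderline configurations cause no difficulty, and I do not anticipate any substantive obstacle beyond the coordinate inclusion above.
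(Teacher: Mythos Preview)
Your argument is correct and follows exactly the approach indicated in the paper: the author states just before the corollary that ``it is easy to check that a nonobtuse triangulation must be Gabriel'' and deduces Corollary~\ref{Gabriel} immediately from Theorem~\ref{NonObtuse}, without giving further details. You have simply supplied a clean verification of that easy-to-check fact via the Delaunay empty-circumdisk property, so there is nothing to add.
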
 

Given a finite planar point set $V$,  and a point 
$ v\in V$, the {\defit Voronoi cell} corresponding to $v$ is
the open set of points that are strictly closer to 
$v$ than to any other point of $V$. The union of the
boundaries of the all the Voronoi cells is called
the {\defit Voronoi diagram} of $V$.
In \cite{Salzberg}, it is shown that given a nonobtuse 
triangulation with $N$ elements, one can find a set of $O(N)$
points so that the Voronoi diagram of the point set covers
all the edges of the triangulation. Thus we obtain 

\begin{cor} \label{learning} 
For every PSLG $\Gamma$ with $n$ vertices, there is a point 
set $S$ of size $O(n^{2.5})$ whose Voronoi diagram covers $\Gamma$.
\end{cor}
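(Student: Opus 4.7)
The plan is to deduce the corollary by combining Theorem \ref{NonObtuse} with the point-set construction of \cite{Salzberg} that was recalled just before the statement, so the proof is essentially a two-line composition of established results.

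First, I would apply Theorem \ref{NonObtuse} to the given PSLG $\Gamma$ to produce a conforming nonobtuse triangulation $T$ with $N = O(n^{2.5})$ triangles. By the definition of \emph{conforming} given in the introduction, the union of the vertices and edges of $T$ covers $\Gamma$; therefore it suffices to exhibit a point set of cardinality $O(N)$ whose Voronoi diagram covers every edge of $T$.

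Next, I would invoke the construction of \cite{Salzberg} as a black box: given any nonobtuse triangulation with $N$ triangles, one can produce a point set $S$ with $|S| = O(N)$ whose Voronoi diagram contains every edge of the triangulation. Morally, for each edge $e$ of $T$ one places a pair of sites symmetric across $e$, so that the perpendicular bisector of the pair covers $e$; nonobtuseness of the two triangles sharing $e$ is what allows these sites to be placed so that the Voronoi segment they determine is not truncated by cells of sites added for neighboring edges. A bounded number of new sites per edge of $T$ suffices, giving $|S| = O(N)$.

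Combining the two steps, the point set $S$ has size $O(n^{2.5})$, its Voronoi diagram covers all edges of $T$, and hence it covers $\Gamma$. There is no real obstacle: the heavy lifting is entirely inside Theorem \ref{NonObtuse}, and the only non-tautological step is the invocation of \cite{Salzberg}, which is a purely local statement about nonobtuse triangles that the reader has already been pointed to.
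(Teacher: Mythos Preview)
Your proposal is correct and matches the paper's approach exactly: the corollary is stated immediately after recalling the result of \cite{Salzberg}, and the paper's entire proof is the phrase ``Thus we obtain,'' i.e., compose Theorem~\ref{NonObtuse} with the $O(N)$ Voronoi-cover construction of \cite{Salzberg}. Your added remark that the edges of the conforming triangulation cover $\Gamma$ makes the implicit step explicit, but otherwise there is nothing to add.
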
 

The authors of \cite{Salzberg} were interested in a type of 
machine learning called ``nearest neighbor learning''.  
Given $\Gamma$ and $S$ as in the corollary, and any point 
$z$ in the plane,  
we can decide which complementary component
  of $\Gamma$  $z$ belongs to by finding the 
element $w$  of $S$ that is closest to $z$;  $z$  and $w$ must
 belong to the same complementary  component of $\Gamma$.
 Thus the corollary 
says that  a partition of the 
plane by a PSLG of size $n$ can be ``learned'' from a point 
set of size $O(n^{2.5})$.  This answers a question from 
 \cite{Salzberg}  asking if a polynomial number of points 
always suffices. 

%The 1993  paper of  Edelsbrunner and Tan \cite{ET93}.
% actually gives a bound $O(m^2 n)$ where $m$ is the 
%number of edges in the PSLG and $n$ is the number of vertices. If we 
%apply Theorem \ref{NonObtuse} to the sub-PSLG of all edges and their 
%endpoints, we get a $O(m^{2.5})$
% nonobtuse triangulation and we shall see
%that the additional $O(n)$ isolated vertices can be inserted at 
%a total cost of $O(mn)$ triangles. Thus if the PSLG has 
%$n$ vertices and $m$ edges,   there is a nonobtuse 
%triangulation using  $O(m^{2.5} + mn)$ elements. 
%See Section \ref{vertices and edges}.

Acute and nonobtuse   triangulations arise in a variety 
of other contexts.  In recreational 
mathematics one asks for the smallest 
triangulation of a given object into acute
or nonobtuse pieces. For example, a square can obviously be 
meshed with two right triangles, but less obvious is the 
fact that it  can be acutely  triangulated 
with  eight elements but not seven; see \cite{Cassidy-Lord}.
 For further results of this type see
 \cite{Gardner60a}, 
\cite{Gardner60b}, 
\cite{GM60}, 
\cite{HIZ}, 
\cite{Itoh-2001},
 \cite{Itoh-Yuan}, 
 \cite{Itoh-Z-2},
\cite{Itoh-Z},
 \cite{Saraf}, 
\cite{Yuan-Z}, 
\cite{Yuan-Z-2}, 
  \cite{Z-2004}, 
the 2002  survey \cite{Zamfirescu} and 
the 2010 survey \cite{CZsurvey}.
There is less known in higher dimensions, but 
recent work has shown 
there is an acute triangulation of $\reals^3$, 
 but no acute triangulation
of $\reals^n$, $n \geq 4$ 
\cite{BSKV}, 
\cite{KPP}, 
\cite{Krizek2006}, 
 \cite{VaHiGuRa2007}, 
 \cite{VaHiGuRa2010},
\cite{VaHiZhGu2009}.
Finding polynomial bounds for conforming   Delaunay tetrahedral meshes 
in higher dimensions remains open.

In various numerical methods involving meshes,
a nonobtuse  triangulation
frequently gives  simpler and better behaved  algorithms.
%and often allows one to prove a discrete maximum principle. 
%Vavasis gives a  weak maximum principle for
%discrete  solutions of $\nabla \cdot (c \nabla u) =0$ and
%notes that the maximum principle holds for
%nonobtuse triangulations (citing Wahlbin).  
For example, in \cite{Vavasis-96} 
Vavasis bounds   various matrix norms  arising from
the finite element method in terms of the number $n$
 of triangulation elements; for general
triangulations his estimate is exponential in $n$, but for nonobtuse
triangulations it is only linear  in $n$.
Other examples where nonobtuse or Delaunay  triangulations give
simpler or faster methods include: 
 discrete maximum principles 
\cite{Bobenko-Springborn-2007}, 
\cite{Ciarlet-Raviart},
\cite{Vanselow};
 Stieltjes matrices in finite element methods 
\cite{BHV}, 
\cite{Spielman-Teng-2004}; 
 convenient description of the dual graph \cite{Bern-Gilbert-92};
 the Hamilton-Jacobi equation  \cite{Barth-Sethian}; 
 the fast marching method  \cite{Sethian}; 
 the tent pitcher algorithm for meshing space-time 
\cite{Spacetime}, 
\cite{Thite},
\cite{Ungor-Sheffer}.

The ideas  in this paper are used in a companion paper
 \cite{Bishop-quadPSLG}
to obtain conforming quadrilateral meshes for PSLGs 
that have  optimal angle
bounds  and optimal worst case complexity.
The precise statement from this paper used in 
 \cite{Bishop-quadPSLG} is Lemma \ref{quad mesh lemma}; 
this follows from a slight modification of 
 the proof of Theorem \ref{Triangles}. 
The result obtained in \cite{Bishop-quadPSLG} says
that  every PSLG has an $O(n^2) $  conforming  quadrilateral mesh   with
all angles $\leq 120^\circ$ and
all new  angles $ \geq 60^\circ$.
The angle bounds and
quadratic complexity bound are both sharp.

%The angle bounds and
%quadratic complexity bound are both sharp.
%In fact, the argument in \cite{Bishop-quadPSLG} shows that
%for any $\epsilon >0$ there is a $O(n^2/\epsilon^2)$
%conforming  quadrilateral mesh with all new angles
%between $60^\circ$ and $120^\circ$, but only $O(n/\epsilon)$ elements have
%an angle outside $[90^\circ  -\epsilon, 90^\circ + \epsilon]$.
%Thus the mesh is ``mostly rectangular''.

%The rest of the paper is organized as follows:
%\begin{description}
%\item [Section \ref{}] 
%\end{description}

%\newpage

Many thanks to Joe Mitchell and Estie Arkin for numerous 
conversations about computational geometry in general and 
the results of this paper in particular. Also thanks  to two
anonymous referees for many helpful comments and suggestions
on two earlier versions of the paper; their efforts 
greatly improved the presentation  in this version.

In Section \ref{Gabriel edges} we recall a theorem of 
Bern, Mitchell and Ruppert \cite{BMR95} that connects
nonobtuse triangulation to finding Gabriel edges, and 
we sketch the proof of this result in Section \ref{ProofBMR}.
In Section \ref{Refine tri} we use their theorem to give a simple 
proof of Theorem \ref{Refine Triangulation}. In 
Sections \ref{standard}-\ref{return regions} 
we discuss propagation paths, 
dissections and return regions; these are used in 
the proofs of both Theorems \ref{NonObtuse} and  \ref{Triangles}.  
Sections \ref{ideal}--\ref{proof spirals} 
give the proof of Theorem 
\ref{Triangles} and Section \ref{quad mesh lemma sec} 
summarizes facts 
from the proof that are used in the sequel paper 
\cite{Bishop-quadPSLG}. Section \ref{intro NonObtuse} gives an 
overview of the proof of Theorem \ref{NonObtuse}
and Sections \ref{Perturb sec}--\ref{empty} provide the details. 

%-----------------------------------------------------------

%-------------------------------------------------------------
\section{ The theorem of Bern, Mitchell and Ruppert } \label{Gabriel edges}

Given a point set $V$ and two points $v,w \in V$, the segment
$vw$ is called a {\defit Delaunay edge} if it is the chord of some
open disk that contains no points of $V$ and is called a
{\defit Gabriel edge} if it is the diameter of such a disk
(see \cite{GS}). We will call a PSLG  $\Gamma $ with vertex
set $V$ and edge set $ E$  Gabriel  if 
every edge in $E$  is  Gabriel for $V$.
Given a PSLG which is not Gabriel, can we always add
extra vertices to the edges, making a new PSLG that is 
Gabriel? 
We are particularly interested in the case when $P=T$
is a triangle.  See Figure \ref{GabrielTriangle}. 

\begin{figure}[htb]
\centerline{
\includegraphics[height=1.5in]{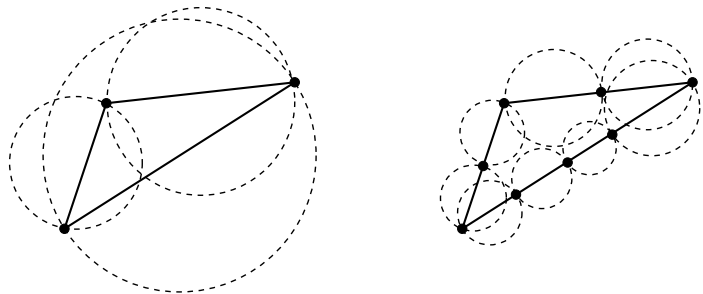}
 }
\caption{\label{GabrielTriangle}
A triangle that is not Gabriel and one way to add points 
so that it becomes Gabriel.
}
\end{figure}

The connection between Gabriel triangles and
 nonobtuse triangulation 
is given by the following result of Bern, Mitchell and 
Ruppert \cite{BMR95}. Suppose $T$ is a triangle with 
three vertices $V_0= \{a,b,c\}$  and suppose $V$ is a
finite  subset of the edges of $T$. Then $T\setminus 
(V_0 \cup V)$ is a finite union of segments and 
we let $V'$ denote the midpoints of these segments. 

\begin{thm} \label{BMR}
Suppose we add $N$ points $V$ to the edges of a triangle $T$,
 so that the triangle becomes Gabriel. 
Assume further that no point of the interior of
$T$ is in more that two of the Gabriel disks.
Then the interior  
of $T$  has a nonobtuse triangulation consisting of $O(N)$
right triangles and the triangulation vertices
on the boundary of   $T$ 
are exactly the vertices of $T$ and  the points in 
$V$ and $V'$.
\end{thm}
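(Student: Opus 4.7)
My plan is to use the Gabriel disks to place right triangles at each midpoint in $V'$, then invoke a known nonobtuse triangulation result for simple polygons to finish. For each subdivided boundary edge $e = pq$ of $T$, the Gabriel disk $D_e$ is empty and centered at $m_e \in V'$ with radius $|e|/2$. The key geometric fact I will exploit is Thales's theorem: any point $m_e'$ on the semicircular arc $\partial D_e \cap T$ satisfies $\angle p m_e' q = 90^\circ$, and the foot of the perpendicular from $m_e'$ to $e$ is precisely $m_e$. Thus the perpendicular segment $m_e m_e'$ splits the inscribed right triangle $p m_e' q$ into two right triangles $p m_e m_e'$ and $q m_e m_e'$, both with right angle at $m_e$.

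The construction I would carry out is: for each boundary edge $e$, introduce an interior Steiner point $m_e'$ on the perpendicular to $e$ at $m_e$, with height chosen so that the segments $p m_e'$ and $q m_e'$ make $45^\circ$ angles with $e$, and add the two right triangles $p m_e m_e'$ and $q m_e m_e'$ to the mesh. This produces $O(N)$ right triangles and places boundary vertices on $\partial T$ exactly at $V_0 \cup V \cup V'$, as required. When two adjacent Gabriel disks $D_e, D_{e'}$ overlap, the heights must be reduced so that the inscribed triangles do not cross; the hypothesis that no interior point lies in more than two Gabriel disks guarantees the overlap pattern is at most pairwise, so the reduction can be made locally without triggering cascading constraints.

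The residual region $R = T \setminus \bigcup_e p m_e' q$ is then a simply connected polygon with $O(N)$ vertices and all interior angles controlled by the chosen heights. By the linear-complexity nonobtuse (all-right-triangle) triangulation of simple polygons with holes from \cite{BMR95}, $R$ admits such a mesh with $O(N)$ elements. Gluing this with the inscribed-triangle mesh yields the claimed $O(N)$ right-triangle nonobtuse triangulation of $T^\circ$ with boundary vertex set exactly $V_0 \cup V \cup V'$.

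The main obstacle is carrying out the inscribed-triangle placement consistently near the vertices of $T$. Every triangle has at least two acute corners, and at such a corner $a$ with interior angle $\alpha_a < 90^\circ$ the naive $45^\circ$ placement would cause the two incident inscribed right triangles to overlap into the interior; the fix is to shrink the heights on the two edges incident to $a$, which in turn perturbs the angle matchings at the neighbouring intermediate vertices $v \in V$ and must be propagated along the side until the symmetric situation at the next corner is reached. The two-deep Gabriel hypothesis is essential throughout this propagation: it bounds the interaction of Gabriel disks to pairwise overlaps, so the height adjustments stay local, the region $R$ remains a single simply connected polygon, and the total count of right triangles remains linear in $N$.
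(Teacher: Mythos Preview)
Your approach has a genuine gap at the gluing step. When you invoke the black-box result from \cite{BMR95} to nonobtusely triangulate the residual polygon $R$, that result \emph{does} place Steiner points on $\partial R$: for a general simple polygon this is unavoidable (think of a long thin triangle). The boundary of $R$ consists of the hypotenuses $p\,m_e'$ and $q\,m_e'$ of the right triangles you have already laid down. Any Steiner point the black box drops on such a hypotenuse becomes a non-conforming vertex against the adjacent right triangle $p\,m_e\,m_e'$ or $q\,m_e\,m_e'$. To restore conformity you would have to subdivide those triangles, but a point on the hypotenuse of a right triangle does not split it into right triangles, and you have no control over how many such points \cite{BMR95} produces or where they land. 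So the two meshes do not fit together, and the argument does not close.

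The paper avoids this precisely by \emph{not} treating \cite{BMR95} as a black box on a leftover polygon. It opens up the disk-packing machinery: the Gabriel disks themselves are placed first so that they cover $\partial T$; at each pairwise overlap of Gabriel circles a small tangent disk is inserted and an explicit $16$- or $10$-triangle right-angled mesh is written down by hand; only then is the remaining interior packed with disks until all remainder regions have three or four circular-arc sides. Because the Gabriel disks already cover $\partial T$, no remainder region has any straight boundary segment, and the augmented-region meshes from \cite{BMR95} place vertices only at disk centers and tangency points. This is what forces the boundary vertex set to be exactly $V_0\cup V\cup V'$ and what makes the at-most-two-deep hypothesis do real work: it guarantees the hand-built overlap meshes suffice. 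Your height-adjustment discussion gestures at the same hypothesis but never exploits it in a way that controls the interface with the interior mesh.
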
 

This follows from the proof in \cite{BMR95}, but 
this precise statement does not appear there,
so in the next section we will briefly describe 
now to deduce Theorem \ref{BMR} from the arguments 
in \cite{BMR95}. 

We say that  one triangulation ${\cal T}_1$
 is a refinement of another
triangulation ${\cal T}_2$, 
if each triangle in  ${\cal T}_2$  is a union of triangles in 
${\cal T}_1$. 
If $\Gamma$ is a PSLG that is a triangulation and we 
add enough points to the edges of $\Gamma$ to make 
every triangle Gabriel, then the resulting nonobtuse 
refinements of each triangle agree along any common
edges  (the set of boundary vertices is $e \cap (V \cup V')$
for both triangles with edge $e$). Thus we get:

\begin{cor}
Suppose that   $\Gamma$ is a   planar triangulation with 
$n$ elements, and that we can add $N$ vertices to the 
edges of $\Gamma$ so  that every triangle becomes 
Gabriel. Then $\Gamma$ has  a refinement 
consisting of  $O(n+N)$   right triangles.
\end{cor}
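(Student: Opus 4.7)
The plan is to apply Theorem~\ref{BMR} to each of the $n$ triangles of $\Gamma$ separately and then to check that the resulting per-triangle right-triangle meshes glue together along every shared edge. Let $T_1,\dots,T_n$ denote the triangles of $\Gamma$, and let $V$ (with $|V|=N$) denote the Steiner points added to the edges of $\Gamma$ so that every $T_i$ becomes Gabriel. For each $T_i$ I set $V_i = V \cap \partial T_i$ and $N_i = |V_i|$, and invoke Theorem~\ref{BMR} to obtain a nonobtuse triangulation of the interior of $T_i$ consisting of $O(N_i+1)$ right triangles, whose boundary vertices on $\partial T_i$ are exactly the three corners of $T_i$ together with $V_i$ and the set $V_i'$ of midpoints of the resulting subsegments of $\partial T_i$.

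Next I would verify that these local meshes fit together consistently. If two triangles $T_i$ and $T_j$ share an edge $e$, then the two endpoints of $e$ are common vertices and $V\cap e$ is identical as viewed from either side; hence the induced midpoint sets $V_i'\cap e$ and $V_j'\cap e$ also agree. Thus the refinements of $T_i$ and $T_j$ place exactly the same vertices on $e$, so the local meshes combine into a single planar refinement of $\Gamma$ in which each piece is a right triangle.

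Finally I would add up the per-triangle complexities:
\[
\sum_{i=1}^n O(N_i+1) \;=\; O(n) \,+\, O\!\Bigl(\sum_{i} N_i\Bigr).
\]
Since each Steiner point of $V$ lies on an edge of $\Gamma$ and each such edge is shared by at most two of the $T_i$, we have $\sum_i N_i \le 2N$, so the total is $O(n+N)$, as claimed.

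The only real difficulty is bookkeeping: I must make sure that each individual $T_i$ satisfies the hypotheses of Theorem~\ref{BMR}, in particular the condition that no interior point of $T_i$ lies in more than two of its Gabriel disks. I expect this to be either part of the standing definition of ``Gabriel triangulation'' used in the corollary or a condition one can arrange simultaneously with the Gabriel property at no extra asymptotic cost; once it is in hand, the gluing step above is automatic from the explicit symmetric description of the boundary-vertex set $V_i \cup V_i'$ supplied by Theorem~\ref{BMR}.
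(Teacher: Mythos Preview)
Your proposal is correct and follows the same route as the paper: apply Theorem~\ref{BMR} triangle by triangle, observe that the boundary vertex sets $V_i\cup V_i'$ agree on shared edges so the local meshes glue, and sum using $\sum_i N_i\le 2N$. The paper's argument is simply the paragraph preceding the corollary, stated more tersely than yours.

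Your caveat about the ``no three Gabriel disks overlap'' hypothesis is well taken. The paper does not address it in the corollary itself; rather, in each concrete application (e.g., Section~\ref{Refine tri}) it is verified separately that the particular set $V$ constructed there satisfies this condition automatically (because $V$ always contains the cusp points of each triangle). So the corollary is really being used with this extra hypothesis implicitly carried along, exactly as you surmised.
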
 

It is fairly easy to see that we can always add a 
finite number of vertices and make each triangle 
Gabriel. Thus nonobtuse  refinement of 
a triangulation is always 
possible, but the difficulty is to bound $N$ in terms 
of $n$.
Since any PSLG with $n$ vertices can be triangulated 
using $O(n)$ triangles (and the same vertex set),
Theorem \ref{NonObtuse} is reduced to

\begin{thm} \label{ETS Gabriel}
Given any triangulation with $n$ elements we can add $O(n^{2.5})$
vertices  to the edges  so that every  triangle becomes Gabriel.
\end{thm}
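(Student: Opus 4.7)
My plan is to reduce the problem to understanding how ``forcing'' moves propagate through the triangulation. Call an edge $e$ of the (current) PSLG \emph{bad} if its diameter disk contains some vertex other than its two endpoints; then each bad edge demands at least one extra Steiner point. The catch is that inserting such a point on $e$ splits $e$ into two sub-edges, and the shorter of these (the one near the offending vertex $v$) may itself be bad relative to some still closer vertex $v'$, which forces another insertion on a neighbouring edge, and so on. I will call such a cascade a \emph{propagation path}, and bound the total Steiner count by bounding (i)~the number of propagation paths that ever start and (ii)~the length of each one.

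The first step is a thick-thin decomposition of the triangulation. A triangle is \emph{thick} if its shortest altitude is comparable to its longest edge and \emph{thin} otherwise. In the thick regions the geometry is locally Euclidean at the scale of the local feature size, and a bounded-depth dyadic subdivision of each edge down to that scale makes every resulting sub-edge Gabriel; this costs only $O(1)$ points per triangle, so the thick part contributes $O(n)$ in total. All the difficulty, and the extra factor of $\sqrt{n}$, lives in the thin part.

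In a thin region, a bad edge $e$ typically has a nearby vertex $v$ sitting on a roughly parallel edge of an adjacent long skinny triangle; the forced split of $e$ creates a short sub-edge whose diameter disk now pokes into the next near-parallel edge, so the propagation path marches across a strip of thin triangles. The main obstacle, and the reason the paper devotes sections to \emph{spirals} and \emph{return regions}, is that the strip can curl around a vertex so that a propagation path revisits its own neighbourhood and in principle wraps many times. I plan to prove: (a)~any single propagation path has length $O(\sqrt{n})$ before it terminates by exiting into a thick region or by a local-feature-size exhaustion argument, and (b)~at most $O(n^2)$ propagation paths are ever initiated, since they are indexed by pairs (original edge, nearby vertex) and there are only $O(n)$ of each. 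Multiplying gives the $O(n^{2.5})$ bound.

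The hard part will be (a), controlling a single propagation path inside a spiral. The strategy I would pursue is an invariant argument: along each full turn of the spiral, either the local feature size strictly decreases by a definite multiplicative factor (so one turn consumes one dyadic scale), or the path exits through a ``return region'' that can be charged to a distinct thick triangle or to a previously used dyadic scale. Since there are only $O(n)$ thick triangles and only $O(\sqrt{n})$ scales available in the thin strip before features coalesce (the square root coming from an area/packing estimate that trades width against length inside the strip), each propagation path terminates within $O(\sqrt{n})$ steps. Assembling these ingredients with the Bern--Mitchell--Ruppert corollary of the previous section then yields Theorem~\ref{ETS Gabriel} and hence Theorem~\ref{NonObtuse}.
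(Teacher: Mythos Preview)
Your proposal has the right instinct that propagation and spirals are the heart of the matter, but the accounting and the mechanism behind the $\sqrt{n}$ are both off, and the core quantitative estimate is missing.

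First, your factorization $O(n^2)$ paths $\times$ $O(\sqrt{n})$ steps each is not how the bound arises. In the paper's argument the number of propagation paths that are ever launched is $O(n)$ from the cusp points (three per original triangle) plus $O(\sqrt{k})$ points placed on the ends of each of the $O(n)$ return regions; each of these paths runs for $O(n)$ steps, not $O(\sqrt{n})$. Your claim~(a) that a single path terminates in $O(\sqrt{n})$ steps is simply false in this setting: a $P$-path can cross $\Theta(n)$ isosceles pieces before reaching a return region, and inside a spiral a single path may wind arbitrarily many times. The $\sqrt{n}$ does not come from path length.

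Second, and more fundamentally, the actual source of the $\sqrt{n}$ is a quantitative displacement estimate you never formulate. When you try to ``bend'' a path across an isosceles piece while keeping the Gabriel condition, the allowable displacement at each step is of order $|a-b|^2/w$ (quadratic in the segment length, not linear as for $\theta$-bending). Summing across a tube of $p$ pieces, Cauchy--Schwarz gives total allowable displacement $\gtrsim L^2/(pw)$; to achieve displacement $w$ (i.e.\ connect opposite corners) one therefore needs $L \geq 2w\sqrt{p}$, which forces splitting each return tube into $O(\sqrt{p}) = O(\sqrt{n})$ parallel narrow sub-tubes. That is where the square root lives. Your ``area/packing estimate that trades width against length'' and ``only $O(\sqrt{n})$ scales available'' do not capture this; the local feature size need not change multiplicatively per turn of a spiral, so your invariant argument for~(a) does not go through. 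Inside spirals the construction is still more delicate: one must merge the $O(\sqrt{k})$ paths dyadically over $\sim k^{1/3}$ windings, then build a geometrically spaced family of $O(\sqrt{k})$ closed Gabriel loops out to winding $\sim k$, all governed by the same quadratic displacement bound. None of this follows from a thick--thin decomposition or a scale-counting argument.
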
 

We  will prove this for general planar triangulations
 later in the paper. We start with the 
simpler case of triangulations of simple polygons
(Theorem \ref{Refine Triangulation}) in Section \ref{Refine tri}.
The key feature of this special case is that the 
elements  of such a triangulation form a tree under 
adjacency (sharing an edge); this fails for general 
triangulations, and dealing with this failure is 
the main goal of this paper.

In Theorem \ref{ETS Gabriel}  we only need to 
check that each triangle becomes Gabriel, not that  
the point set is Gabriel for the whole triangulation. 
The difference is that in the first case if we take 
a disk  $D$ with an edge $e$ as its diameter, we only have to 
make sure that $D$ does not contain any vertices belonging 
to the two triangles that have $e$ on their boundary. 
In the second case, we would have to check that $D$ does 
not contain any vertices at all. This 
apparently stronger condition  follows from the 
weaker  ``triangles-only'' condition, but we don't need 
this fact for the proof of Theorem \ref{NonObtuse}.
% We will 
%verify the implication in Section \ref{consequence} when we 
%discuss various corollaries of Theorem \ref{NonObtuse}.

%--------------------------------------------
\section{Sketch of  the proof of Theorem \ref{BMR} }
 \label{ProofBMR}

As noted above, Theorem \ref{BMR} is  due to
Bern, Mitchell and Ruppert in \cite{BMR95},
but the precise result is not given in that paper. 
For the convenience 
of the reader, we briefly sketch how our 
statement is deduced using the argument in 
\cite{BMR95}.  

The basic idea in \cite{BMR95} is to pack the 
interior of a polygon $P$ (which in our case 
is just a triangle $T$)   with disks until the
remaining region is a union of pieces that are
each bounded by three or four circular arcs,
or a segment lying on the polygon's  boundary. 
These are called  the remainder regions.

Each  remainder 
region $R$ is associated to a simple polygon 
$R^+$, called the augmented region of $R$  as 
follows. Each circular arc in the boundary 
of $R$ lies on some circle,and we add to $R$ the 
sector of the circle subtended by this arc. 
Doing this for each boundary arc of $R$ gives 
the polygon $R^+$. 
See Figure \ref{ProofBMR0}.

\begin{figure}[htb]
\centerline{
\includegraphics[height=2.0in]{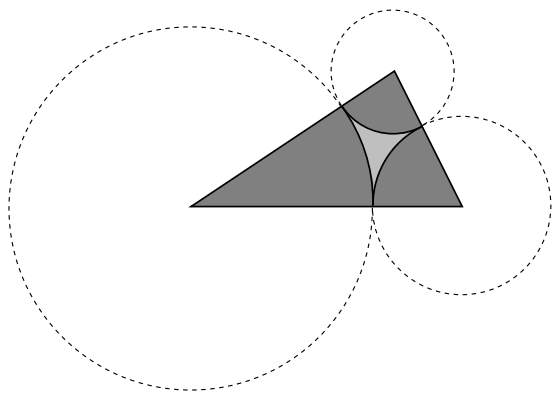}
 }
\caption{\label{ProofBMR0}
The light gray is a 3-sided remainder region  $R$ 
and the dark gray is the corresponding augmented
region $R^+$.
}
\end{figure}

The augmented regions 
decompose the original polygon into simple
polygons  and the authors of \cite{BMR95} show 
how each augmented region can be   meshed 
with right triangles.
Moreover, the mesh  of $R^+$ only has vertices 
at the vertices of $R^+$ (the centers of the 
circles), the endpoints of the boundary arcs
(the tangent points between disks)
 or on the straight line segments 
that lie on the boundary of $P=T$. 

We modify their construction first  placing the Gabriel 
disks along the edge of the triangle, as shown 
in Figure \ref{ProofBMR1}.  The disk packing 
construction of \cite{BMR95} will only be applied 
to the part of the triangle outside the Gabriel 
disks, hence none of the remainder regions 
that are formed will have straight line 
boundary segments on the boundary of $T$.

\begin{figure}[htb]
\centerline{
\includegraphics[height=2.0in]{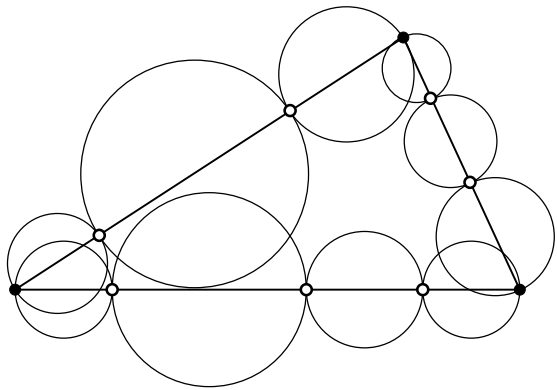}
 }
\caption{\label{ProofBMR1}
The white points are the set $V$ that makes the 
triangle $T$ Gabriel.  The Gabriel disks may 
intersect in pairs, but we assume that  no 
three of them intersect.
}
\end{figure}

Whenever two Gabriel disks overlap, we place a small
disk  $D$ near each of the intersection points of their 
boundaries. The disk $D$   is tangent
to both overlapping disks and its interior is 
disjoint from all the Gabriel disks. 
See Figures \ref{ProofBMR2} and \ref{ProofBMR3} 
for two situations where this can occur: the 
boundaries of the Gabriel disks (which we call 
the Gabriel circles) either  have two intersections 
inside $T$ or they have just one intersection inside 
$T$. 

Figure \ref{ProofBMR2} shows  what happens 
when the boundaries intersect at two points. 
We place two disks near the intersections points 
(the dashed disks in the figure), and we form 
a quadrilateral by connecting the centers of the 
four circles. This quadrilateral is then meshed 
with 16 right triangles as shown in the figure.
The overall structure is shown on the left, and
an enlargement is shown on the right. The black 
point on the right where six triangles meet is
the common intersection point of three lines: 
the line  $L_0$ through the two intersection points 
of the Gabriel circles, and the two tangent 
lines $l_1, l_2$  between the dashed disk and the the 
Gabriel disks. It is proven in \cite{BMR95} that
these three lines meet at a single point, 
as shown.  The center of the dashed circle in 
Figure \ref{ProofBMR2} is not necessarily on $L_0$ 
(although it appears this way in the figure).
Figure \ref{ProofBMR3} shows the case when two 
Gabriel circles have one intersection inside $T$ 
(and the other is at vertex of $T$).
The proof that all the triangles
are right is fairly evident (see \cite{BMR95}). 

\begin{figure}[htb]
\centerline{
\includegraphics[height=2.3in]{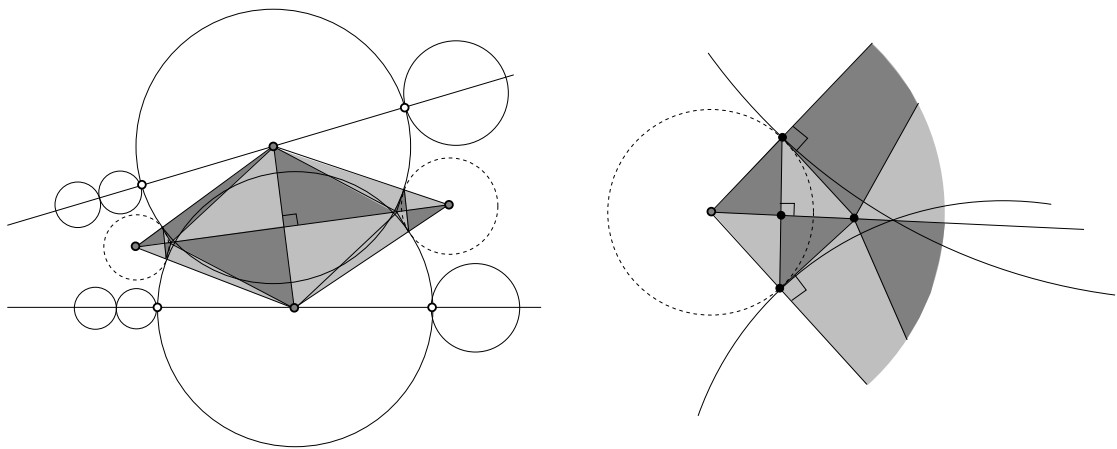}
 }
\caption{\label{ProofBMR2}
Gabriel disks whose boundaries intersect twice
inside $T$. We add tangent disks as shown (dashed)
and triangulate as shown. The picture on the right 
shows more detail near one of the added disks. 
The point where six triangles meet is the 
intersection of the line through the intersection 
points of the Gabriel circles and the two tangent 
lines between the added disk and the Gabriel disks.
}
\end{figure}

\begin{figure}[htb]
\centerline{
\includegraphics[height=3.0in]{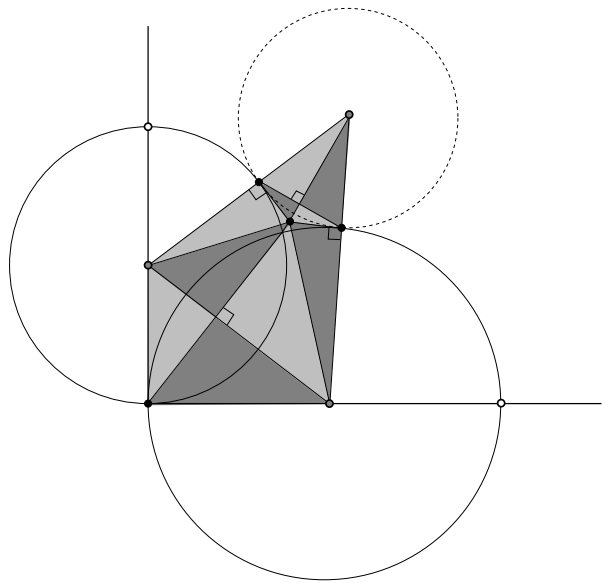}
 }
\caption{\label{ProofBMR3}
Figure \ref{ProofBMR3} is analogous to 
Figure \ref{ProofBMR2}, but shows that case 
when two Gabriel circles have a single 
intersection inside $T$ (the other is 
at a vertex of $T$). 
In this case, 
the mesh is a sub-picture of the previous 
case and only uses 10 right triangles. 
}
\end{figure}

From this point the proof follows \cite{BMR95}. 
Lemma 1 of the paper shows that  we can 
add disjoint disks until all the remainder 
regions have three or four sides and that
the number of disks added is comparable to 
the number of Gabriel disks we started with.
See Figure \ref{ProofBMR4}. 

\begin{figure}[htb]
\centerline{
\includegraphics[height=2.5in]{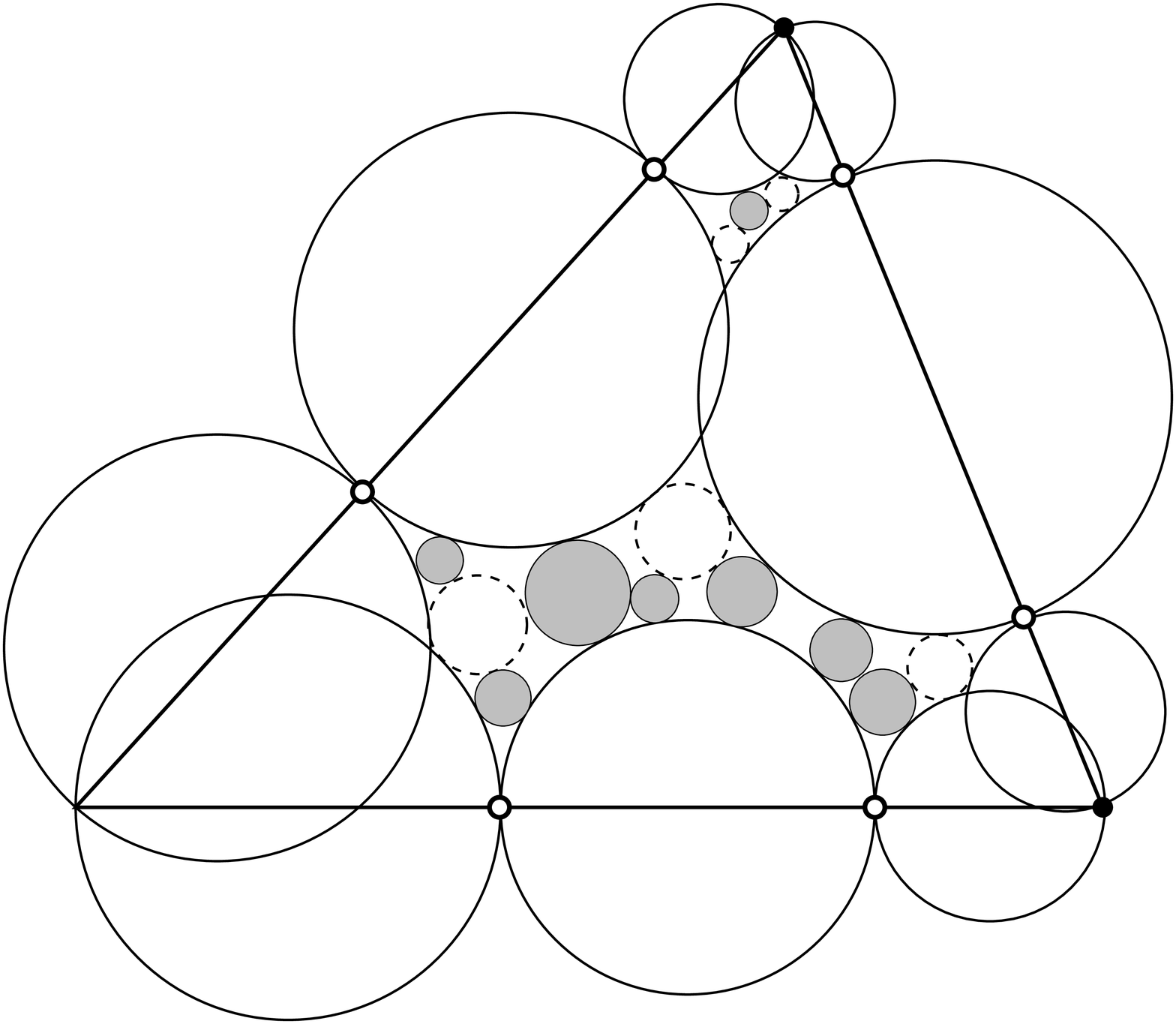}
 }
\caption{\label{ProofBMR4}
After adding the disks  (dashed) tangent to the 
intersecting Gabriel disks, we pack the remaining
region with disjoint disks (gray) until only 
circular arc regions with three or four sides 
remain. This step and the rest of the proof 
follow the proof in \cite{BMR95} exactly. 
}
\end{figure}

Because of the Gabriel disks, none of the 
remainder regions have  straight line 
boundary arcs, so all the augmented regions 
are meshed by right triangles whose vertices 
are either interior to $T$ or lie in the set 
$V$  or in the set $V'$ (the centers of the 
Gabriel disks) and every such point is used. 
This gives Theorem \ref{BMR}.

%-------------------------------------------------------------
\section{ Proof of Theorem \ref{Refine Triangulation} } \label{Refine tri}

A PSLG $\gamma $ is a {\defit simple polygon}  if 
it is  a simple  closed Jordan curve.
 Suppose  $\gamma$ is a simply polygon
and $\{ T_k\}_1^N$ is a 
triangulation of $\gamma$  with no Steiner points. 
The union of the edges and vertices of the triangulation 
is the PSLG $\Gamma$ in Theorem \ref{Refine Triangulation}.
 For each triangle 
$T_k$, let $C_k$ be the inscribed circle and let $T_k' \subset 
T_k$ be the triangle with vertices at the three points where 
$C_k$ is tangent to $T_k$. Note that the arcs between 
points of $C_k$ all have angle measure $< \pi$, so 
 $T_k'$ must be acute.
  These  vertices of $T_k'$  will sometimes  be called 
the {\defit cusp points} of $T_k$.
See Figures \ref{Inscribed} and \ref{Inscribed0}. 

\begin{figure}[htb]
\centerline{
\includegraphics[height=1.25in]{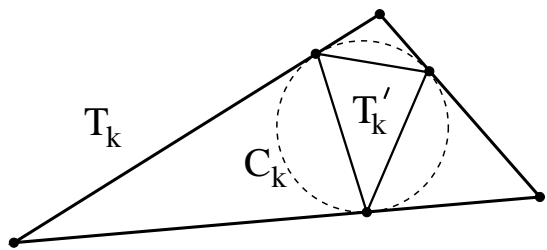}
 }
\caption{\label{Inscribed}
The definition of $C_k$ and $T_k'$. 
%On the right is the proof that $T_k'$ is acute.
}
\end{figure}

\begin{figure}[htb]
\centerline{
\includegraphics[height=2in]{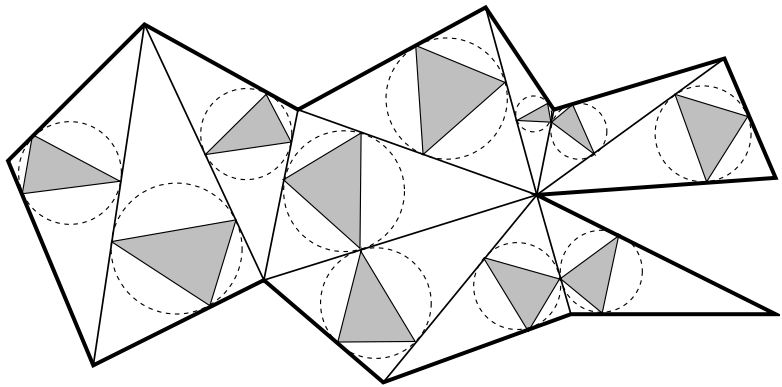}
 }
\caption{\label{Inscribed0}
A triangulation $\{T_k\}$ of a simple polygon,  the inscribed circles 
(dashed) and  triangles $\{ T_k'\}$ (shaded).
}
\end{figure}

Also note that $T_k \setminus T_k'$ consists 
of three isosceles triangles, each with its base
as one side of $T_k'$ and its opposite vertex 
a vertex of $T_k$. Foliate each isosceles triangle
with segments that are parallel to its base
(foliate simply means to write a region as a disjoint union 
of curves).
We call these {\defit P-segments} (since they are 
``parallel''  to the base and they will   also  allow us 
to ``propagate'' certain points  through the triangulation). 

Given a vertex  $v$ of some $T_k'$, this point is either
on $\gamma$ (the boundary of the triangulation) or it is 
on the side of some other triangulation element 
$T_j$, $j \ne k$.  In the first case do nothing. 
In the second case, either $v$ is also a vertex 
of $T_j'$ or it is not. In the first case, again 
do nothing. In the second case, build a polygonal 
path whose first edge is the $P$-segment in $T_j$ 
that has $v$ as one endpoint. The other endpoint 
is a point $w$ on a different side of $T_j$. 
If $w$ is on  $\gamma$, or is a vertex of some $T_i'$, 
$i \ne j$, then end the polygonal arc at $w$. 
Otherwise, $w$ is on the side of some third 
triangle $T_i$ and we can add the $P$-segment 
in $T_i$ that has one endpoint at $w$.
See Figures \ref{Inscribed17} and \ref{Inscribed3}.

\begin{figure}[htb]
\centerline{
\includegraphics[height=2in]{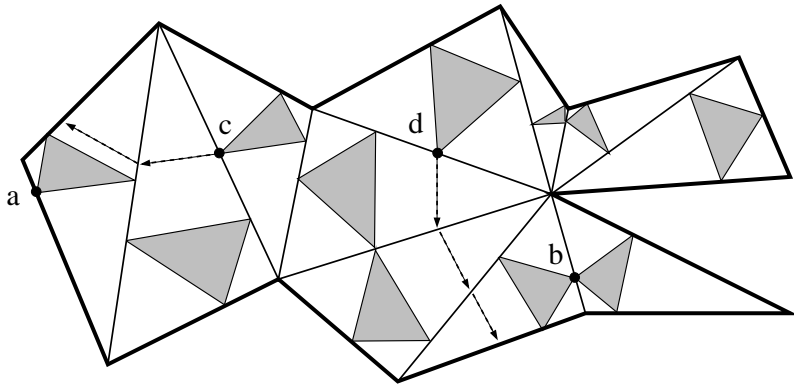}
 }
\caption{\label{Inscribed17}
A cusp point (vertex of a shaded triangle)
 can either be 
on $\gamma$ (the boundary of the triangulation), 
can be a vertex of another shaded triangle,  or 
neither. In the last case we can ``propagate'' the 
vertex using a $P$-segment and continue the process
until one the first two conditions holds. 
Vertices $a$ and $b$ represent the first two 
possibilities and vertices $c$ and $d$ both represent 
the third case; the dashed lines show how these points
propagate until they hit the boundary.
}
\end{figure}

\begin{figure}[htb]
\centerline{
\includegraphics[height=2in]{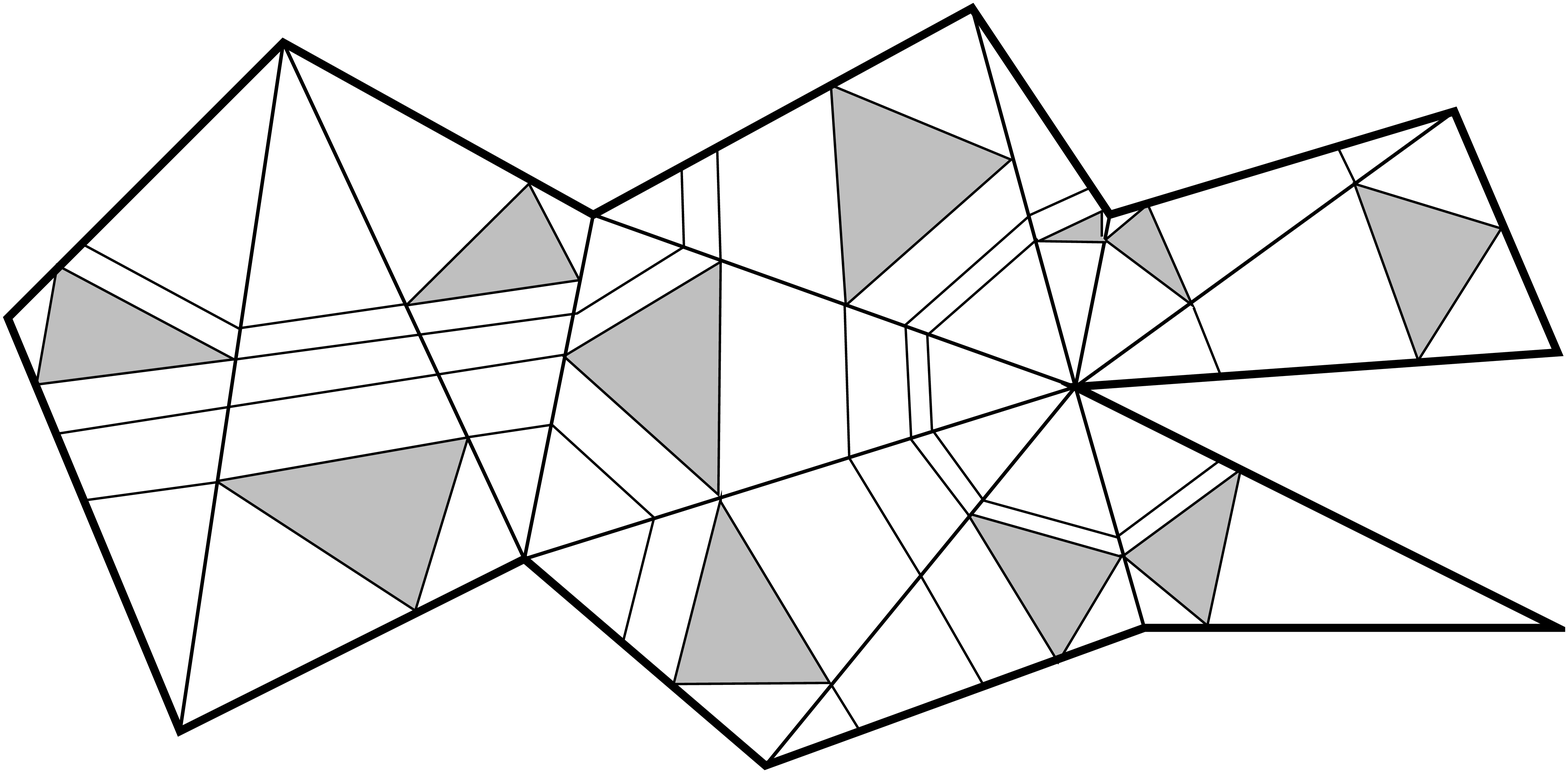}
 }
\caption{\label{Inscribed3}
Here is the same triangulation as in Figure 
\ref{Inscribed17} with all cusp points propagated
until they terminate.
This refines the triangulation
into acute  triangles (shaded) and isosceles
triangles  and trapezoids (white). We claim 
the new vertices make all the original triangles 
Gabriel.
}
\end{figure}

We continue in this way, adding $P$-segments to 
our polygonal path until we either reach a point 
on the boundary of the triangulation or hit a 
point that is the vertex of some triangle $T_m'$.
We call the path formed by adjoining $P$-segments
a {\defit $P$-path}. 
Since our triangles come from a triangulation 
of a simple polygon, they form a tree under 
edge-adjacency and so the  $P$-paths 
starting at the three vertices of 
$T_k'$ must  cross  distinct triangles and hence 
can  use at most $n-1$  segments altogether.
Thus every $P$-path must terminate and 
 all the $P$-paths formed by starting 
at all vertices of all the $\{T_k'\}$  
can create at most $n(n-1)$ new vertices altogether.

\begin{lemma} 
The set $U$ of vertices created
by these $P$-paths crossing edges of $\{T_k\}$
makes every  triangle Gabriel. 
\end{lemma}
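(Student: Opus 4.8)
The plan is to show that for each triangle $T_k$ and each edge $e$ of $T_k$, every point $p$ of $V_0 \cup U$ lying on (the closure of) the other two edges of $T_k$ stays outside the open disk $D_e$ having $e$ as diameter. Since the vertices of the refined triangulation on $\partial T_k$ are exactly the three vertices of $T_k$, the three cusp points of $T_k$, and the points where $P$-paths cross the edges of $T_k$, it suffices to check these three types of points. First I would fix attention on a single edge $e = xy$ of $T_k$, with $D_e$ the open disk of diameter $e$; a point $p$ lies in $\overline{D_e}$ iff the angle $\angle xpy \geq 90^\circ$, so I want to produce, for each candidate $p$ on the other two edges, a right or obtuse-free witness, i.e. show $\angle xpy < 90^\circ$.

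The key geometric fact to exploit is the inscribed circle $C_k$ and the acute triangle $T_k'$ with the three cusp points as vertices. A cusp point $c$ on edge $e$ is the tangency point of $C_k$ with $e$; the segment from $c$ to the incenter is perpendicular to $e$ and has length equal to the inradius. The first step is the base case: I claim the two cusp points on the \emph{other} two edges, together with the two endpoints of $e$, already lie outside $D_e$ — this is essentially the statement that $T_k'$ is acute together with the fact that the foot of $C_k$ on $e$ separates things correctly; one computes $\angle xpy$ for $p$ a cusp point using that $p$ lies on $C_k$ and $x,y$ are vertices of $T_k$, and the tangency geometry forces this angle to be acute. Next, the inductive step handles $P$-path crossing points: a $P$-segment inside $T_k$ is parallel to one side of $T_k'$ (a base of one of the three isosceles triangles $T_k \setminus T_k'$), lying between that side of $T_k'$ and the opposite vertex of $T_k$. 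A point $p$ where a $P$-path crosses edge $e$ of $T_k$ is an endpoint of such a $P$-segment; I would show that as $p$ moves along $e$ away from the cusp point toward a vertex of $T_k$, the relevant angle $\angle xpy$ only \emph{decreases} (stays acute), because $p$ lies on the far side of the chord/cusp-triangle edge from $D_e$ — the $P$-segments foliate exactly the region of $T_k$ outside $T_k'$, and that region is disjoint from every $D_e$ for $e$ an edge of $T_k$. So the real content is the clean separation statement: $\bigcup_{e} D_e \cap T_k \subseteq \mathrm{int}\, T_k'$, or at least that the part of each $D_e$ inside $T_k$ is covered by $T_k'$.

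I expect the main obstacle to be precisely verifying that claim — that the three diametral disks $D_e$, intersected with $T_k$, are contained in (the closed) $T_k'$, equivalently that no point of $T_k$ lying outside $T_k'$ is inside any $D_e$. This is where the tangency geometry of the inscribed circle has to be used carefully: one wants to compare, for a point $p$ on edge $e'$ (another edge) beyond the cusp point of $e'$, the circle $\partial D_e$ with the arc of $C_k$ and the side of $T_k'$. A convenient route is to observe that the chord of $T_k'$ on edge $e'$ subtends an angle at $x$ and $y$ (endpoints of $e$) that is at most $90^\circ$ at the cusp point (base case) and strictly decreases as $p$ moves outward, which can be seen from monotonicity of the angle function along the line $e'$ once $p$ exits $\overline{D_e}$ at the cusp point. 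Finally, I would note that points on $\gamma$ contribute nothing (no disk $D_e$ for an edge of a boundary-incident triangle can reach across, by the same argument applied to that triangle) and assemble the per-edge, per-triangle checks into the conclusion that every triangle $T_k$ is Gabriel with respect to $V_0 \cup U$. The bookkeeping — that $U$ contains \emph{all} crossing points and that adjacent triangles' refinements agree, already noted in the text — then lets one invoke Theorem~\ref{BMR} in the next section.
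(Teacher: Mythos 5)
Your proposal rests on the claim that for each \emph{full} edge $e=xy$ of $T_k$, the diametral disk $D_e$ satisfies $D_e\cap T_k\subseteq \overline{T_k'}$, equivalently that every point of $V_0\cup U$ on the other two edges subtends an angle $\le 90^\circ$ over $e$. This is false, and the failure is exactly why the cusp points are added to $U$ in the first place. If $T_k$ is obtuse at the vertex opposite $e$, that vertex (which is in $V_0$) already lies inside $D_e$. The claim also fails for acute $T_k$: with vertices $(0,0),(1,0),(0.5,0.6)$ and $e$ the bottom edge, the incircle's tangency point on the side from $(1,0)$ to $(0.5,0.6)$ is roughly $(0.68,0.38)$, at distance about $0.42<0.5$ from the center $(0.5,0)$ of $D_e$, hence strictly inside it --- so your ``base case'' for the cusp points fails. (Even for equilateral $T_k$ the containment fails: $(0.5,0.45)$ lies inside $D_e$ but above the side $y=\sqrt{3}/4$ of $T_k'$.) The outward-monotonicity step you propose then has no valid starting point, and the plan of proving the full edges Gabriel cannot be repaired.

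The Gabriel condition has to be verified \emph{gap by gap} --- for each short sub-segment between consecutive points of $V_0\cup U$ on an edge of $T_k$ --- and the cusp points serve precisely to cut a possibly obtuse full-edge disk into small diametral disks that can be empty. The paper's proof is organized that way: since $U$ contains the cusp points, every gap $e$ lies on a single non-base side of one of the three isosceles triangles $T_k\setminus T_k'$; because $P$-segments cross that isosceles triangle parallel to its base, the crossing pattern on the two non-base sides is symmetric, so the mirror image $e'$ of $e$ across the axis of symmetry is again a gap. The diametral disks $D,D'$ are congruent reflections, and one checks directly that reflecting any $U$-point on the $e$-side (these reflections are exactly the $U$-points on the $e'$-side) moves it at least as far from the center of $D$, hence outside $D$; a separate comparison with the circle about the shared apex through the two cusp points shows $D$ also misses the third side of $T_k$. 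If you want to keep the angle-based viewpoint, apply it to each sub-segment $e$, and obtain the needed monotonicity from this reflection symmetry rather than from a containment in $T_k'$.
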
 

\begin{proof} 
Note that $U$ 
contains every vertex of every $T_k'$ and we also include 
in $U$ the all vertices of  all the $T_k$'s.
To prove the lemma, 
consider a segment $e$ that is a connected 
component of $T_k \setminus U$ (so $e$ is 
a diameter of one of our Gabriel disks).
Since $U$ contains the vertices of $T_k'$, 
$e$  lies on a  non-base side of  one of
 the three  isosceles triangles inside $T_k$.
If we reflect $e$ over the  symmetry axis
of this isosceles triangle we get an edge $e'$ 
on the other non-base side. Moreover, $e'$ 
is also one of the edges created by adding $U$ 
to the triangulation, and the disks $D$, $D'$ 
 with diameters 
$e$ and $e'$ respectively, 
 are reflections of each other through 
the symmetry line of the isosceles triangle.
Thus the boundaries of $D$ and $D'$ intersect, 
if at all, on the line of symmetry, so $D\cap e' 
\subset e'$ and $D'\cap e \subset e$. 
In particular, the open disk $D$ does not 
contain the endpoints of $e'$ and vice versa.
This implies $D$ can't contain any of the 
points of $U$ that lie on the same side of 
$T_k$ as $e'$. Clearly $D$ does not contain 
any points of $U$ on the side of $T_k$
containing $e'$. Finally, $D$ does not 
contain any points of $U $ that are on 
the third side of $T_k$ because $D$ is contained 
in the disk  $D''$,  centered at the vertex of 
$T_k$  where the sides containing $e$ 
and $e'$ intersect and passing through two
vertices of $T_k'$, and this disk does not hit
the third side of $T_k$. See Figure  \ref{Reflect}.   
\end{proof}

\begin{figure}[htb]
\centerline{
\includegraphics[height=1.75in]{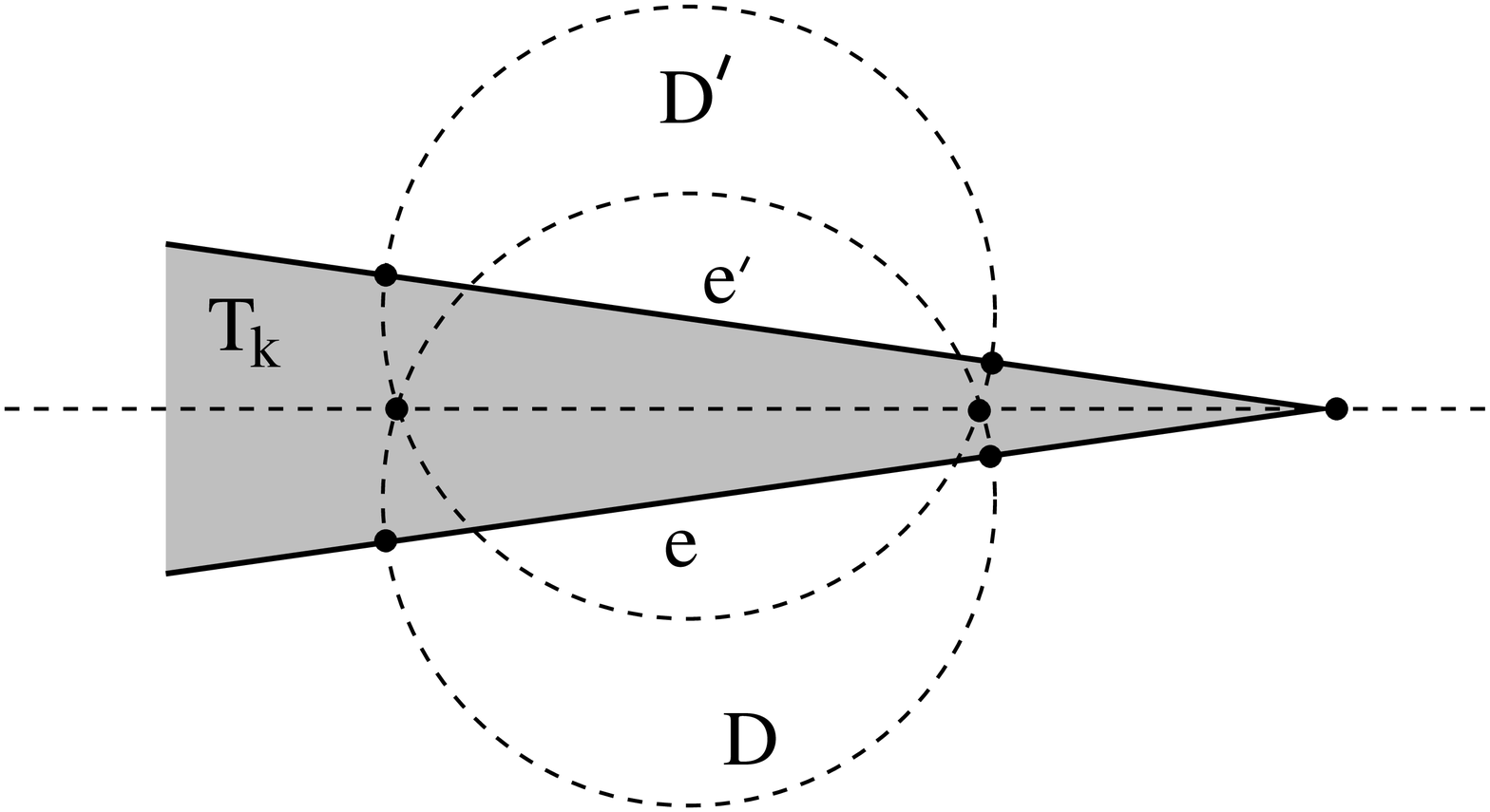}
$\hphantom{}$
\includegraphics[height=1.75in]{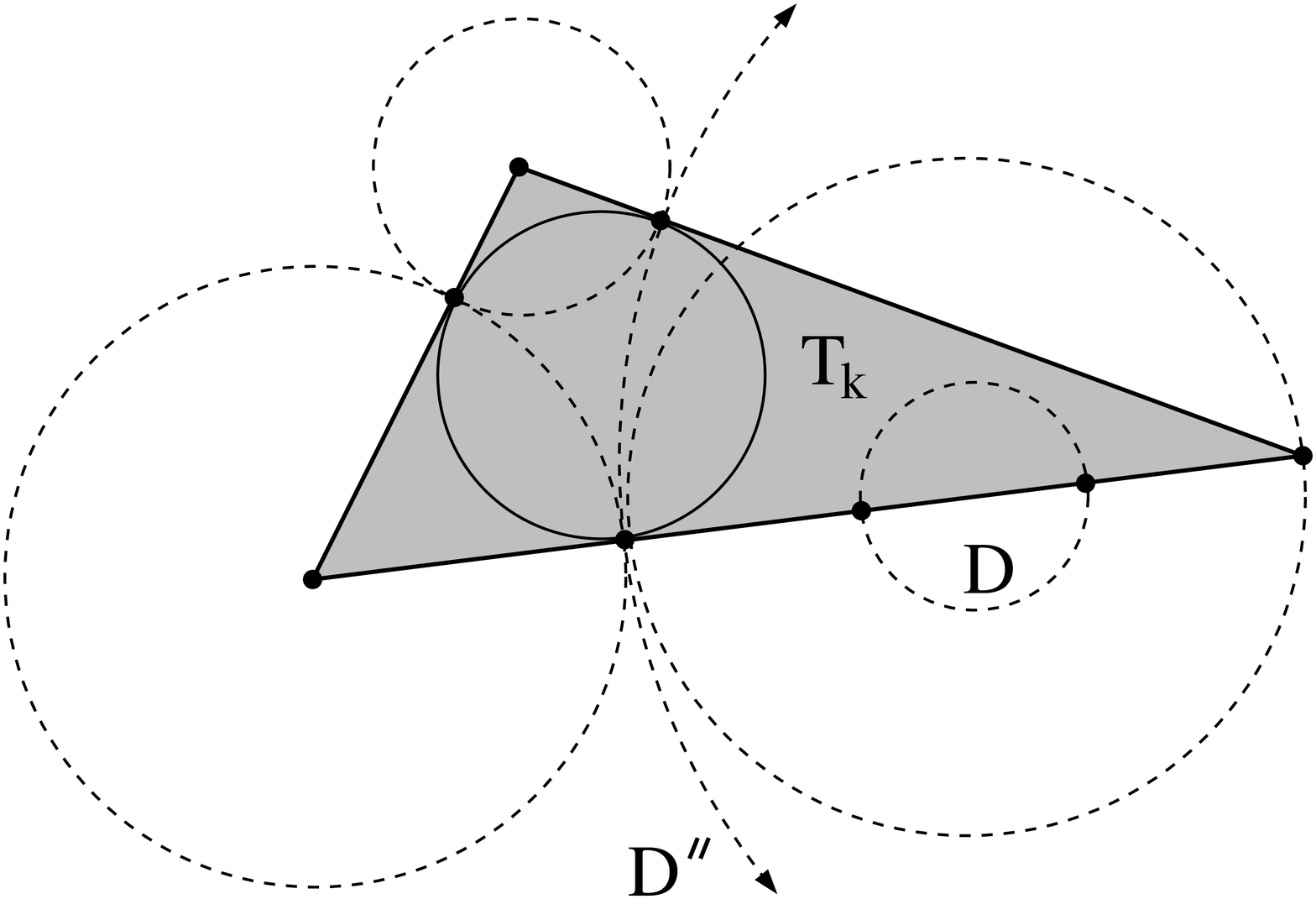}
 }
\caption{\label{Reflect}
The points where propagation paths cross 
triangle edges define Gabriel edges in each 
triangle. Note that at most two Gabriel disks 
can intersect when the cusp points are included
in  set of endpoints (the set $V$ in Theorem 
\ref{BMR}).
}
\end{figure}

This argument also shows that two Gabriel  disks 
can only intersect if they lie on different non-base 
sides of one of the three isosceles sub-triangles. 
Thus no three disks can intersect and the condition 
in Theorem \ref{BMR} is automatically satisfied whenever
the set $V$ contains the vertices of $T_k'$ for 
every $k$.

Thus for each triangle $T_k$, the points 
$U \cap T_k$  makes $T_k$  Gabriel. 
By Theorem \ref{BMR},
% the result of Bern, Mitchell and Ruppert, 
$T_K$ has a nonobtuse triangulation with only these 
boundary vertices and using $O(\#(U \cap T_k))$
triangles. These triangulations fit together to 
form a nonobtuse refinement of the original 
triangulation of size $O(\#(U)+n) = O(n^2)$, 
which  proves Theorem \ref{Refine Triangulation}. 

We can make a slight improvement to the algorithm 
above. As we propagated each vertex, 
we could have stopped
whenever the path encountered any isosceles triangle 
with angle $\geq 90^\circ$. In this case, the 
Gabriel condition will be satisfied no matter how
we add points to a non-base sides of the isosceles triangle, 
since the corresponding disks don't intersect the 
other non-base side of the triangle. See Figure 
\ref{Bigger90}. In some cases this might lead to 
a smaller nonobtuse triangulation.  This observation 
will also be used later in the proof of Theorem 
\ref{NonObtuse}, when it will be convenient to 
assume we are dealing only with isosceles triangles
that are acute.

\begin{figure}[htb]
\centerline{
\includegraphics[height=1.75in]{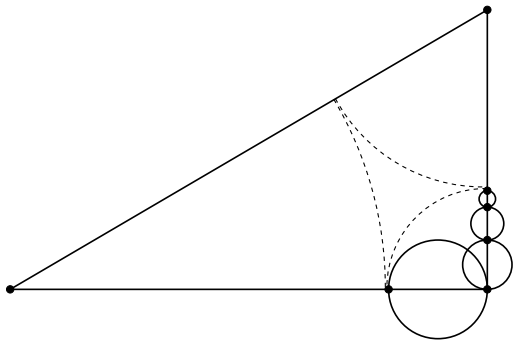}
 }
\caption{\label{Bigger90}
$P$-paths can be stopped when they hit an 
isosceles triangle with angle $\geq 90^\circ$
since the corresponding Gabriel disks can't 
hit the other sides of the triangle.
}
\end{figure}

%-------------------------------------------
\section{Dissections and  quadrilateral propagation} \label{standard} 

We now start to prepare for the proofs of Theorems 
\ref{NonObtuse} and \ref{Triangles}. The definitions 
and results in this and the next three sections 
will be used in both proofs.

Suppose $\Omega$ is a domain in the plane (an open connected set). 
We say $\Omega$ has a polygonal  {\defit dissection} if there 
are a finite number of simple polygons (called the {\defit pieces} of 
the dissection) whose interiors 
are  disjoint   and contained in $\Omega$ and 
so that the union of their closures covers the closure
of $\Omega$.  A {\defit mesh} is a dissection where any 
two of the closed polygonal pieces are either  (1) disjoint or 
(2) intersect in a point that is a vertex for both pieces or 
(3) intersect in a line segment that is an edge for both pieces.
See Figure 
\ref{Diss2Mesh2} for an example. A dissection is also 
called a {\defit non-conforming mesh}. A vertex of 
one dissection  piece that lies on the interior of
 an edge for another 
piece is called a {\defit non-conforming vertex}. If there 
are  no such vertices, then the dissection is actually a 
mesh. 

\begin{figure}[htb]
\centerline{
\includegraphics[height=1.7in]{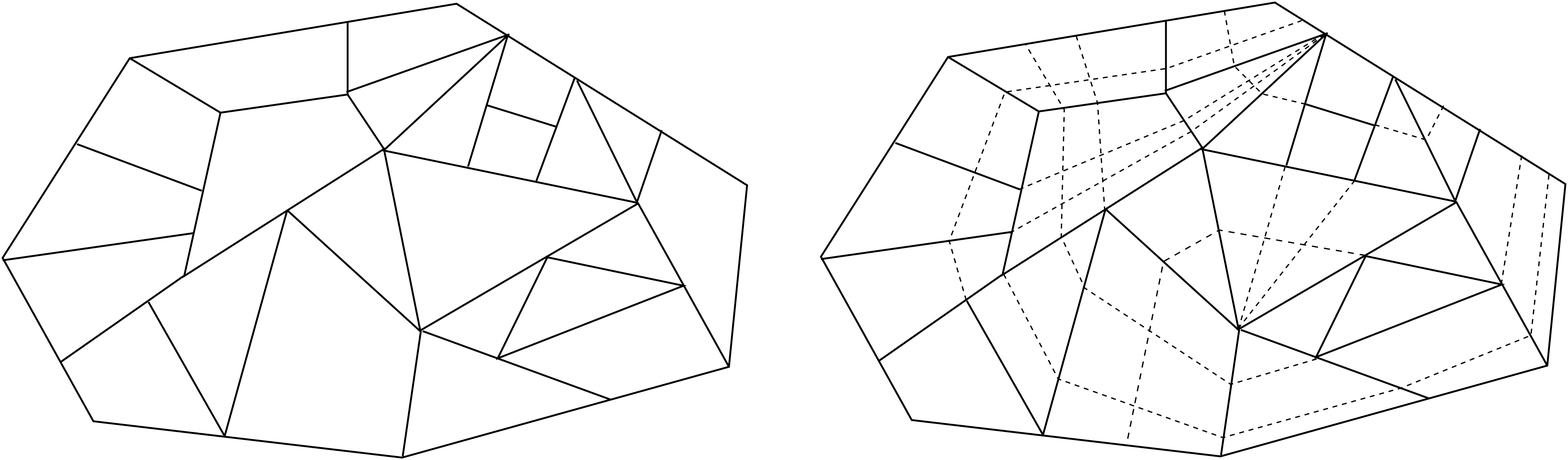}
 }
\caption{\label{Diss2Mesh2}
 On the left is a 
polygon dissected into quadrilaterals and triangles and 
on the right is the standard propagation of the non-conforming 
vertices until the propagation paths leave the polygon. 
If all the paths terminate, this gives a mesh, as described 
in the text.
}
\end{figure}

Given any convex quadrilateral with
vertices $a,b,c,d$  (say in counterclockwise  order), there is a unique affine
map from $[a,b]$ to $[c,d]$ that takes $a$ to $d$ and $b$ to
$c$. A propagation segment in the quadrilateral is a
segment connecting a point in $[a,b]$ to its affine image point
in $[c,d]$ (or connecting a point in $[b,c]$ to its 
affine  image
in $[d,a]$ under the analogous map for that pair of sides).
See Figure \ref{StandardProp}.  In a triangle $A,B,C$ with 
marked vertex, say $A$, propagation paths either connect 
points on $[A,B]$ to their linear images on $[A,C]$ or 
they connect any point on $[B, C]$ to the single point $A$
(this is what we would get if we think of the triangle 
as a degenerate quadrilateral $A,B,C,D$ with $A=D$, 
i.e., one side of length zero).

Given $\theta >0$, we say a quadrilateral is
{\defit  $\theta$-nice} if all the angles
are within $\theta$ of $90^\circ$. 
In this paper we will always assume $\theta < 90^\circ$
so the quadrilateral is convex.
We say a triangle with a marked vertex  
is $\theta$-nice if  the two unmarked vertices have 
angles that are 
within $ \theta$ of $90^\circ$.

\begin{figure}[htb]
\centerline{
\includegraphics[height=1.5in]{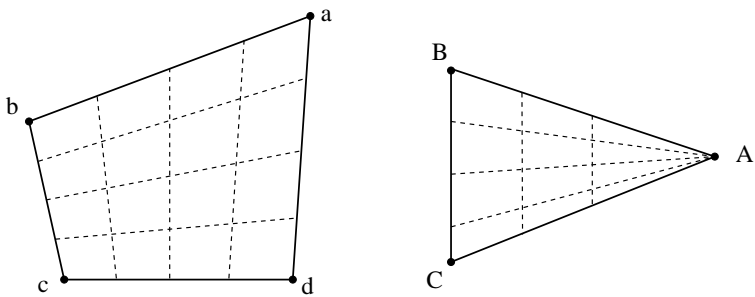}
 }
\caption{\label{StandardProp}
Standard propagation segments for a quadrilateral 
and a triangle with a marked vertex (A).
  In both cases, ``$\theta$-niceness''
is preserved by cutting a piece into sub-pieces by 
such segments.
}
\end{figure}

\begin{lemma} \label{split quad}
Suppose $\theta < 90^\circ$ and that 
$Q$ is a $\theta$-nice  quadrilateral.
If $Q$ is  sub-divided by a  propagation line, then each
of the resulting sub-quadrilaterals is also $\theta$-nice.
\end{lemma}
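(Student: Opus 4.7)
The plan is as follows. Write $p = (1-t)a + tb$ on $[a,b]$ and $q = (1-t)d + tc$ on $[d,c]$, where $t \in (0,1)$ is the common ratio determined by the affine map defining the propagation segment. The eight interior angles of the two sub-quadrilaterals $apqd$ and $pbcq$ consist of the four original angles of $Q$ at $a,b,c,d$, together with two new angles at $p$ (which sum to $\pi$ since $a,p,b$ are collinear) and two new angles at $q$ (which sum to $\pi$ since $d,q,c$ are collinear). Thus it suffices to show that the angle $\vec{pq}$ makes with the side $[a,b]$ lies in $[\pi/2 - \theta,\pi/2 + \theta]$, and that the angle $\vec{qp}$ makes with the side $[d,c]$ lies in the same interval; the four original angles are $\theta$-nice by hypothesis.

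The key algebraic observation is the identity
\[
\vec{pq} = q-p = (1-t)(d-a) + t(c-b) = (1-t)\vec{ad} + t\vec{bc},
\]
so $\vec{pq}$ is a convex combination of $\vec{ad}$ and $\vec{bc}$. Placing $[a,b]$ on the positive $x$-axis with the interior of $Q$ in the upper half-plane, the interior angles $\alpha,\beta$ at $a,b$ force $\vec{ad}$ to make angle $\alpha$ with the $+x$-axis and $\vec{bc}$ to make angle $\pi-\beta$. Since $|\alpha-\pi/2|\leq\theta$ and $|\beta-\pi/2|\leq\theta$, both vectors lie in the closed convex cone
\[
C_\theta = \{\, r(\cos\phi,\sin\phi) : r\geq 0,\ \phi \in [\pi/2-\theta,\pi/2+\theta] \,\}.
\]
Convex cones are closed under convex combinations, so $\vec{pq}\in C_\theta$, which is exactly the desired bound at $p$.

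For the new angles at $q$ the same argument is applied to the identity $\vec{qp} = (1-t)\vec{da} + t\vec{cb}$. Placing $[d,c]$ on the positive $x$-axis with $Q$'s interior above, the interior angles $\gamma,\delta$ at $c,d$ force $\vec{da}$ and $\vec{cb}$ to make angles $\delta$ and $\pi-\gamma$ with the $+x$-axis respectively, both again in $[\pi/2-\theta,\pi/2+\theta]$ by the $\theta$-niceness of $Q$. Hence $\vec{qp}\in C_\theta$, and the two new angles at $q$ are also within $\theta$ of $\pi/2$. The triangle case (one side of length zero) follows at once by taking the degenerate limit in the identity above, or by directly checking that propagation segments toward the marked vertex preserve the two unmarked angles.

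The main (and only mild) obstacle is spotting the identity $\vec{pq} = (1-t)\vec{ad} + t\vec{bc}$; once it is in hand the two verifications reduce to the trivial fact that a convex cone contains its convex combinations. It is worth noting that trying to control the angle at $q$ solely through the quadrilateral angle sum $\alpha + \delta + \phi_p + \phi_q = 2\pi$ would only yield a weaker $3\theta$ bound, so the symmetric convex-cone argument at $q$ is essential for preserving the sharp constant $\theta$.
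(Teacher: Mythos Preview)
Your proof is correct and takes a genuinely different route from the paper's. The paper argues indirectly: it lets $\theta(t)$ denote the angle the segment $I_t=[a_t,c_t]$ makes with $[a,b]$ and shows $\theta(t)$ is monotone on $[0,1]$, by observing that if $I_s\parallel I_t$ for some $s\ne t$ then (since the endpoints move linearly) all $I_r$ are parallel and $\theta$ is constant; monotonicity then traps $\theta(t)$ between its endpoint values $\theta(0)=\alpha$ and $\theta(1)=\pi-\beta$. You instead isolate the algebraic identity $\vec{pq}=(1-t)\vec{ad}+t\vec{bc}$ and use it directly: since $\theta<\pi/2$, the sector $C_\theta$ is a salient convex cone containing $\vec{ad}$ and $\vec{bc}$, hence their convex combination. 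Your argument is more elementary (no monotonicity or analyticity), gives the bound at $q$ by an explicit symmetric computation rather than leaving it implicit, and makes transparent why the hypothesis $\theta<90^\circ$ is needed (the cone must be convex). The paper's approach, on the other hand, yields the slightly stronger qualitative fact that the propagation angle varies monotonically across the family, which your convex-combination argument does not immediately give.
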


\begin{proof}
Set $a_t = (1-t)a+tb$ and $c_t = 
(1-t)d + t c$. Let $I_t=[a_t,c_t]$ be the
segment connecting these points and let  $\theta(t)$ be the angle
formed by the segments $[a,b]$ and $I_t$.
It suffices to show this function is monotone in
$t$.  If it were not monotone, then there would be
two distinct values of $s,t \in [0,1]$ where
$I_s$ and $I_t$ were parallel. Because both
endpoints move linearly in $t$,  this implies 
$I_r$  is parallel
to $I_s$ for all $s \leq r \leq t$. Because $\theta(t)$
is analytic in $t$, this means it is constant on
$[0,1]$. Thus $\theta$ is either strictly monotone
or is constant; in either case it is monotone,
as desired.
\end{proof}

%It is  convenient to  define analogous segments in
%a triangle. Here there are (at least) two obvious paths 
%to consider: a point on an edge of the triangle is connected 
%to the opposite vertex, or is connected to a point on another 
%side by a segment parallel to the third side. We shall 
%actually use both ideas together. In all our triangles there is a
%marked vertex. Points on the edge opposite this vertex
%are connected to it by a segment. Points on a side adjacent to 
%the marked vertex are propagated to a point on the other adjacent 
%side by a segment parallel to the third side (the one opposite
%the marked vertex). This agrees with the definitions for quadrilaterals
%if we think of the marked vertex as a fourth side of zero length.
Similarly (but more obviously),
 when a $\theta$-nice triangle is cut
 by such a propagation  segment (of either 
type) the resulting pieces are  $\theta$-nice quadrilaterals 
or $\theta$-nice  triangles.

\begin{lemma} \label{make mesh}
Suppose $\Omega$ has a dissection into $\theta$-nice 
pieces (triangles and quadrilaterals). Suppose that 
every non-conforming vertex  can be  propagated  
so that it reaches 
the boundary of $\Omega$ or hits another vertex after 
a finite number of steps. 
Then the  resulting  paths cut the $\theta$-nice
dissection pieces  into $\theta$-nice triangles and 
quadrilaterals that mesh $\Omega$.
\end{lemma}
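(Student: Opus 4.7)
The plan is to build the refinement by drawing the propagation segments one at a time, cutting one dissection piece per step, and to verify at each step two facts: that each new sub-piece is $\theta$-nice, and that no non-conforming vertex has been created on a shared edge. The finiteness assumption in the hypothesis ensures that only finitely many cuts occur, so the process terminates.

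For the $\theta$-niceness, suppose the current piece being cut is a $\theta$-nice quadrilateral $Q$ and the cut is a propagation segment of $Q$. Lemma \ref{split quad} gives immediately that both sub-quadrilaterals are $\theta$-nice. If the current piece is a $\theta$-nice triangle with a marked vertex, the same monotonicity argument (remarked on just after Lemma \ref{split quad}) handles both kinds of propagation cut: a base-parallel cut splits the triangle into a smaller $\theta$-nice triangle and a $\theta$-nice trapezoid, while a cut from the marked vertex to the opposite side splits it into two $\theta$-nice triangles. Iterating, every final piece is a $\theta$-nice triangle or quadrilateral.

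For the mesh property, I track what happens along each edge $e$ shared by two pieces. Every non-conforming vertex $v$ on $e$ is a vertex of one of the two pieces and lies in the interior of $e$ as seen from the other piece $P$. Propagating $v$ draws a segment into $P$ with $v$ as an endpoint, so after the cut $v$ is no longer non-conforming. The other endpoint of the new segment either lies on $\partial\Omega$, coincides with an existing vertex, or lies in the interior of another shared edge $e'$; in the last case the hypothesis forces the propagation to continue into the neighboring piece starting at exactly that point. Thus every crossing of a propagation path with a shared edge automatically becomes a vertex on both sides, the cuts match along every shared edge, and the collection of sub-pieces meshes $\Omega$.

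The main obstacle is handling pieces traversed by more than one propagation segment. When all the segments in a given quadrilateral are of the same type (all connecting one fixed pair of opposite sides), they foliate the piece and can be added in any order, each preserving $\theta$-niceness by Lemma \ref{split quad}; the triangle case is analogous. When segments of different types appear in the same quadrilateral, one appeals to the hypothesis that paths stop at the first vertex they meet, including vertices created by earlier cuts. Each subsequent segment then terminates at the first existing edge or previously drawn cut, so at the moment it is drawn it is still a propagation segment of a $\theta$-nice sub-piece, and Lemma \ref{split quad} continues to apply step by step.
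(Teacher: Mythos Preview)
Your argument is considerably more detailed than the paper's, which simply declares the lemma ``evident'' in one sentence (noting that once all non-conforming vertices have been propagated, none remain in the interior of any edge) and moves on.  Your first three paragraphs are correct and already cover everything the paper's applications actually require: in the isosceles dissections used later, all propagation is by $P$-segments, so two segments in the same piece are parallel and never cross.

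The only place that needs tightening is your final paragraph on segments of different types meeting inside one quadrilateral.  The hypothesis concerns propagation in the \emph{original} dissection, so ``paths stop at vertices created by earlier cuts'' is not literally what is assumed.  What makes your inductive scheme valid is a fact you use but do not justify: if $Q=abcd$ is cut by a propagation segment of one type, then any propagation segment of the other type restricts to a propagation segment of each resulting sub-quadrilateral.  This follows from the bilinear parametrization
\[
\varphi(u,v)=(1-u)(1-v)\,a+u(1-v)\,b+uv\,c+(1-u)v\,d,
\]
under which the two propagation families are precisely the coordinate lines $\{u=\text{const}\}$ and $\{v=\text{const}\}$; a sub-quadrilateral such as $a,p,q,d$ inherits the same parametrization on a sub-rectangle of $[0,1]^2$, so coordinate lines restrict to coordinate lines.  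With this observation Lemma~\ref{split quad} applies at every step and your induction goes through.  Since the paper never isolates or proves this compatibility either (and never needs it), your write-up is in fact more complete than the original.
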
 

The proof is evident since when we are
finished, there are no vertices that are in the interior of 
any edge of any  piece.
See Figure \ref{Diss2Mesh2}. 
 What is not so clear is whether, in general,  
the propagation paths have to end; in the proof 
of Theorem \ref{Refine Triangulation} every propagation 
path did end within a fixed number of steps, but 
in general, the paths may never terminate (see Figure \ref{Irrational})
or may only terminate only after a huge number of steps. 
Later in this paper we discuss two ways of ``bending'' the standard
propagation paths so that they terminate within a 
certain  number of steps, and so that the pieces formed satisfy 
certain  geometric conditions.

\begin{figure}[htb]
\centerline{
\includegraphics[height=2.0in]{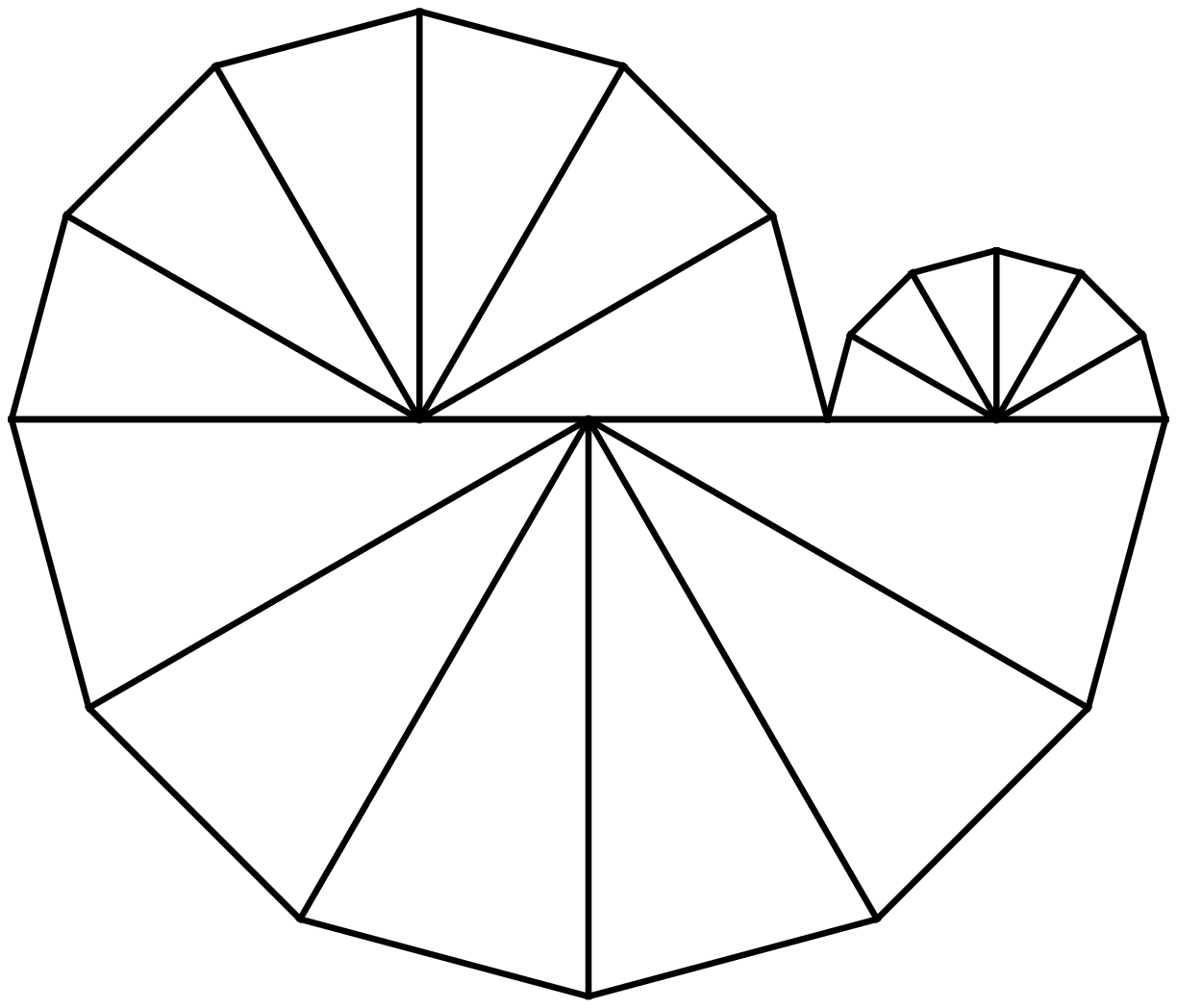}
$\hphantom{xxx}$
\includegraphics[height=2.0in]{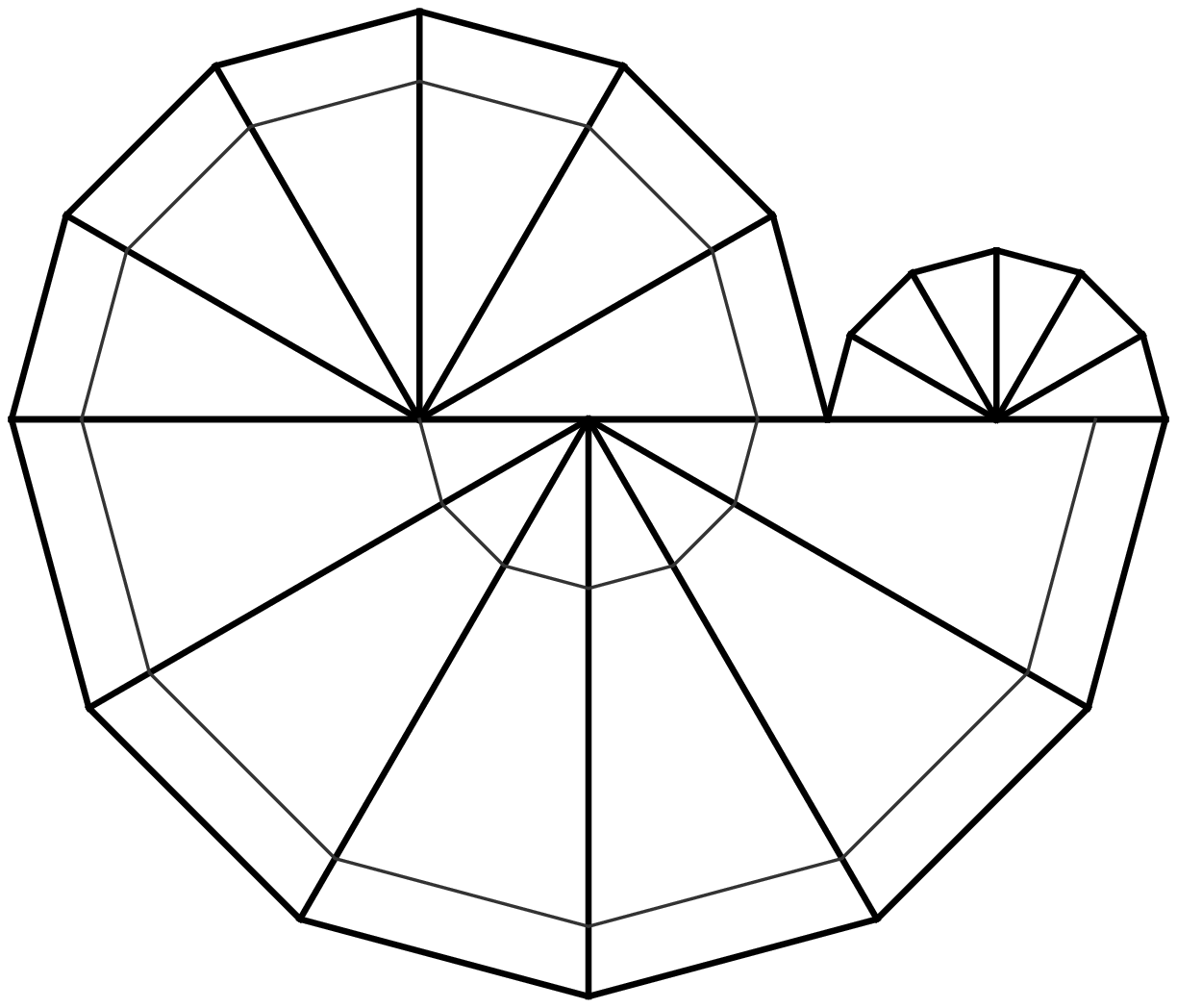}
 }
\vskip.1in
\centerline{
\includegraphics[height=2.0in]{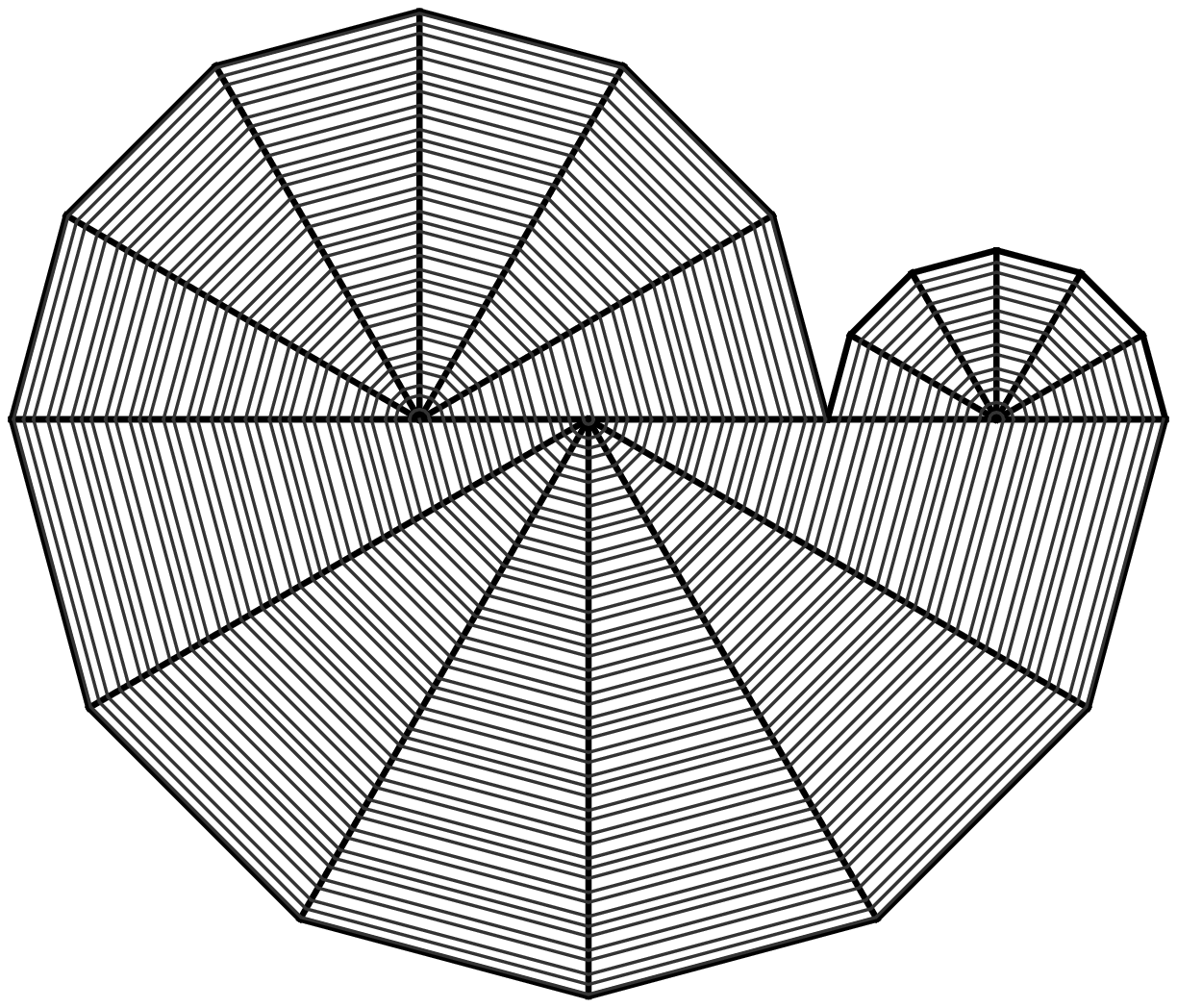}
$\hphantom{xxx}$
\includegraphics[height=2.0in]{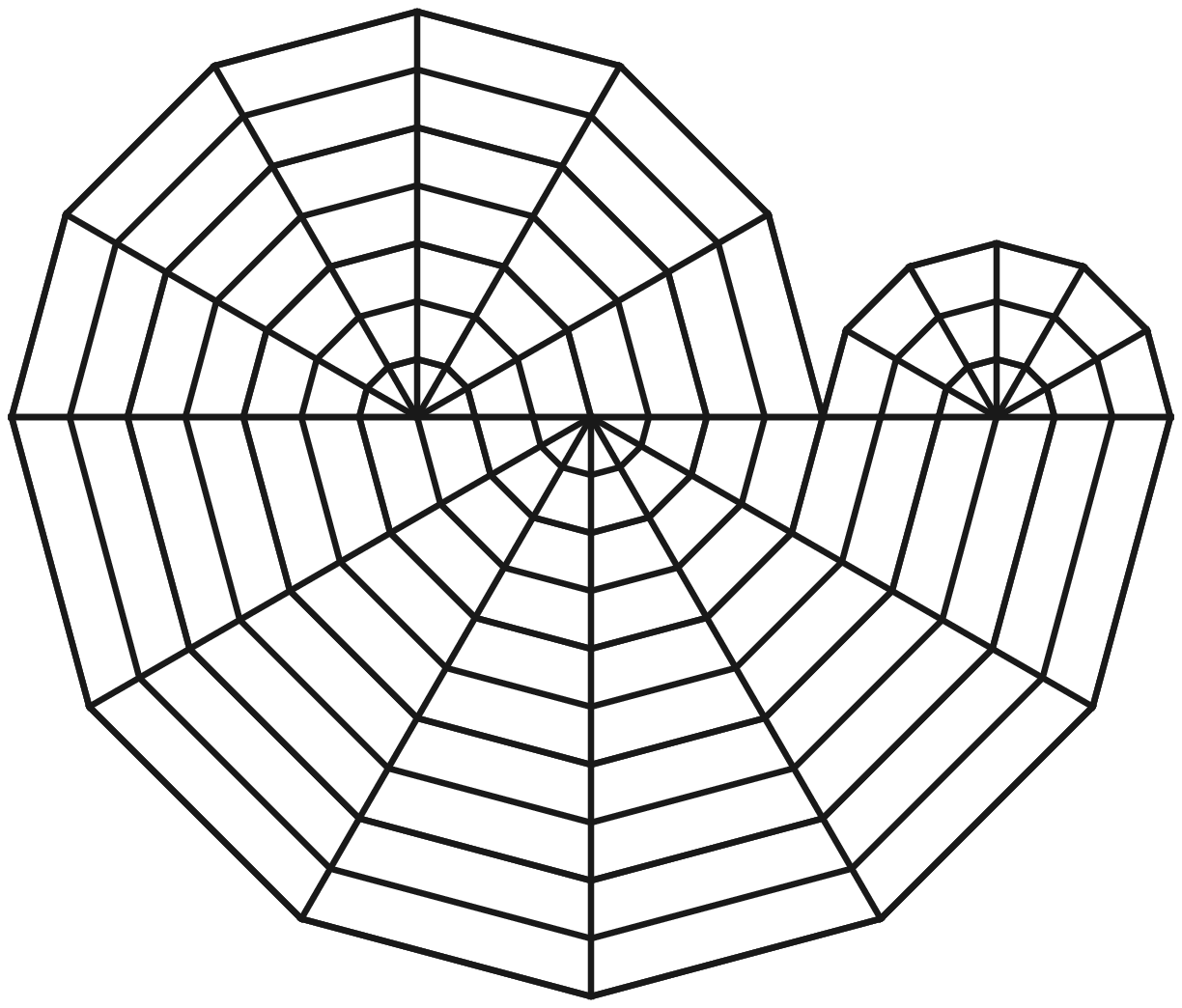}
 }
\caption{\label{Irrational}
The upper lefts shows 
an  isosceles dissection of a region. The horizontal 
segment is $S=[0,1]$ and the six vertices shown on this 
segment are (left to right) $0, \frac \alpha 2, \frac 12, 
\alpha, \frac{1+\alpha}2, 1$.
A path started at a point $x \in S$ will visit $x-\alpha \mod 1$ 
after propagating once through the upper half-plane and 
once through the lower half-plane.  The  upper right 
and lower left pictures show 
a path after 2 visits to the lower half-plane  
and 50 visits for $\alpha = 1/\sqrt{2}$.
If $\alpha$ is irrational, then the  propagation path 
becomes dense in the dissected region.
When $\alpha$ is rational, the propagation paths
either connect non-conforming vertices or are
loops;  the connecting paths  give  a mesh but there is no uniform
 bound on the number of elements. 
The connecting paths for  $\alpha= .7$ are shown at lower right;
the resulting tubes are filled with $P$-loops.
}
\end{figure}

%-----------------------------------------------------------
\section{Isosceles dissections} \label{iscoceles dissections}

Next we  discuss a  special  type of polygonal dissection.
An {\defit isosceles triangle} is a triangle $T$ with a marked 
vertex $v$  so that the  two sides  adjacent to $v$  have 
equal length.
An equilateral triangle can be considered as isosceles in 
three ways, but we assume that if such triangles occur,  a 
vertex is specified.

 The side opposite $v$ is called the {\defit base} of 
$T$ and the other two sides are the {\defit non-base} sides 
of $T$.  The {\defit angle} of an isosceles triangle 
will always refer to the interior angle at the vertex 
opposite the base edge. 
 A {\defit $P$-segment} is a segment in $T$ with 
endpoints on the non-base sides that is parallel to the base.
This is a special case of the propagation segments 
for marked triangles in the  previous section.
We require the interior of the segment to 
be in the interior of $T$, so the base itself is not 
a $P$-segment.
We say a triangle is {\defit $\theta$-nice} if all its angles are 
bounded above by $90^\circ + \theta$.

An {\defit isosceles trapezoid}  is a quadrilateral that has 
a line of symmetry that bisects opposite sides. This is 
equivalent to saying that  there is at least one pair
of parallel sides (called the base sides) that have the 
same perpendicular bisector and the other pair 
of sides (the non-base sides) 
have the same length as each other.
We allow rectangles, but in this case we specify a pair of
opposite sides as the base sides.
The {\defit angle} of an isosceles trapezoid is the angle made
by the lines that contain the two non-base sides; we 
take this to be zero if these sides are parallel (when 
the trapezoid is a rectangle). The {\defit vertex} of the 
trapezoid is the point  where these same lines intersect 
(in the case when they are not parallel; otherwise we say 
the vertex is at $\infty$). See Figure \ref{DefnTrap}.
We say a quadrilateral  is {\defit $\theta$-nice}
 if all its  interior angles are between 
 $90^\circ - \theta$ and $90^\circ + \theta$ (inclusive).
This is the same as saying the angle of the trapezoid
is $\leq \theta$.

\begin{figure}[htbp]
\centerline{
 \includegraphics[height=1.5in]{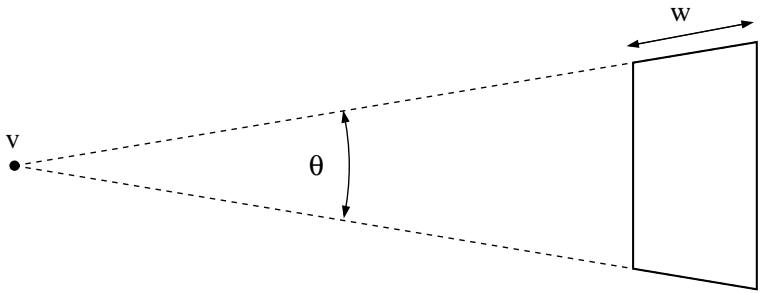}
 }
\caption{ \label{DefnTrap}
An isosceles trapezoid. The base sides are vertical 
in this picture. The vertex is the point $v$ and 
the width $w$ is the length of the non-base sides.
The angle of the trapezoid is $\theta$.
}
\end{figure}

As with isosceles triangles, we can define $P$-segments in 
an isosceles trapezoid   as segments  in the trapezoid 
that are  parallel to the base sides (again, these correspond
to propagation segments for quadrilaterals).
A {\defit $P$-path} is a simple polygonal arc formed by adjoining 
$P$-segments end-to-end.

An {\defit isosceles piece} is either an isosceles triangle 
or an isosceles trapezoid. We will use this term when 
it is unimportant which type of shape it is. When referring 
to a base side of an isosceles piece we mean either
base side for a trapezoid and the base side or
the opposite vertex for a triangle; the vertex is 
considered as a segment of length $0$, so when we refer 
to  the length of the smaller base side of an 
isosceles piece, we mean zero if the piece is a triangle.

 A {\defit $Q$-segment} in an isosceles 
triangle is a segment joining a point of the base to the
vertex opposite the base. The two non-base sides do 
count as $Q$-segments, and we shall also call these 
the {\defit $Q$-sides} of the isosceles triangle.
A $Q$-segment for an isosceles trapezoid is a
propagation segment that connects the base sides 
of the trapezoids. As with triangles, we count 
the non-base sides as $Q$-segments and call
these the $Q$-sides of the trapezoid.
The {\defit width} of the piece is the length 
of a $Q$-side (both $Q$ sides have the same length).
It might be more natural to define the width as the distance
between the base sides, but the definition as given
will simplify matters when we later join 
isosceles pieces to form tubes.

Suppose that $\Omega$ is a domain in the plane (an open 
connected set).  As might be expected, 
an {\defit isosceles dissection} of $\Omega$
is a  finite collection  of disjoint, open  isosceles triangles and 
trapezoids contained in $\Omega$, so that the union of
their closures covers all of $\Omega$. 
However, we also require that when two pieces have 
sides with non-trivial intersection, these sides are 
both $Q$-sides. We do this so that in an isosceles 
dissection, a $P$-path can always be continued 
unless the path hits a vertex of the dissection, or 
hits the boundary of the dissected region.
A {\defit $\theta$-isosceles dissection} is an isosceles 
dissection where every piece is $\theta$-nice.

For example,  Figure \ref{Inscribed0}   in 
Section \ref{Refine tri}  shows a triangulated polygon.
  Let $\Omega$ be the 
part of interior  of the polygon 
with the  (closed) shaded triangles removed; 
the remaining white region 
it is a union of isosceles triangles  that only meet along 
non-base sides. Thus $\Omega$ has an isosceles dissection;  note
that it is not a mesh since the isosceles triangles do not 
always meet along full edges. When we remove the $P$-paths 
generated by propagating the vertices of the central triangles, 
we obtain a mesh into isosceles triangles and trapezoids 
(as required by Lemma \ref{make mesh}).
See Figure \ref{Inscribed3}.
We call this an {\defit isosceles mesh}.

Figure \ref{Irrational}, upper left,  shows an 
isosceles dissection of a region using 18 triangles.
In that example, if $\alpha $ is irrational, then 
the  $P$-paths  never hit the boundary
of the region and can continue forever without terminating.
For $\alpha$ rational the $P$-paths  starting at 
non-conforming vertices terminate at other non-conforming 
vertices and these paths create an isosceles mesh. However, 
the number of mesh elements depends on the choice 
of $\alpha$ and may be arbitrarily large.

In Figure \ref{PropPath4}, we show an isosceles 
dissection using only trapezoids, and an isosceles
 mesh generated
by propagating non-conforming vertices along   $P$-paths.

\begin{figure}[htbp]
\centerline{
 \includegraphics[height=2.0in]{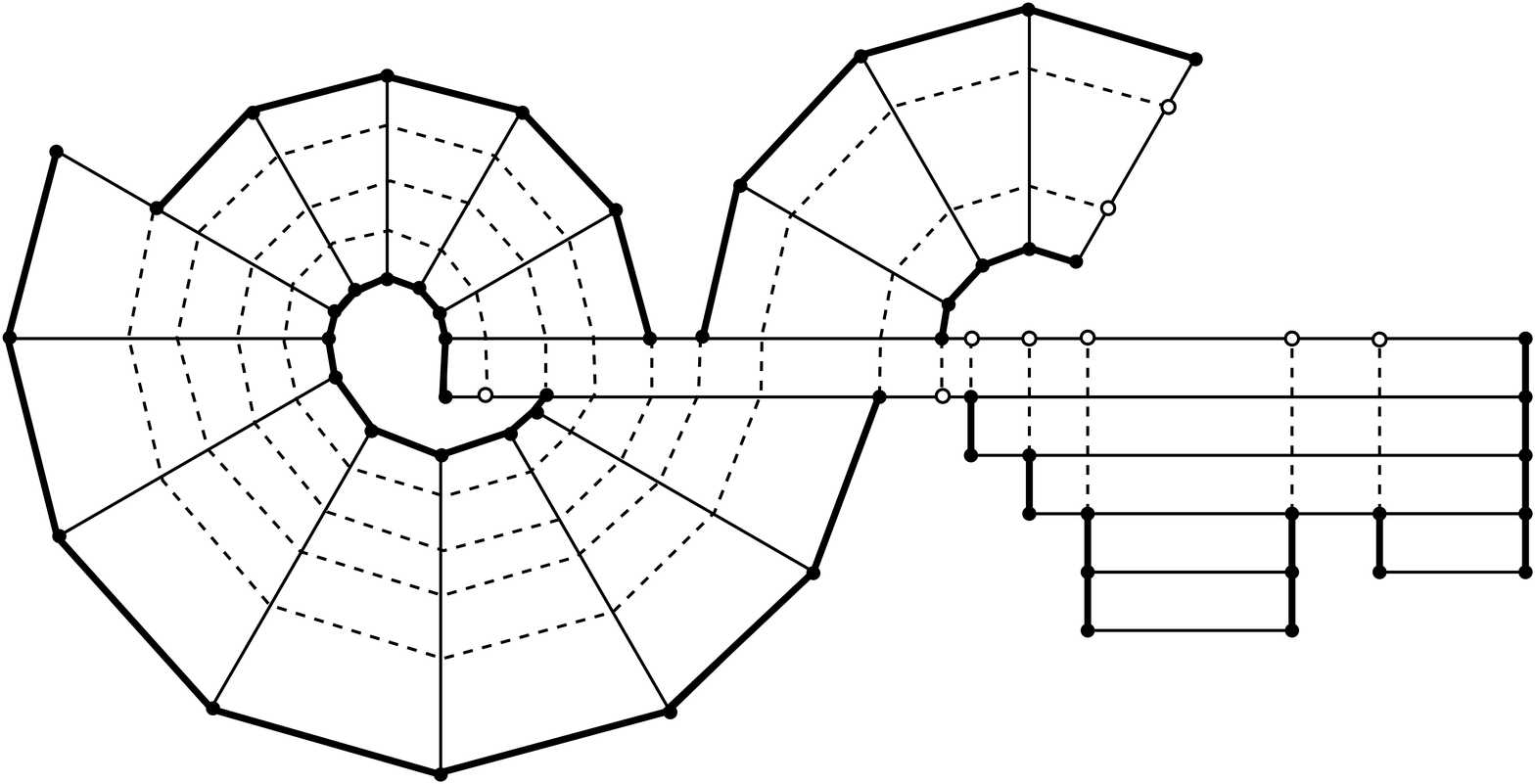}
 }
\caption{ \label{PropPath4}
A domain $W$ dissected by isosceles trapezoids and a
mesh generated  by propagating non-conforming 
vertices along   $P$-paths.
The $P$-sides of the trapezoids are drawn thicker.
}
\end{figure}

A {\defit chain}  in a dissection is a maximal collection
of distinct  pieces $T_1, \dots, T_k$ so that for
 $j=1, \dots k-1$,
$T_j$ and $T_{j+1}$ share a $Q$-side (the sides
are identical, not just overlapping).
If a  piece in the dissection does not share 
$Q$-side with any other piece, we consider it as a chain
of length one.
 For example, the dissection in
Figure \ref{PropPath4} has chains of length 2,4, 5 and 7
and four chains of length 1.
The {\defit $Q$-ends} of a chain are the 
$Q$-side of $T_1$ not shared with $T_2$, and the
$Q$-side of $T_k$ not shared with $T_{k-1}$.
%The {\defit corners} of a chain are the endpoints of the ends
%of the chain.
%Usually there are four distinct corners, but there may
%only be two in the case 
When $T_1$ and $T_k$ also share
a  $Q$-side, then  the chain forms a closed loop. 
We will call this a {\defit closed chain} (this case
is not of much interest to us since no propagation
paths will ever occur inside such a closed chain.)

Suppose $T$ is an isosceles piece. 
Given $\theta >0$,  a {\defit $\theta$-segment} is a
segment in $T$ with one endpoint on each non-base side, 
and so that the segment is 
within angle $\theta$ of being parallel to the base. 
We allow one endpoint of a $\theta$-segment   to be 
a vertex of $T$, but the interior of the segment must 
be contained in the interior of $T$, so we don't consider 
base sides of $T$ to be $\theta$-segments.  A {\defit $\theta$-path}
is a polygonal arc made up of $\theta$-segments joined 
end-to-end. We shall sometimes refer to this as a 
{\defit $\theta$-bent path}. 
Note that if $T$ is an $\theta$-nice isosceles piece that is cut 
by a $P$-segment, then it is cut into two $\theta$-nice 
isosceles pieces. If it is cut by a $\theta$-segment, then 
we get two pieces (triangles or quadrilaterals) that are 
$2\theta$-nice (but not  isosceles unless $\theta =0$). 
See Figure \ref{BentPath1}.

\begin{figure}[htb]
\centerline{
\includegraphics[height=1.75in]{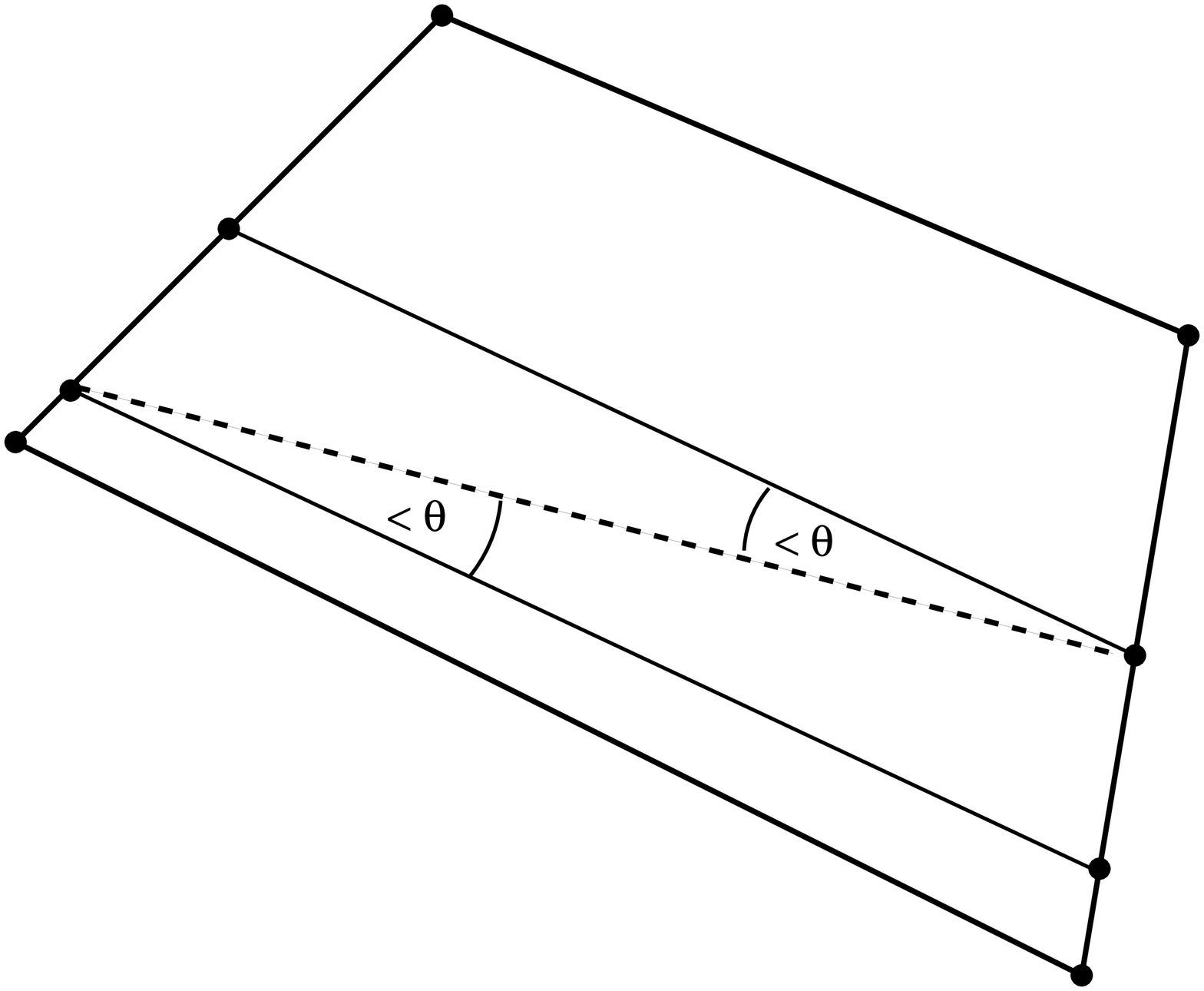}
$\hphantom{xxxxxx}$
\includegraphics[height=1.75in]{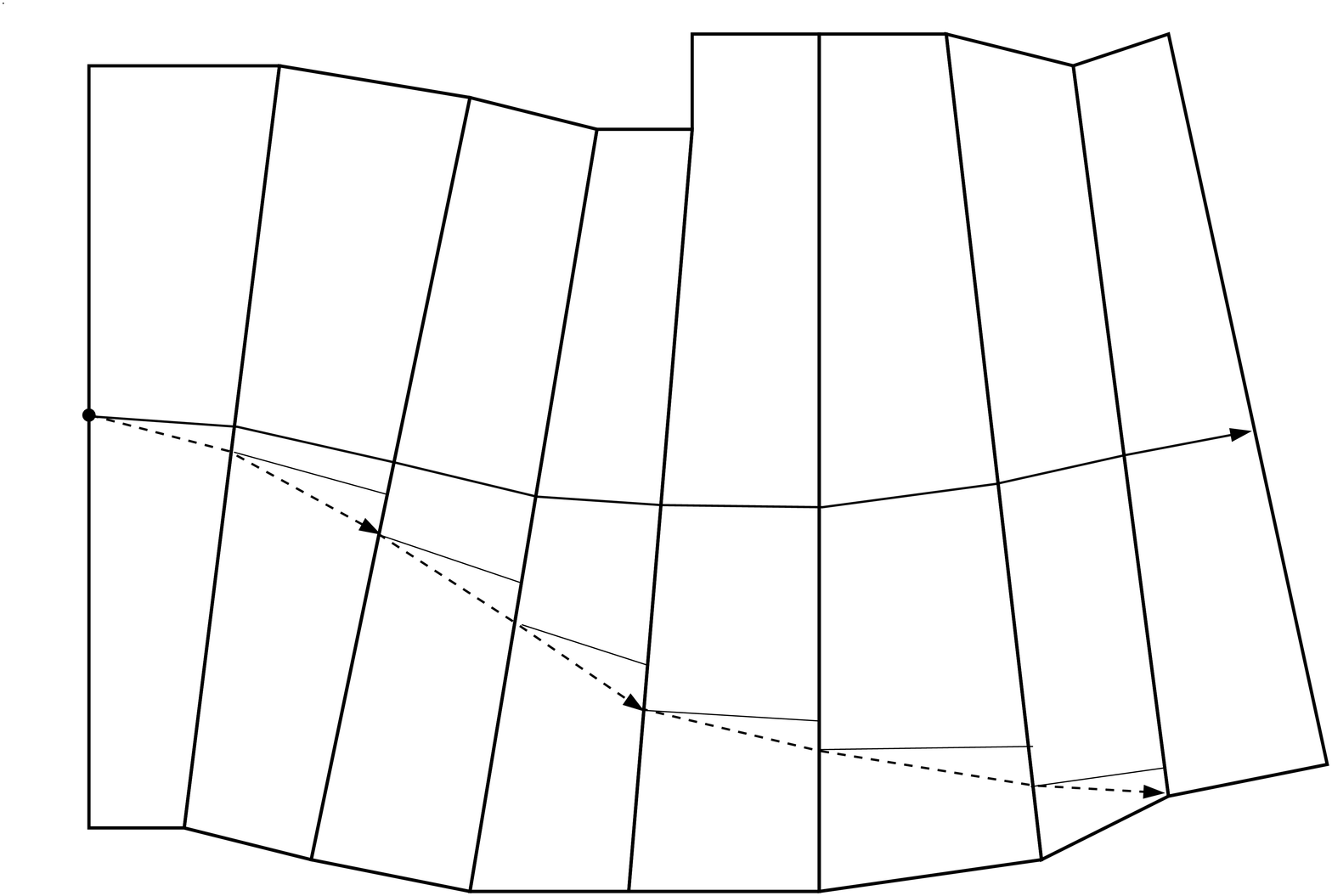}
 }
\caption{\label{BentPath1}
A $\theta$-segment (dashed)  makes angle at most $\theta$ 
with $P$-segments (solid). A $\theta$-path is made up out 
of $\theta$-segments.
}
\end{figure}

We say that a finite set of points on the $Q$-sides of an isosceles
piece make that piece Gabriel if the following holds. 
Each $Q$ side is split into several segments by these
points and we require that the open disks with these segments 
as diameters  do not contain any of the added points or 
corners of the piece.
See Figure \ref{GabrielPieces}.

\begin{figure}[htb]
\centerline{
\includegraphics[height=1.5in]{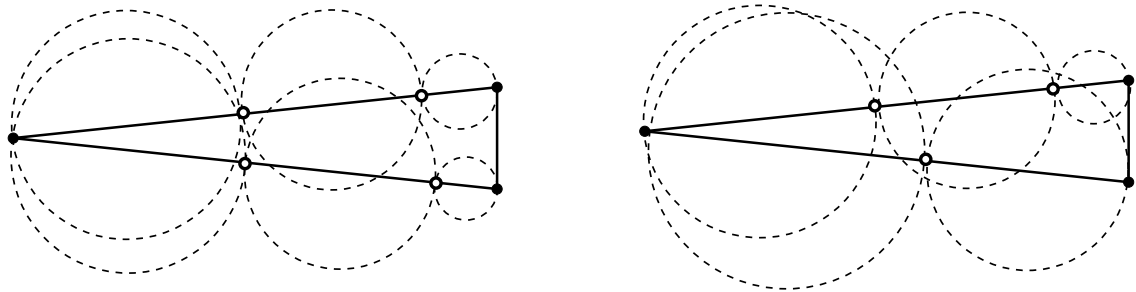}
 }
\caption{\label{GabrielPieces}
On the left the points make this piece Gabriel, on the 
right they do not.
}
\end{figure}

%-------------------------------------------
\section{Tubes} \label{tubes} 

Two $P$-paths $\gamma_0, \gamma_1$ 
 are {\defit parallel} if   each point of $\gamma_0$ can 
be connected to a point of $\gamma_1$ by a $Q$-segment 
(equivalently, the paths cross the same sequence of 
isosceles pieces, in the same order).
A {\defit tube } in $\Omega$ is the union 
of two parallel $P$-paths $\gamma_0, \gamma_1$
 and all the $Q$ segments that 
connect the first to the second.  The $P$-paths
$\gamma_0$, $\gamma_1$ are called the {\defit $P$-sides}
or {\defit $P$-boundaries} of the tube.
See Figure \ref{ConstructReturn6}.

 Suppose the endpoints of 
$ \gamma_0$ and $\gamma_1$  are  $\{ x_0, y_0\}$
and $\{ x_1, y_1\}$ respectively and that 
$[x_0, x_1]$ and $[y_0, y_1]$ are  $Q$-segments.
These segments are called 
the {\defit ends} or {\defit $Q$-ends} of the tube.  These may or may not 
be disjoint segments.
The two ends of a tube have the same length, and 
this  common length of each of the two $Q$-ends is called
the {\defit width} of the tube (a tube can be thought of
as a union of isosceles pieces, all of the same width, joined
end-to-end along their $Q$-sides). 
  The points $\{x_0,y_0,x_1,y_1\}$ are the  {\defit corners} 
of the tube 
 (although in some cases, 
these need not be four distinct points in the plane, e.g.
 pure spirals that we will discuss later).
{\defit Opposite corners} of a tube  mean either 
the pair
$\{x_0, y_1\}$  or the pair $\{x_1, y_0\}$.
 A {\defit maximal 
width  tube} is the union of all $P$-paths parallel to 
a given one.  If a tube is maximal width, then each 
of the $P$-path boundaries contains segments that 
are bases for at least one piece that the tube 
crosses (otherwise we could widen the tube).
See Figure \ref{ConstructReturn6}.

\begin{figure}[htb]
\centerline{
\includegraphics[height=3.0in]{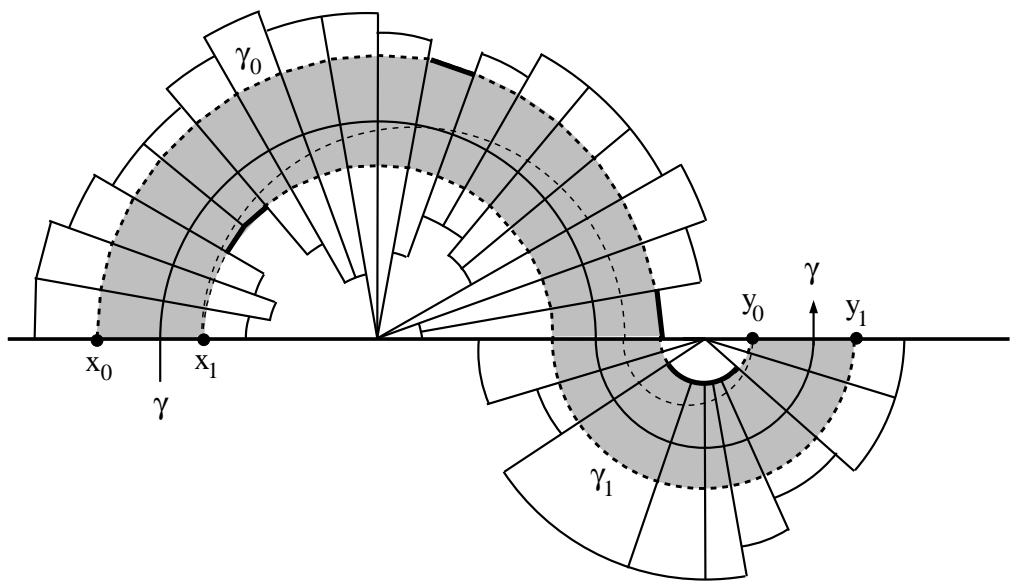}
 }
\caption{\label{ConstructReturn6}
Given a $P$-path $\gamma$ that returns to the same $Q$-segment,
there is an associated  widest return region consisting
of all parallel paths. The $P$-boundary of this tube   consists
of the two curves $\gamma_0, \gamma_1$;  by maximality, each 
must contain a base edge of a dissection piece (highlighted
with darker edges in the figure).
The points $x_0, y_1$ form one pair of opposite
 corners; $x_1, y_0$ 
the other pair.  We are interested in joining opposite 
corners by a $\theta$-path crossing the tube (dashed curve
connecting $x_1$ to $y_0$).
}
\end{figure}

We say a path {\defit strictly crosses} a tube if it 
is  contained in the 
tube and has one  endpoint on each $Q$-end.
We say a path {\defit crosses} a tube if it contains 
a sub-path that strictly crosses the tube.
The  $P$-paths that strictly cross a tube  can be parameterized 
as $\gamma_t$ with $t \in [0,1]$ where $\gamma_0$ and 
$\gamma_1$ are the $P$-sides of the tube as discussed 
above and $\gamma_t$ has endpoints  $ x_t = (1-t)x_0+ t x_1$
 and  $ y_t = (1-t)y_0+ t y_1$. Moreover 
$$ \ell(\gamma_t) = (1-t)\ell(\gamma_0) + t \ell(\gamma_1),$$
where $\ell(\gamma)$ denotes the length of a path $\gamma$. 
This formula is obvious for tubes that have a single isosceles
piece, and it follows in general since a sum of affine functions 
is affine. The path with  $t=1/2$ is called the {\defit center path}
of the tube.
Note that 
\begin{eqnarray} \label{center path est}
 \ell(\gamma_{1/2})  = \frac 12 (\ell(\gamma_0) + \ell(\gamma_1)).
\end{eqnarray}

%Suppose $T$ is an isosceles piece (i.e., either 
%an isosceles triangle or trapezoid) and let 
%$\gamma_0$ be the shorter base side (the vertex opposite the 
%base in the case of an isosceles triangle). 
%For a directed $P$-segment  $S$ of a isosceles piece $T$, then 
%angle of $T$ with respect to $S$ is $\theta \geq 0$, the angle of
%the piece $T$ if the shorter side of $T$ is on the left of 
%$S$ and is $- \theta$ if it on the right of $S$.

The {\defit length}  $\ell(T)$ of a tube  $T$ 
is the minimum length 
of the  two $P$-sides, i.e., 
$$ \ell(T) = 
\min(\ell(\gamma_0), \ell(\gamma_1)).$$ 
It is possible for a tube to have length zero, 
e.g.,  when all the pieces are triangles with 
a common vertex.
 See the left side of Figure \ref{LengthZero}. 

The length of an isosceles piece is the length of 
its shorter base edge (zero for triangles). 
If a tube $T$ is made up of isosceles pieces 
$\{ T_k\}$ then it is possible  to have both $\ell(T) > 0$
and and $\ell(T_k) =0 $ for all $k$.
See the right side of Figure \ref{LengthZero}.
We define the {\defit minimal-length } of a tube to be
$$ \tilde \ell(T) = \sum \ell(T_k),$$
i.e., we sum over the minimal base length for each piece
of the tube, 
whereas $\ell(T)$ is defined by summing over all segments 
in one $P$-boundary of the tube or all segments in the other.
Clearly $\tilde \ell(T) \leq \ell(T)$.

\begin{figure}[htb]
\centerline{
\includegraphics[height=1.5in]{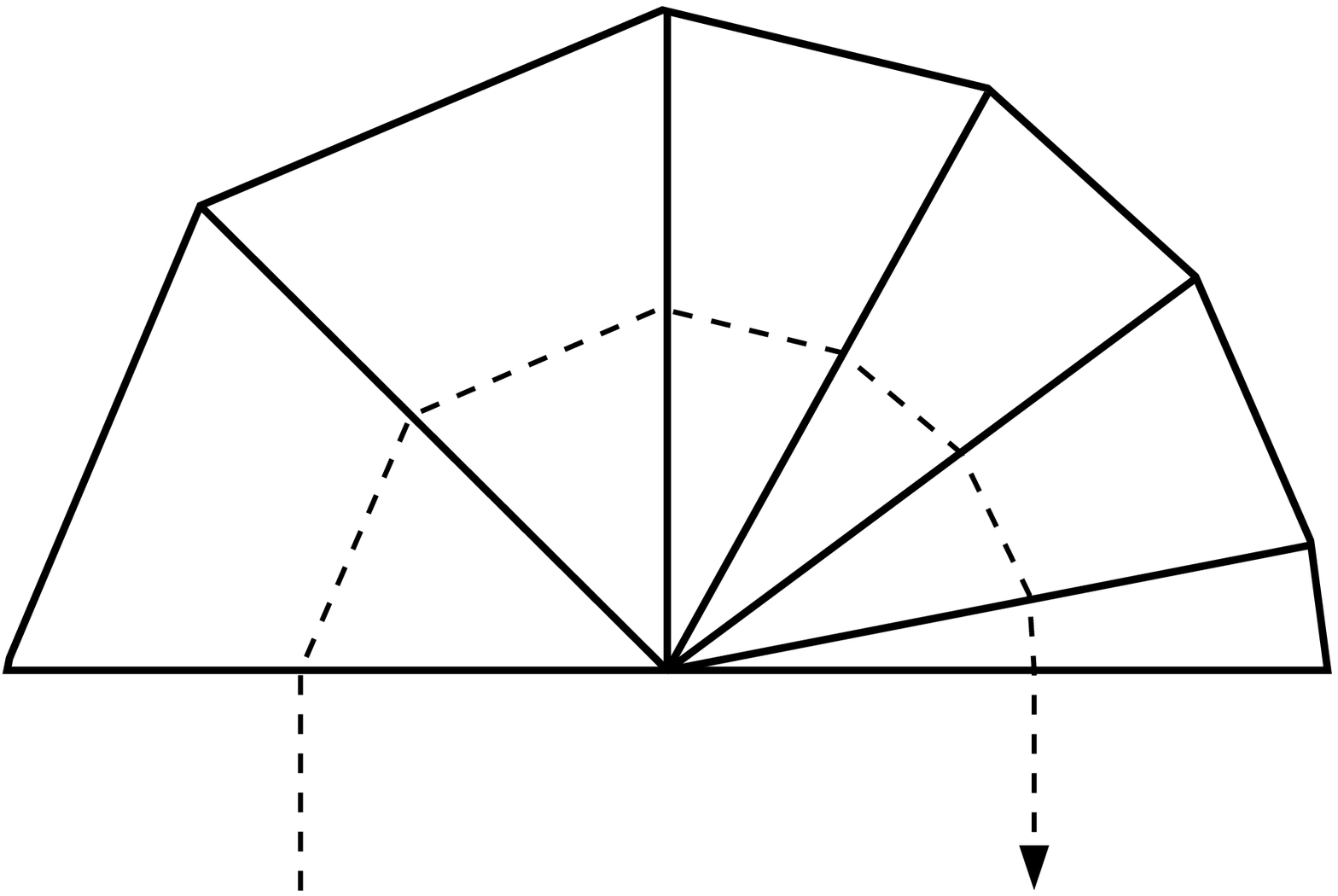}
$\hphantom{xxxx}$
\includegraphics[height=1.5in]{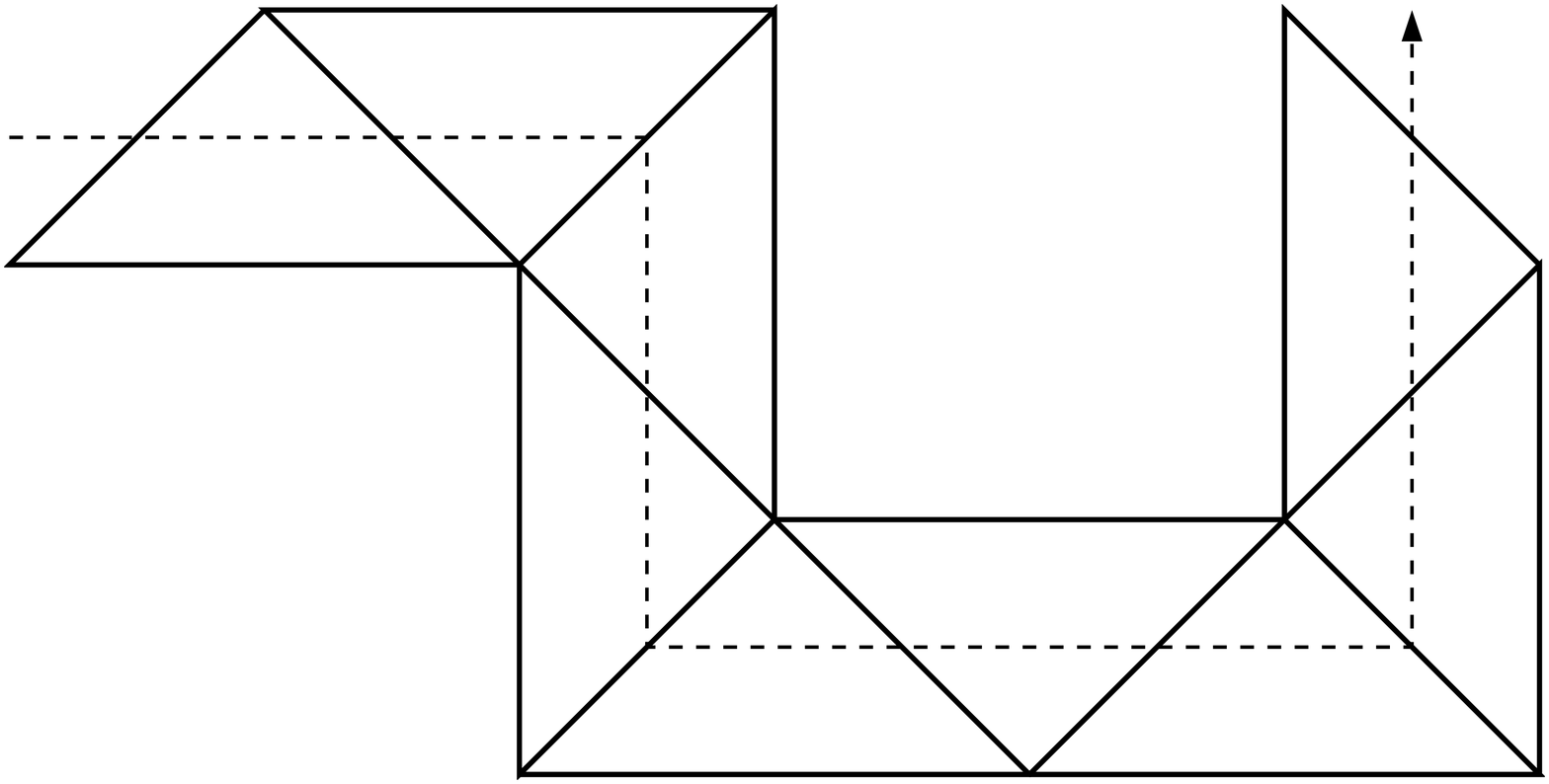}
 }
\caption{\label{LengthZero}
On the left is a tube of length zero. On the right 
is a tube with positive length, but zero minimal-length.
}
\end{figure}

As noted above, it 
is possible to have both $\tilde \ell (T) =0$
 and $\ell(T) >0$. 
%See Figure \ref{LengthZero}.
  However, if we split a tube 
into two parallel tubes using the center path then 
this cannot happen for either sub-tube:

\begin{lemma} \label{sum short}
Let $Q$ be a tube and $Q_1$, $Q_2$ the parallel sub-tubes 
obtained by splitting $Q$ by its  center path $\gamma$. 
Then  $ \ell(Q) \leq  \ell(Q_1)$ and $\ell(Q) \leq 4 \tilde 
\ell (Q_1)$. 
%Suppose $Q_1$ has $P$-sides $\gamma_1$ and $\gamma_{1/2}$.
% Let $\{ T^1_k\}_{k=1}^m$ be the isosceles pieces of $T_1$. Then 
%$\sum_{j=1}^m \ell(T^1_k) \geq \frac 14 \ell(Q)$. 
\end{lemma}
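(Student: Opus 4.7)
The plan is to reduce both inequalities to the single-piece affine identity for $P$-segment lengths combined with the tube-wide formula (\ref{center path est}) for the center path. Label the $P$-sides of $Q$ by $\gamma_0,\gamma_1$ and the center path by $\gamma_{1/2}$, and suppose (by symmetry) that $Q_1$ is the sub-tube bounded by $\gamma_0$ and $\gamma_{1/2}$.

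For the first inequality, observe that $\ell(\gamma_0)\ge\ell(Q)$ by definition and that, by (\ref{center path est}), $\ell(\gamma_{1/2})=\tfrac12(\ell(\gamma_0)+\ell(\gamma_1))\ge\min(\ell(\gamma_0),\ell(\gamma_1))=\ell(Q)$. Hence $\ell(Q_1)=\min(\ell(\gamma_0),\ell(\gamma_{1/2}))\ge\ell(Q)$.

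For the second inequality I would work piece by piece. Write $T_1,\dots,T_m$ for the pieces that $Q$ crosses and let $\ell_k^0,\ell_k^1$ denote the lengths of $\gamma_0\cap T_k$ and $\gamma_1\cap T_k$. Because these two $P$-segments of $T_k$ are parallel to the base of $T_k$, the restriction $\gamma_{1/2}\cap T_k$ is itself a $P$-segment of $T_k$, and the same affine interpolation that underlies (\ref{center path est}), now applied inside a single piece, gives it length $\tfrac12(\ell_k^0+\ell_k^1)$. Consequently $T_{1,k}:=Q_1\cap T_k$ is an isosceles piece whose two parallel base-direction edges have lengths $\ell_k^0$ and $\tfrac12(\ell_k^0+\ell_k^1)$, each of which is at least $\tfrac12\ell_k^0$. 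Thus $\ell(T_{1,k})\ge\tfrac12\ell_k^0$, and summing over $k$ yields $\tilde\ell(Q_1)\ge\tfrac12\ell(\gamma_0)\ge\tfrac12\ell(Q)$, so $\ell(Q)\le 2\tilde\ell(Q_1)\le 4\tilde\ell(Q_1)$.

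The one point to verify carefully is that $T_{1,k}$ is genuinely an isosceles piece with the two $P$-segments above as its pair of parallel base edges, so that $\ell(T_{1,k})$ is unambiguously the smaller of the two; this follows from the fact that $P$-segments in an isosceles piece are parallel to its base and therefore cut it into two smaller isosceles pieces. Given that, the remaining bounds are routine. The argument actually yields a factor $2$ in place of the stated $4$, so there is some slack in the lemma's bound.
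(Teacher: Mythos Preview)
Your argument is correct, and for the second inequality it is genuinely cleaner than the paper's. The paper proceeds by a case split: writing the length of the shared $P$-boundary as $L_0+L_1$ according to whether each segment is the shorter or longer base of its piece in $Q$, and then arguing separately depending on whether $L_0\ge L/2$ or $L_1>L/2$. In the first case the short sides of $Q$ remain short in $Q_1$; in the second case the mid-segments become the short sides of $Q_1$ and each is at least half the corresponding long side. Either branch yields $\tilde\ell(Q_1)\ge L/4$.

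You avoid this dichotomy entirely by observing that for every piece $T_k$ the two base lengths of $T_{1,k}$ are $\ell_k^0$ and $\tfrac12(\ell_k^0+\ell_k^1)$, and the smaller of these is always at least $\tfrac12\ell_k^0$. This uniform piecewise lower bound lets you sum directly and gives $\tilde\ell(Q_1)\ge\tfrac12\ell(\gamma_0)\ge\tfrac12\ell(Q)$, a factor of $2$ better than the paper's bound. The improvement is real: the paper's case analysis loses information by only counting one type of piece in each case, whereas your argument exploits the elementary fact that the midpoint of two nonnegative numbers is at least half of either one. The first inequality is handled identically in both proofs.
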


\begin{proof}
Let $\gamma_1$ be the common $P$-boundary of $Q$ and $Q_1$, 
and $\gamma_2$ the common $P$-boundary of $Q$ and $Q_2$.
See Figure \ref{SplitTube}. 
By  Equation (\ref{center path est}), 
the  center path $\gamma$ of $Q$  has length between 
the lengths of $\gamma_1$ and $\gamma_2$. Thus 
$$\ell(Q) = \min(\ell(\gamma_1), \ell(\gamma_2))
  \leq \min(\ell(\gamma_1), \ell(\gamma)) = \ell(Q_1).$$
This is the first  inequality in the lemma.

\begin{figure}[htb]
\centerline{
\includegraphics[height=2.0in]{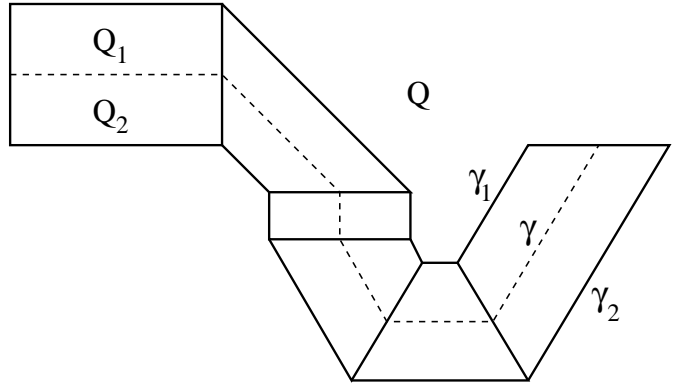}
 }
\caption{\label{SplitTube}
The tube $Q$ is split into two parallel tubes 
$Q_1, Q_2$, by its mid-path $\gamma$. 
}
\end{figure}

To prove the second inequality, suppose 
$L = \ell(Q) $ is the length 
of the tube  $Q$ and $\gamma_1$ is  
the $P$-boundary shared by $Q$ and $Q_1$. 
The length of $\gamma_1$ is the sum of the lengths 
of its segments and we group this  sum into two 
parts, depending on whether or not the segments are  the longer 
or shorter base sides of   the corresponding  isosceles 
pieces in $Q$ (if the 
piece has equal length bases, its makes no difference
in which sub-sum we place the segment). 
Call the two sums $L_0$ and $L_1$ where these 
give the sums over the shorter edges and longer edges respectively.
 By definition $L = L_0+L_1$. 
If $L_0 \geq \frac 12 L$, then the short sides of 
$Q$ (and hence the short sides of $Q_1$) add up to at
least $ \frac 12 L_1$. So in this case
$\tilde \ell(Q_1) \geq \frac 12 L \geq \frac 14 L$, 
as desired.   If $L_0 < \frac 12 L$,
 then $L_1 > \frac 12 L$ and hence
the corresponding mid-segments of the pieces add 
up to $\frac 14 L$ (since the mid-segment has length 
as least half the longer base side). But these mid-segments
are the shorter  base sides of  pieces in $Q_1$, so 
again $\tilde \ell (Q_1) \geq \frac 14 L$, as desired. 
This proves the lemma.
\end{proof}

The $P$-segments give an identification between the
non-base sides of an isosceles piece that preserves
length.  The {\defit displacement} of a  $\theta$-segment
$[a,b]$ is $|p-b|$ where $[a,p]$ is a $P$-segment. It is 
easy to check that this is unchanged if we reverse the  
roles of $a$ and $b$.  
Similarly, the two ends of a tube are identified by an 
isometry  induced by the parallel $P$-paths defining the 
tube. 
The displacement of a path   that strictly crosses the  tube is 
$|b-p|$ where $a, b$ are the endpoints of the path and 
the $P$-path starting at $a$ hits the other end at $p$.
The most important estimates in the remainder of the paper
involve  how much displacement 
a path can have, given that it satisfies certain limitations 
on its ``bending'' across each isosceles piece.
For Theorem \ref{Triangles}, the bending is bounded 
by a fixed angle $\theta$, and for Theorem \ref{NonObtuse}
the amount of bending depends on the piece and is 
determined by the Gabriel condition.

%-------------------------------------------------------
\section{Return regions}  \label{return regions}

In this section we introduce a collection 
of  regions, one of which must
be hit by any $P$-path that is sufficiently long  (in 
terms of the number of pieces it crosses). We will 
classify the regions into four types, and bound the 
total number of regions needed to form such an unavoidable 
collection. 

Suppose, as above, that $\Omega$ has an isosceles 
dissection. A return path is a $P$-path that begins 
and ends on the same $Q$-side of some piece of 
the dissection, and that  intersects  any $Q$ segment at 
most  three times. 
%A {\defit return region}  is a tube whose ends are both on 
%the same $Q$-side of some isosceles piece. 
Figure \ref{ReturnPaths} shows four ways that this 
can happen:
% a $P$-path $\gamma$  can start 
%and end on the same $Q$-segment $S$:
\begin{description}
\item [C-curve] both ends of  $\gamma$ hit the same side of $S$ and 
       $\gamma\cup S$ separates the endpoints of $S$ from 
       $\infty$,
\item  [S-curve] $\gamma$ starts and ends on different sides of $S$,
       crossing $S$ exactly once in between, and this crossing 
       point separates the endpoints of $\gamma$ on $S$, 
\item  [G1-curve]$\gamma$ starts and ends on different sides of $S$,
       crossing $S$ exactly once,  and this crossing 
       point does not separate the endpoints  of $\gamma$ on $S$, 
\item [G2-curve] 
      $\gamma$ starts and ends on different sides of $S$  with no
        crossings $S$.
\end{description} 
When we refer to a G-curve, we can mean either a G1 or a G2-curve.

\begin{figure}[htb]
\centerline{
\includegraphics[height=2.0in]{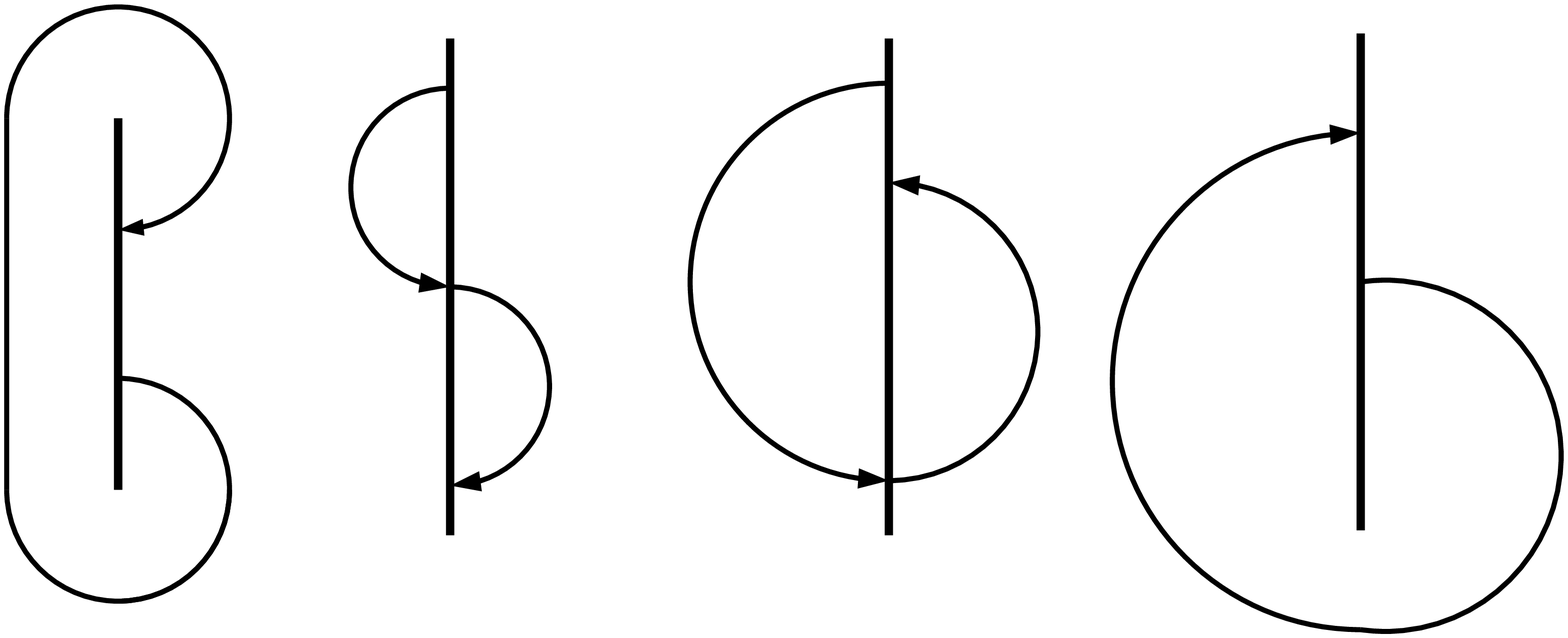}
 }
\caption{\label{ReturnPaths}
A C-curve, S-curve and the two types of G-curve.
Each is named for the letter it 
vaguely resembles.
}
\end{figure}

\begin{lemma} \label{must cross}
Suppose $n$ is the number of isosceles pieces in 
an isosceles dissection.
Every $P$-path $\gamma$  with $2n+1$ segments contains a  sub-path that
is a return path of one of the four types described above.
\end{lemma}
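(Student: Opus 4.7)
The plan is to apply pigeonhole twice. Since $\gamma$ has $2n+1$ segments and there are only $n$ distinct isosceles pieces in the dissection, the sequence $P_1,P_2,\dots,P_{2n+1}$ of pieces traversed by $\gamma$ (one piece per $P$-segment, noting that a simple $P$-path cannot have two consecutive segments in the same piece) must contain some piece visited at least three times. More usefully, let $k$ be the least index for which $P_k$ is the third visit to some piece $T$; then in the prefix $P_1,\dots,P_{k-1}$ every piece appears at most twice, forcing $k-1\le 2n$ and hence $k\le 2n+1$. Write $i_1<i_2<i_3=k$ for the three visit times of $T$ and $\sigma_{i_1},\sigma_{i_2},\sigma_{i_3}$ for the corresponding $P$-segments.

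Each $\sigma_{i_j}$ runs between the two $Q$-sides of $T$, so it carries one of two possible \emph{directions}, recorded by which $Q$-side it enters through. Pigeonhole on three items and two labels yields a pair $\sigma_{i_a},\sigma_{i_b}$ with $a<b$ in $\{1,2,3\}$ that share a direction. I take as candidate return path the sub-arc $\gamma'$ of $\gamma$ starting at the entry point of $\sigma_{i_a}$ and ending at the entry point of $\sigma_{i_b}$; since these two segments enter $T$ through the same $Q$-side $S$, both endpoints of $\gamma'$ lie on $S$, so $\gamma'$ begins and ends on the same $Q$-side, as required. The minimality of $k$ ensures that within $\gamma'$ every piece of the dissection appears at most twice, and an elementary alternation argument (if two symbols each appear at most twice in a finite sequence, then the number of adjacent transitions between them is at most three) bounds the number of intersections of $\gamma'$ with any single $Q$-segment by three.

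For the type classification, $\gamma'$ leaves its starting endpoint heading into $T$ and arrives at its ending endpoint coming from the piece on the opposite side of $S$ from $T$, so the near-endpoints of $\gamma'$ sit on opposite sides of $S$; this places $\gamma'$ outside the C-type. The number of interior crossings of $S$ by $\gamma'$ equals the number of visits of $\gamma$ to $T$ strictly between $\sigma_{i_a}$ and $\sigma_{i_b}$ that exit across $S$, which is either $0$ or $1$: the former yields a G2-curve and the latter an S- or G1-curve, distinguished by the planar ordering of the three intersection points on $S$. The main obstacle I anticipate is the non-consecutive case $(a,b)=(1,3)$, where an intermediate opposite-direction visit $\sigma_{i_2}$ must be accounted for in both the intersection count and the S-vs-G1 test; this requires a short Jordan-curve argument rather than pure combinatorics, but the alternation lemma above handles the intersection bound uniformly across all cases.
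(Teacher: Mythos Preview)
Your approach is correct and close in spirit to the paper's, but organized differently. The paper argues by elimination: after arranging that no $Q$-side is hit more than three times, it assumes no C- or G2-subpath exists and observes that then the subpath between consecutive visits to $S$ stays on one side, forcing the first-to-third subpath to be an S- or G1-curve. You instead use a second pigeonhole on the entry side of the three visits to $T$ to \emph{constructively} produce a subpath $\gamma'$ whose endpoints lie on opposite sides of $S$, so $\gamma'$ is automatically G2, S, or G1 and never C. This is a nice simplification: you avoid the case analysis on C-curves entirely and your minimality-of-$k$ device is a cleaner substitute for the paper's somewhat vague ``passing to a sub-path'' step.

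Two small wobbles in your write-up. First, your alternation lemma (``two symbols each appearing at most twice give at most three adjacent transitions'') is not quite the right framing for the intersection bound, because in a dissection a $Q$-side of a piece $A$ may border several pieces on the other side rather than a single $B$. The correct count is simpler: each visit to $A$ in the piece-sequence contributes exactly one vertex on each $Q$-side of $A$, so at most two crossings for any $Q$-side of $A\ne T$; for $T$ itself your case analysis already gives at most three crossings of $S$. Second, your phrasing ``visits \dots\ that exit across $S$'' undercounts by suggestion only: every interior visit to $T$ crosses $S$ exactly once (entering or exiting), so the interior-crossing count is simply the number of interior visits to $T$, which is $0$ in the consecutive cases and $1$ in the $(1,3)$ case. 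With those corrections your argument goes through without needing the Jordan-curve step you anticipated, since the lemma only asks that $\gamma'$ be one of the four types, not that you distinguish S from G1.
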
 

\begin{proof}
A $P$-path
$\gamma $ with $2n+1$ segments must cross some 
dissection piece  three times by the pigeon hole principle. 
Therefore $\gamma$  crosses a non-base side $S$ of such a piece at
least three times. By passing to a sub-path, if necessary, 
we can also assume  $\gamma$ does not hit any  $Q$-side 
more than three times.
 Suppose that $\gamma$ does not contain a 
C-curve or  a G2-curve as a subpath.  Then  the sub-path
between  its
first and second visit to $S$  starts and ends on the same side 
of $S$ and the same for the sub-path between its second and third 
visit (but now it starts and ends on the other side of $S$). 
Thus the subpath formed between the first and third visits is 
either a S-curve or a G1-curve. Thus one of the four types of
curve must occur as a subpath.
\end{proof}

A tube consisting of parallel return paths will be 
called a {\defit return tube} and be called a C-tube, 
S-tube, G1-tube or G2-tube depending on the type
of curves it contains (clearly all parallel curves must be 
of the same type). We want to show that the length of a 
return tube cannot be   too small compared to its  width.
We do this by considering the different types of tubes 
one at a time.  We call a return tube a {\defit simple tube} if the 
two ends have disjoint interiors (they may share a corner); 
otherwise the region is called a {\defit spiral}. 
A C-tube or S-tube must be a simple tube; a G-tube can be 
either be a simple tube  or a spiral. As the name suggests,
simple tubes are easier to understand and we start  with this 
case.  See Figure \ref{Gregion3}.

\begin{figure}[htb]
\centerline{
\includegraphics[height=1.8in]{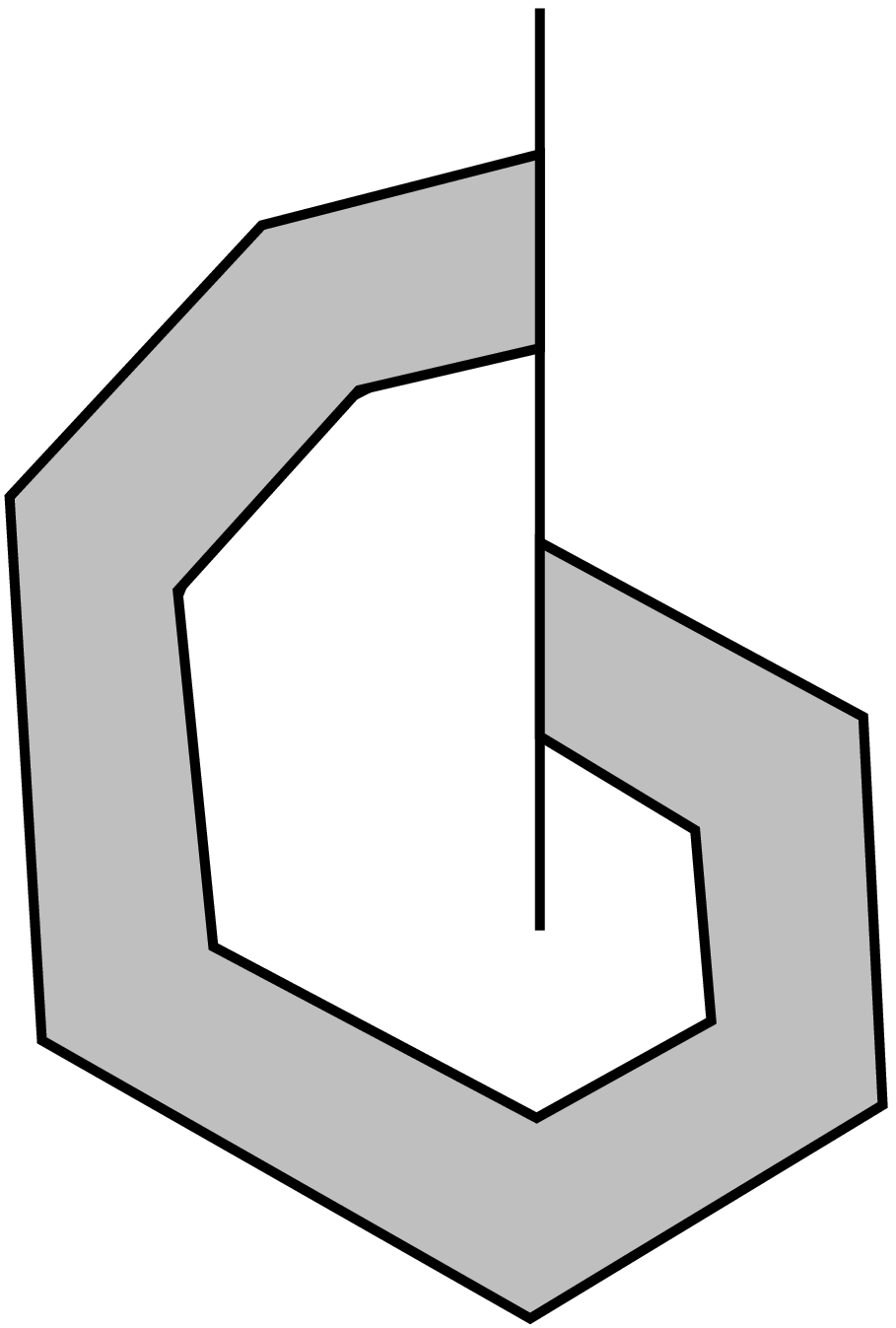}
$\hphantom{xxxxx}$ 
\includegraphics[height=1.8in]{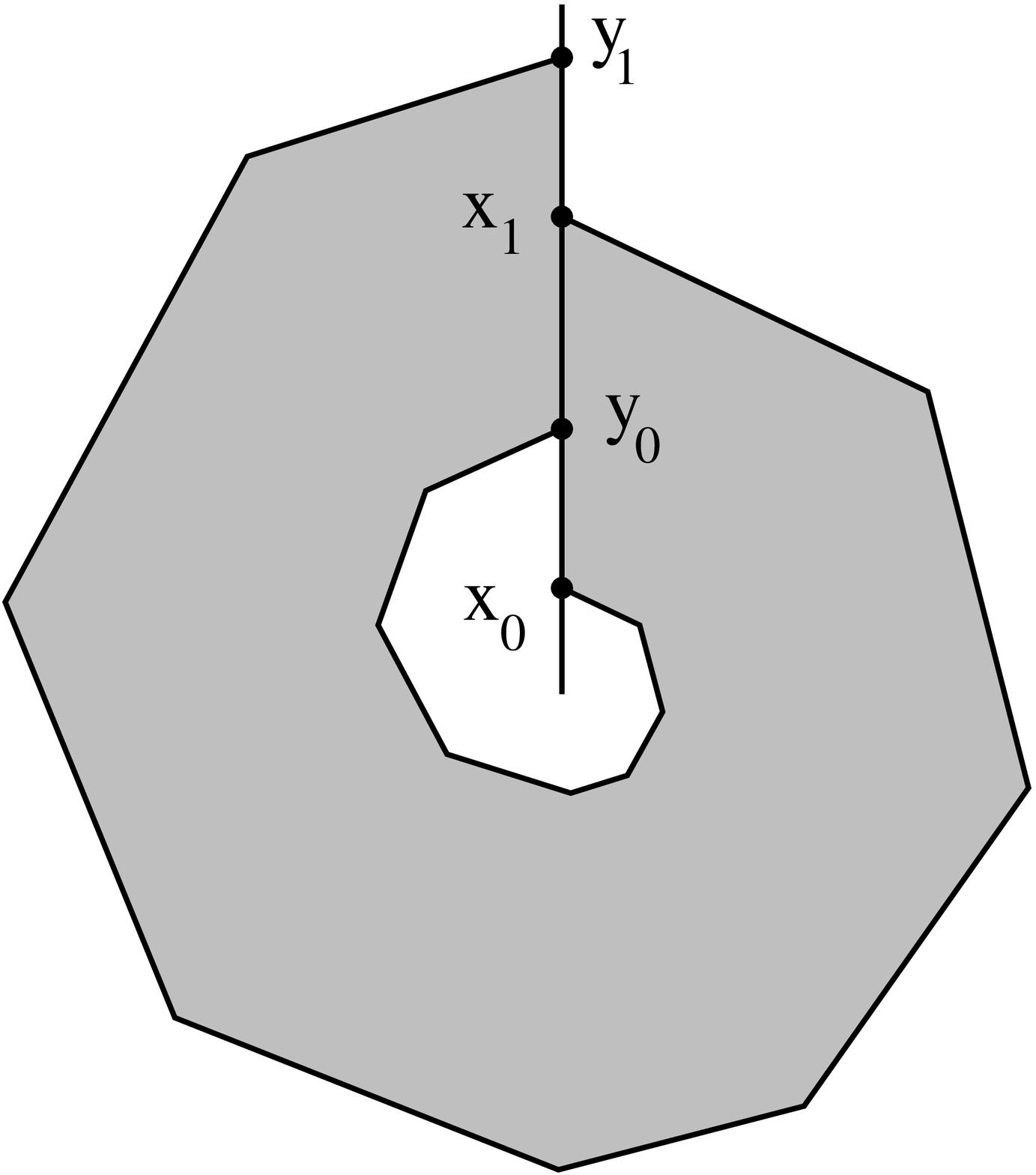}
 }
\caption{\label{Gregion3}
A G-region can form a single tube with disjoint ends 
(a simple G-tube)   or 
the two ends can overlap (a spiral).
} 
\end{figure}

\begin{lemma} \label{C-tube lemma} 
The length  $L$ of a C-tube is at least twice its width $w$. 
\end{lemma}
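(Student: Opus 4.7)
The plan is to bound $L=\min(\ell(\gamma_0),\ell(\gamma_1))$ from below by $2w$ by treating the two $P$-boundaries separately: the outer boundary $\gamma_1$ yields its bound from the triangle inequality, while the inner boundary $\gamma_0$ requires a perpendicular-projection argument.

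I begin by fixing coordinates. Since a C-tube is simple, the two $Q$-ends $E_0,E_1$ are disjoint subsegments of $S$, each of length $w$. Orient $S$ to be vertical with the tube lying to the right, and label the four corners so that on $S$, from top to bottom, they appear in the order $x_1,x_0,y_0,y_1$, where $\gamma_j$ joins $x_j$ to $y_j$. Then $|x_0-x_1|=|y_0-y_1|=w$, and writing $g=|x_0-y_0|\geq 0$ for the gap between $E_0$ and $E_1$ on $S$, we have $|x_1-y_1|=g+2w$.

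For $\gamma_1$: being a C-curve, it lies on one side of $S$ (except at its two endpoints), so it is not a segment of $S$ and the triangle inequality yields
\[
\ell(\gamma_1)\ \geq\ |x_1-y_1|\ =\ g+2w\ \geq\ 2w.
\]
For $\gamma_0$: let $d$ be the maximum perpendicular distance from $\gamma_0$ to $S$. Projecting $\gamma_0$ onto the line perpendicular to $S$ produces a piecewise-linear function on $[0,1]$ that starts and ends at $0$ and attains the value $d$, so its length, and hence $\ell(\gamma_0)$, is at least $2d$. It therefore suffices to show $d\geq w$.

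The main technical step is the inequality $d\geq w$. At a point $p\in\gamma_0$ achieving the maximum perpendicular distance, the local tangent of $\gamma_0$ must be parallel to $S$, so the isosceles piece $T_p$ containing $p$ has base parallel to $S$ and $\gamma_0,\gamma_1$ coincide with its two base sides; the perpendicular separation of $\gamma_0$ from $\gamma_1$ at $p$ equals the height $h\leq w$ of $T_p$. The plan is then to use the C-curve wrapping condition—that $\gamma_0\cup S$ separates the endpoints of $S$ from $\infty$—together with the width-$w$ constraint on every piece of the tube. The intuition is that if $d<w$, then the outer path $\gamma_1$, which must itself return to $S$ on the same side and also satisfy the C-curve topological condition, would be squeezed into a region too narrow to accommodate a width-$w$ strip wrapping around the opening, forcing either self-intersection of the tube or a violation of the simple C-tube structure. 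Making this precise—tracking the sequence of pieces from $E_0$ to the bend and back, and using that each has $Q$-sides of length $w$—is the main obstacle; the rest of the argument is then immediate from $\ell(\gamma_0)\geq 2d\geq 2w$ together with the outer bound, giving $L\geq 2w$.
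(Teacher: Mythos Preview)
Your argument for $\gamma_1$ via the triangle inequality is fine, but the proof has a genuine gap at the inner boundary $\gamma_0$. You reduce to showing that the maximum perpendicular distance $d$ from $\gamma_0$ to the line through $S$ satisfies $d\geq w$, and then explicitly leave this unproven (``making this precise \ldots\ is the main obstacle''). The heuristic you offer --- that a width-$w$ tube cannot wrap around inside a strip narrower than $w$ without self-intersecting --- is plausible but is not an argument, and you give no indication of how to turn it into one. As written, the proof is incomplete.

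The paper avoids this difficulty by projecting in the \emph{other} direction. Rather than projecting $\gamma_0$ onto the perpendicular to $S$, project it orthogonally onto the line $L_S$ containing $S$. The defining property of a C-curve is that $\gamma_0\cup S$ separates both endpoints $p,q$ of $S$ from $\infty$; this forces $\pi(\gamma_0)$ to contain both $p$ and $q$ (if the perpendicular to $S$ at $p$ missed $\gamma_0$ entirely, it would connect a neighborhood of $p$ to $\infty$ in the complement of $\gamma_0\cup S$). Since $\pi(\gamma_0)$ is connected, it therefore covers all of $S$. The two disjoint $Q$-ends, each of length $w$, both lie in $S$, so $|S|\geq 2w$ and hence $\ell(\gamma_0)\geq |S|\geq 2w$. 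The identical argument handles $\gamma_1$, so no case split is needed. Your perpendicular projection discards exactly the piece of information --- the C-curve's enclosure of the endpoints of $S$ --- that makes the lemma a one-line computation once you project parallel to $S$ instead.
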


\begin{proof} 
The ends  of a C-tube are disjoint intervals on the 
same $Q$-segment $S$, so the length of $S$ is at least $2w$.
But both $P$-sides of the tube cover $S$ when projected orthogonally 
onto the line containing $S$, so  both $P$-sides have 
length  $ \geq |S| \geq 2w $.  The length of the tube is the 
minimum of these two path lengths, so is also $\geq 2 w$.
\end{proof} 

\begin{lemma} \label{S-tube lemma} 
The length  $L$ of a S-tube  is at least twice its width $w$. 
\end{lemma}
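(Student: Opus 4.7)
The plan is to follow the template of the C-tube case (Lemma \ref{C-tube lemma}), but now to exploit the additional geometric feature that each $P$-side of an S-tube crosses the central $Q$-segment $S$ once in the middle. Write the $P$-sides of the tube as $\gamma_0, \gamma_1$, with endpoints $a_0, b_0 \in S$ and $a_1, b_1 \in S$ respectively, and let $p_0, p_1 \in S$ denote the single crossing point of each $\gamma_i$ with $S$. As the tube parameter $t$ runs from $0$ to $1$, each of the three points $a_t, p_t, b_t$ traces a segment on $S$, giving the two $Q$-ends $[a_0, a_1]$ and $[b_0, b_1]$ of the tube, together with the middle crossing segment $[p_0, p_1]$; all three have length equal to the tube's width $w$.

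The first observation is that these three intervals are pairwise disjoint on $S$ (possibly sharing endpoints). This follows from the fact that an S-tube is a simple tube, so its two $Q$-ends already have disjoint interiors, combined with embeddedness of the tube inside the dissection: $[a_0, a_1]$ and $[p_0, p_1]$ are $Q$-sides of two pieces of the dissection that lie in the same halfplane cut out by $S$, so by the definition of a dissection they must be either equal or disjoint, and equality would force the tube to revisit a piece, contradicting embeddedness. The same reasoning handles $[b_0, b_1]$ and $[p_0, p_1]$.

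Next, the triangle inequality applied to the two sub-arcs of $\gamma_0$ separated by $p_0$ gives $\ell(\gamma_0) \geq |a_0 - p_0| + |p_0 - b_0|$. Because the defining property of an S-curve places $p_0$ between $a_0$ and $b_0$ on $S$, the right-hand side equals $|a_0 - b_0|$. The identical estimate holds for $\gamma_1$, so the lemma reduces to verifying $|a_0 - b_0| \geq 2w$ (and symmetrically $|a_1 - b_1| \geq 2w$), after which $\ell(T) = \min(\ell(\gamma_0), \ell(\gamma_1)) \geq 2w$.

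To establish $|a_0 - b_0| \geq 2w$, I would put coordinates on the line containing $S$ with $a_0 = 0$ and $a_1 = w$, and, after reflecting $S$ if necessary, assume $b_0 > 0$. Disjointness of $[p_0, p_1]$ from $[0, w]$ together with $p_0 \in [0, b_0]$ forces $p_0 \geq w$, and then a short case analysis over the four sign choices $p_1 = p_0 \pm w$ and $b_1 = b_0 \pm w$ forces $b_0 \geq 2w$ in every admissible configuration, once one also imposes the companion S-curve constraint $p_1 \in [a_1, b_1]$. The main (though routine) obstacle is keeping the subcases straight: every sign combination requires verifying both the pairwise disjointness of the three intervals on $S$ and the location of $p_1$ between $a_1$ and $b_1$, but each individual subcase reduces to an elementary inequality for the arrangement of three length-$w$ intervals on a line.
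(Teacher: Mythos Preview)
Your overall strategy---reduce to $\ell(\gamma_i)\geq |a_i-b_i|$ via the S-curve constraint and then bound $|a_i-b_i|$ combinatorially on $S$---is reasonable, but the argument has two genuine gaps.

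First, the justification you give for disjointness of $[a_0,a_1]$, $[p_0,p_1]$, $[b_0,b_1]$ does not work. These three intervals are $Q$-segments of the \emph{tube}, not $Q$-sides of dissection pieces, so the ``pieces of the dissection are either equal or disjoint'' reasoning does not apply; moreover, ``the tube revisits a piece'' is not a contradiction, since a return path is explicitly allowed to hit a $Q$-segment up to three times. (Disjointness is nevertheless true, but for a different reason: distinct parallel $P$-paths $\gamma_s,\gamma_t$ in a tube are disjoint in the plane, so no point can be simultaneously some $a_s$ and some $p_t$.)

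Second, and more seriously, disjointness of the three intervals together with $p_i\in[a_i,b_i]$ is \emph{not} enough to force both $|a_0-b_0|\geq 2w$ and $|a_1-b_1|\geq 2w$. Take $a_0=0$, $a_1=w$, $p_0=1.1w$, $p_1=2.1w$, $b_0=3.5w$, $b_1=2.5w$: the three length-$w$ intervals $[0,w]$, $[1.1w,2.1w]$, $[2.5w,3.5w]$ are pairwise disjoint, both S-curve constraints hold, yet $|a_1-b_1|=1.5w<2w$. Your ``symmetrically'' step fails because the normalization $a_1=w,\,b_0>0$ is not preserved under swapping the indices $0\leftrightarrow 1$. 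What rules out this configuration is an extra constraint you never invoke: since the $P$-sides $\gamma_0|_{\rm upper}$ and $\gamma_1|_{\rm upper}$ are disjoint arcs in the closed upper half-plane with endpoints on $S$, one pair of endpoints must be \emph{nested} inside the other; here $\{0,1.1w\}$ and $\{w,2.1w\}$ interlock, so the curves would have to cross. This nesting in each half-plane, together with the fact that the inner/outer roles switch between the halves, is exactly what the paper exploits via its decomposition into $U_1^i,U_1^o,U_2^i,U_2^o$: it shows the \emph{outer} $P$-boundary in one half has endpoints at least $2w$ apart, and that $\gamma_0$ and $\gamma_1$ each contain such an outer boundary. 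Your argument can be repaired by importing this nesting constraint, but once you do so it essentially becomes the paper's proof.
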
 

\begin{proof}
Split the S-tube 
into four sub-tubes as follows.
Each $P$-path in the tube is cut by a point
where it crosses $S$ and this cuts the 
tube into two sub-tubes that also have 
width $w$, say $U_1$, $U_2$, which meet 
end-to-end.  Each of these  are split into 
two thinner tubes by the central $P$-path $\gamma_{1/2}$, 
giving four sub-tubes called 
$U_1^i, U_1^o, U_2^i, U_2^o$, e.g., $U_1^i$ is the {\defit 
inner part} of $U_1$ and  its endpoints on $S$ separate 
the endpoints of the {\defit outer part} $U_1^o$ on $S$.
See Figure \ref{InnerOuter}.
Note that the length of the original S-tube is at 
least the minimum of the lengths of the two outer tubes
(since they each have a $P$-boundary contained in the 
$P$-boundary of the S-tube).

\begin{figure}[htb]
\centerline{
\includegraphics[height=2.0in]{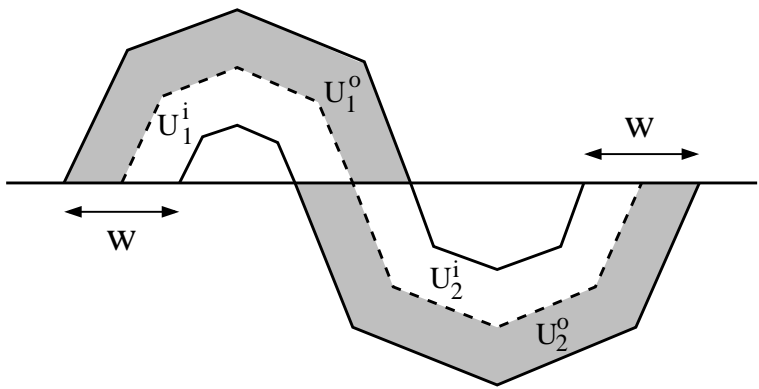}
 }
\caption{\label{InnerOuter}
The outer tubes of a S-tube are shaded.
} 
\end{figure}

However, the endpoints of  the outer  $P$-boundary of an 
outer part are separated by at least  distance $2w$, 
and each $P$-boundary of $S$ contains the outer 
$P$-boundary of one of its outer parts. Thus both 
$P$-boundaries have length at least $2w$.
%\begin{lemma} The length of $U_j^o$, $j=0,1$ is at least $\pi w/2$.
%\end{lemma}
 %Note that the inner part of 
%$U_1$ meets the outer part of $U_2$ along a common $Q$-segment. 
%Thus every $P$-path that enters the S-tube must eventually 
%cross  one of the two outer parts, $U_1^o$ or $U_2^o$.
%Each outer part is separated by the corresponding 
%inner part from thin part vertices whose angles sum to at least 
%$\pi$. Since the inner part has width $w/2$, the common  $P$-path
%of the inner and outer parts has length $ \geq \pi w /2$ (by 
%Lemma \ref{length lemma}) and the other 
%$P$-edge of the outer part has length at least $\pi w$.
%Thus the length of each outer part is at least $\pi w /2$. 
%Hence the length of the original $S$ tube must be at least 
% $ 2 w /\pi$.
\end{proof}

\begin{lemma} \label{G-tube lemma} 
The length $L$  of a simple G-tube is at least its width.
\end{lemma}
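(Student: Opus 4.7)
The plan is to show that each of the two $P$-boundaries $\gamma_0, \gamma_1$ has length at least $w$; since $\ell(T) = \min(\ell(\gamma_0), \ell(\gamma_1))$, this gives the lemma. Place $S$ along the $x$-axis and write the two $Q$-ends of $T$ as subsegments $E_1 = [P_1, P_1']$ and $E_2 = [P_2, P_2']$ of $S$, each of length $w$, with $E_1$ to the left of $E_2$. Because the tube is simple, these interiors are disjoint, so $P_1 < P_1' \le P_2 < P_2'$. In particular $P_2 - P_1 \ge w$ and $P_2' - P_1' \ge w$, although the middle gap $P_2 - P_1'$ may be $0$.

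The key step is to argue that the endpoints of $\gamma_0$ and $\gamma_1$ on $E_1 \cup E_2$ are paired ``in parallel'', namely $\{P_1, P_2\}$ and $\{P_1', P_2'\}$, rather than in the ``crossed'' fashion $\{P_1, P_2'\}$ and $\{P_1', P_2\}$. Since each of $\gamma_0, \gamma_1$ is a G-curve, it leaves $S$ into one half-plane at its first endpoint and returns to $S$ from the opposite half-plane at its second endpoint. A Jordan-curve argument then shows that with the crossed pairing the arcs $\gamma_0$ and $\gamma_1$ would necessarily meet somewhere in the plane. However, in a simple (non-spiraling) tube the $P$-boundaries are the outer arcs of an embedded topological rectangle swept out by pairwise disjoint parallel $P$-paths $\gamma_t$, so $\gamma_0$ and $\gamma_1$ cannot intersect except possibly at a shared corner of the $Q$-ends. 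Therefore the parallel pairing must hold.

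Once the parallel pairing is established, the Euclidean distance between the endpoints of $\gamma_0$ equals either $P_2 - P_1$ or $P_2' - P_1'$, each of which is $\ge w$; the same holds for $\gamma_1$. Since a rectifiable curve has length at least the chord joining its endpoints, we obtain $\ell(\gamma_0), \ell(\gamma_1) \ge w$, and the conclusion $\ell(T) \ge w$ follows. The only real obstacle is the topological step excluding the crossed pairing, which is precisely where the hypothesis that $T$ is simple (rather than a spiral, treated separately) is used; everything else is a one-line chord-length estimate on $S$.
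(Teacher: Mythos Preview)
Your argument is correct. The paper's proof is much shorter---essentially one sentence: orthogonally project each $P$-side $\gamma_i$ onto the line through $S$; the projection covers one of the $Q$-ends $I$ or $J$, so it has length at least $w$, whence $\ell(\gamma_i)\ge w$. You instead bound $\ell(\gamma_i)$ by the straight-line distance between its two endpoints on $S$, which is only guaranteed to be $\ge w$ once you have established the parallel pairing via your Jordan-curve argument.

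In fact the paper's assertion ``the projection covers $I$ or $J$'' and your parallel-pairing claim amount to the same thing: both say that the interval on $S$ spanned by the endpoints of $\gamma_i$ contains a full $Q$-end. So the two proofs have identical content; the difference is only that you justify this step explicitly while the paper states it without comment. Your version is longer but is more transparent about where the ``simple'' hypothesis is actually used, and your topological step (disjointness of $\gamma_0,\gamma_1$ in a simple tube forces the parallel pairing) is a legitimate way to fill in what the paper leaves to the reader.
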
 

\begin{proof}
Suppose $I$ and $J$ are the ends of the 
spiral. If we project either $P$-side of the 
tube orthogonally onto $S$, then it covers 
either $I$ or $J$, so both sides of the tube 
are longer than the tube is wide.
\end{proof}

Putting together the last three lemmas we get 

\begin{cor} \label{long tubes}
The length $L$ of a simple return tube is at least 
its width $w$. If the tube is not a $G$ tube, then
$L \geq 2 w$.
\end{cor}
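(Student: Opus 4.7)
The plan is essentially to assemble the three preceding lemmas and check that they exhaust the cases. By Lemma \ref{must cross} (or more precisely, by the classification of return paths given just before that lemma), any return path is of type C, S, G1, or G2, so any return tube falls into one of these four categories. A simple return tube is therefore either a C-tube, an S-tube, or a simple G-tube (G1 or G2 with disjoint ends).

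First I would dispatch the non-G cases. If the tube is a C-tube, Lemma \ref{C-tube lemma} immediately gives $L \geq 2w$. If the tube is an S-tube, Lemma \ref{S-tube lemma} gives $L \geq 2w$. This already establishes the second assertion of the corollary: when the tube is not a G-tube, $L \geq 2w \geq w$.

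For the remaining case, a simple G-tube (either G1 or G2 with disjoint ends), Lemma \ref{G-tube lemma} directly yields $L \geq w$, which is the first assertion in the corollary.

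There is no real obstacle here — the corollary is a bookkeeping consolidation of the three previous lemmas, and the only thing to verify is that the four return-path types from Lemma \ref{must cross} are exactly the situations handled (with C and S forced to be simple, and G allowed to be simple or spiral; the spiral subcase is explicitly excluded by the hypothesis of the corollary). So the proof is a short case split with a one-line citation for each case.
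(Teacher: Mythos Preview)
Your proposal is correct and matches the paper's approach exactly: the paper simply says ``Putting together the last three lemmas we get'' this corollary, and you have spelled out that consolidation case by case. There is nothing more to it.
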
 

The more interesting and difficult return regions 
are the spirals: G-tubes  where 
the $Q$-ends  $[x_0, x_1]$ and $[y_0, y_1]$ 
overlap but are not identical.

 Suppose  $S$ is a spiral return tube
and  the corners  are ordered on $S$ as 
$x_0 < y_0 < x_1 < y_1 $ and define $Q$-segments
 $I_0= [x_0, y_0]$ and $I_1 = [x_1, y_1]$.
See  Figure \ref{ManyReturns}.
There is a $P$-path in that starts at $y_0$ and ends 
at a point $z$ in $I_1$ and is composed of $P$-paths in the tube 
joined end-to-end where they cross $S$. 

If $z < y_1$, then  we can  remove  the  simple G-sub-tube 
with $Q$-end  $[z, y_1]$ as shown in Figure 
\ref{ManyReturns} (the dark gray tube). What is left
after removing this simple tube a {\defit pure 
spiral}, a union of parallel $P$-paths that each  
consist of $N$ $P$-paths from the original tube.
We call  $N$   the {\defit winding number}
 of the  pure spiral (in this notation a simple tube has
winding number one).

\begin{figure}[htb]
\centerline{
\includegraphics[height=2.5in]{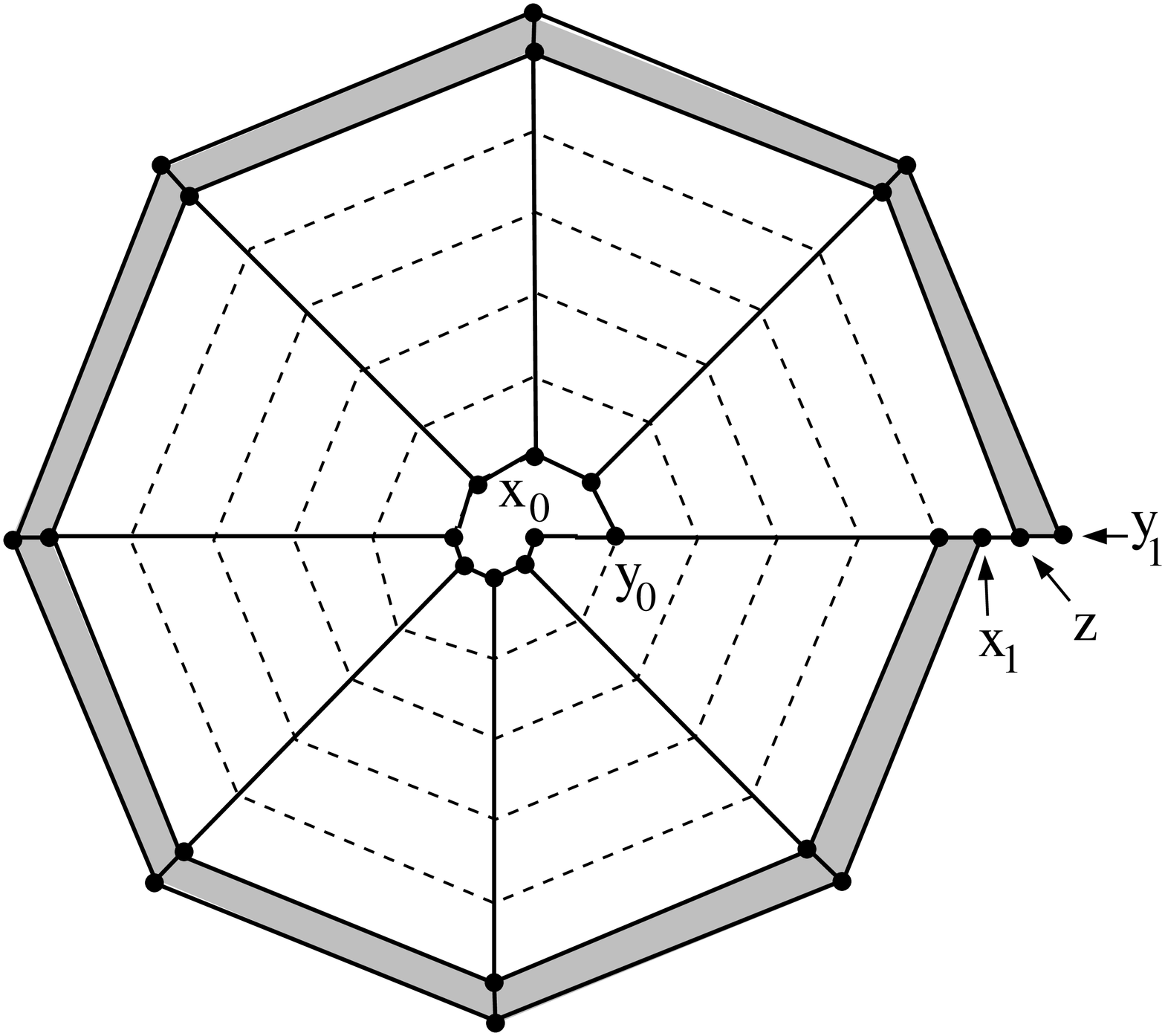}
 }
\caption{\label{ManyReturns}
A spiral can always be divided into a simple tube 
(shaded)
and a pure spiral. The pure spiral 
 can be thought of as many parallel simple 
G-tubes that don't cross $S$, or as one very long tube
that crosses $S$ multiple times. 
The pure spiral here has five windings.
}
\end{figure}

Since a pure spiral is made up of $N$ simple G-tubes
all with the same width joined end-to-end,  it  is clear
that the length of a spiral  with $N$ windings
should be at least $N$ 
times its width. However, we can  do better than this, 
since each turn of the spiral  is longer than the previous 
one. 

\begin{lemma} \label{pure spiral length}
A pure spiral with $N$ turns and width $w$ has length at least 
$ N^2 w$.
\end{lemma}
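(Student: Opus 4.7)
The plan is to peel the pure spiral $U$ into its $N$ successive simple G-tubes and add up a per-turn length lower bound growing linearly in $k$, so that the total length is at least $\sum_{k=1}^{N}(2k-1)w = N^{2}w$.

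Fix either $P$-boundary $\gamma$ of $U$. Its $N-1$ interior crossings with $S$, together with its two endpoints on $S$, partition $\gamma$ into arcs $\gamma_{1},\dots,\gamma_{N}$; correspondingly $U$ splits as a concatenation $U = T_{1}\cup\cdots\cup T_{N}$ of simple G-tubes of common width $w$, joined along $Q$-segments lying on $S$. The key structural input is that in a \emph{pure} spiral, as opposed to a disjoint union of simple G-tubes, these $T_{k}$ are nested in the plane: $T_{k+1}$ wraps around $T_{1}\cup\cdots\cup T_{k}$, so the spiral is pushed one full tube-width $w$ further away from $S$ with each new turn.

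Granted this nesting, I would iterate the projection argument behind Lemma~\ref{G-tube lemma} to show that the union of the inner $k-1$ turns has extent at least $(k-1)w$ both transverse to $S$ and along $S$. Since $\gamma_{k}$ must wrap around this region while starting and ending on $S$, one obtains $\ell(\gamma_{k}) \geq (2k-1)w$: a contribution of at least $2(k-1)w$ from the transverse out-and-back excursion, plus at least $w$ more from the displacement along $S$ between the two endpoints of $\gamma_{k}$, which necessarily lie on opposite sides of the footprint of the inner turns. Summing over $k$ gives $\ell(\gamma)\geq N^{2}w$, and since the same bound applies to the other $P$-boundary, $\ell(U) = \min(\ell(\gamma_{0}),\ell(\gamma_{1})) \geq N^{2}w$. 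The case $N=1$ is exactly Lemma~\ref{G-tube lemma}.

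The main obstacle is making the nesting statement quantitative. Because $w$ is measured along $Q$-segments, which are in general not perpendicular to $S$, it is not immediate that each turn gains a full $w$ in extent transverse to $S$ or along $S$. I expect to prove this by exploiting the identification along the $Q$-sides shared between consecutive tubes $T_{k}$ and $T_{k+1}$: the shared $Q$-side serves as a rigid ruler of length $w$, propagating the width $w$ of the tube into the required $w$-shift of both the inner and outer envelopes of $T_{k+1}$ relative to $T_{k}$. With this bookkeeping in place, the per-turn estimate and the resulting quadratic summation fall out directly from the projection arguments already used to handle simple G-tubes.
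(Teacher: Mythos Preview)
Your overall strategy matches the paper's exactly: cut the spiral along $S$ into $N$ simple G-tubes, show the $k$-th has $P$-boundary length at least $(2k-1)w$, and sum to get $N^2 w$. The gap is in the per-turn estimate.

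Two specific problems. First, the two endpoints of $\gamma_k$ do \emph{not} lie on opposite sides of the footprint of the inner turns. In a pure spiral the once-around shift along $S$ equals the width $w$, so the successive crossings of either $P$-boundary with $S$ are the consecutive points $x_0, x_0+w, x_0+2w,\dots$, and $\gamma_k$ runs between the \emph{adjacent} points $x_0+(k-1)w$ and $x_0+kw$. Second, even if the inner region did have transverse extent $(k-1)w$, transverse and longitudinal displacements are not additive for arc length; your own worry about $Q$-segments not being perpendicular to $S$ is a symptom of this, and the ``rigid ruler'' idea does not resolve it.

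The paper avoids transverse estimates entirely by working only with orthogonal projection onto the line $L$ containing $S$. For $k\geq 2$ the closed curve $\gamma_k \cup [x_0+(k-1)w,\,x_0+kw]$ encloses the innermost arc $\gamma_1$ and hence the point $x_0$, so $\gamma_k$ must cross $L$ at some point beyond $x_0$ (on the side away from $y_1$). Splitting $\gamma_k$ at such a crossing into two sub-arcs and projecting each onto $L$ gives lengths at least $(k-1)w$ and $kw$ respectively; these bound disjoint sub-arcs of $\gamma_k$, so they add to $(2k-1)w$. The paper phrases this as: the parts of $\gamma_k$ in the two half-planes determined by $L$ project onto $L$ with lengths at least $k-1$ and $k$ (in units of $w$). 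Either way, no perpendicular-distance bookkeeping is needed.
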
 

\begin{proof}
  Without loss of generality we may scale the spiral so the
  width $w=1$.
  Use  the segment
   $[x_0, y_1]$ to cut the entire spiral into $N$ simple G-tubes.
  The first has length at least $w$, because both $P$-boundaries 
  project orthogonally onto one of the $Q$-ends.  In general, 
  both the parts of the  $P$-boundary paths of the $j$th sub-tube 
  have length at least $2j-1$. To see this, consider 
  the curves in each half-plane defined by 
   the  line $L$  through $x_0, y_1$, and project orthogonally onto
   $L$.  The part in one half-plane projects to a segment of length
    at least $j-1$ and the other to a segment of length 
    at least  $j$. See Figure \ref{LengthEst}.  Hence
  the $j$th tube has length at least $2j-1$. 
  Summing $1+3 +5+ \dots + (2N-1) = N^2 $  gives the result.
\end{proof} 

\begin{figure}[htb]
\centerline{
\includegraphics[height=1.0in]{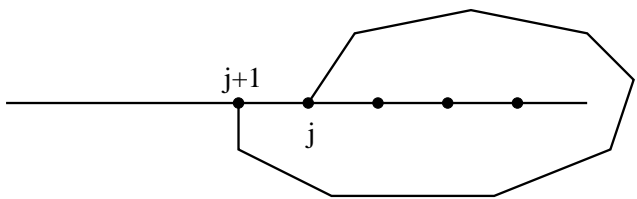}
 }
\caption{\label{LengthEst}
Estimating the length of a G-curve.
} 
\end{figure}

The argument proving  Lemma \ref{pure spiral length} will be used again 
in Section \ref{dyadic merging}.  
Next we slightly refine Lemma \ref{must cross} 

\begin{lemma} \label{must cross 2}
Suppose $n$ is the number of isosceles pieces in 
an isosceles dissection.
Every $P$-path $\gamma$  with $3n+1$ segments
 contains a  sub-path that begins and ends on 
the $Q$-end of a chain and 
is a return path of one of the four types described above
(S, C, G1, G2).
\end{lemma}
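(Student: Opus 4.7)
The plan is to push the pigeonhole step in the proof of Lemma \ref{must cross} one level further. With $3n+1$ $P$-segments of $\gamma$ distributed among the $n$ isosceles pieces of the dissection, some piece $T$ must contain at least four of these $P$-segments. Since every $P$-segment in an isosceles piece is parallel to the base with its two endpoints on the piece's two distinct $Q$-sides, and since distinct $P$-segments in a single piece are parallel and disjoint, both $Q$-sides of $T$ are crossed by $\gamma$ at least four times.

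I then split into two cases. If at least one $Q$-side $S$ of $T$ is already a $Q$-end of the chain containing $T$ (equivalently, $S$ is not shared identically with a neighbor of $T$), then $S$ is crossed at least three times and the argument in the proof of Lemma \ref{must cross}, applied to $S$, extracts a subpath of $\gamma$ between the first and third crossings of $S$ which is a return path of one of the four types with both endpoints on $S$; since $S$ is a $Q$-end of a chain, this gives the desired conclusion. The remaining case is that both $Q$-sides of $T$ are internal to chains, so $T$ is an interior piece of some chain $C$. The key structural observation is that within a single visit of $\gamma$ to $C$ (a maximal consecutive run of $P$-segments of $\gamma$ in pieces of $C$) the path passes through each piece of $C$, and in particular through $T$, at most once. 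This is because a $P$-segment in an isosceles piece must connect its two distinct $Q$-sides, so the path cannot reverse while staying in one piece; and because each internal $Q$-side of $C$ is shared only by the two pieces of $C$ it separates (dissection interiors are disjoint), so $\gamma$ cannot slip out of and back into $C$ in the middle of a visit. Together these constraints force $\gamma$ to traverse the linearly ordered pieces of $C$ monotonically within any single visit.

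Consequently the four $P$-segments of $\gamma$ in $T$ lie in four distinct visits of $\gamma$ to $C$. Each such visit contributes two $Q$-end crossings of $C$ (its entry and exit) except for at most two visits containing an endpoint of $\gamma$, for a total of at least $2\cdot 4-2=6$ crossings distributed over the two $Q$-ends of $C$; so by pigeonhole some $Q$-end $S'$ of $C$ is crossed at least three times, and the argument of Lemma \ref{must cross} applied to $S'$ delivers the required return subpath with endpoints on $S'$. The one genuinely new step beyond Lemma \ref{must cross} is the monotonicity-within-a-visit observation, which is the main technical point; it relies crucially on the identical-sharing definition of a chain. Everything else is pigeonhole plus a direct appeal to the earlier lemma.
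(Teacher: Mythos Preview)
Your approach is quite different from the paper's, and the monotonicity-within-a-visit observation is both correct and a nice structural fact about chains. The paper instead applies Lemma~\ref{must cross} to the first $2n+1$ segments of $\gamma$ to obtain a return path of one of the four types whose endpoints lie on some $Q$-side $S$; if $S$ is not already a $Q$-end, it then slides those endpoints along the chain containing $S$ (by deleting or appending $P$-segments of $\gamma$, as in Figure~\ref{EndChains}) until they land on a $Q$-end, using the remaining $n$ segments of $\gamma$ as the budget for this extension.

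There is, however, a gap in your final step, and it occurs in both cases. Having found a $Q$-end $S'$ that $\gamma$ touches at least three times, you assert that the subpath $\sigma$ between the first and third touches is a return path with endpoints on $S'$. But the definition of return path requires that every $Q$-side be hit at most three times, and nothing prevents $\sigma$ from hitting some \emph{other} $Q$-side four or more times. If you then invoke the reduction step in the proof of Lemma~\ref{must cross} (passing to a shorter subpath to enforce the ``at most three'' condition, or extracting a $C$- or $G2$-subcurve), you lose control over where the endpoints land: the shorter subpath need not begin and end on a $Q$-end at all. Your own monotonicity lemma actually closes this gap cleanly. Take instead a \emph{minimal} subpath $\tau$ of $\gamma$ whose two endpoints lie on one and the same $Q$-end (such $\tau$ exists by your pigeonhole-and-visits argument; in fact two touches on a $Q$-end already suffice). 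Minimality forces $\tau$ to meet its own $Q$-end only at its endpoints and every other $Q$-end at most once. And if $\tau$ hit some \emph{interior} $Q$-side $S''$ of a chain $C''$ four or more times, monotonicity would yield at least four visits of $\tau$ through $C''$, hence at least two complete visits, hence at least two touches on each $Q$-end of $C''$ --- contradicting what we just said. Thus $\tau$ is a genuine return path (necessarily of type $C$ or $G2$) with endpoints on a $Q$-end, as required.
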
 

\begin{proof}
Apply Lemma \ref{must cross} to the initial path 
of $2n+1$ steps, to get a return path of one 
of the four types. If the path already begins and
ends on the $Q$-end of a chain there is nothing to 
do. If it begins and ends at an interior 
$Q$-segment of the chain then by deleting or 
extending the paths as shown in Figure \ref{EndChains}
we can obtain a path that begins and ends 
on the $Q$-end of the chain. The number of steps 
added is less than  $n$, so the new path 
must be a sub-path of $\gamma$.
\end{proof} 

\begin{figure}[htb]
\centerline{
\includegraphics[height=2.5in]{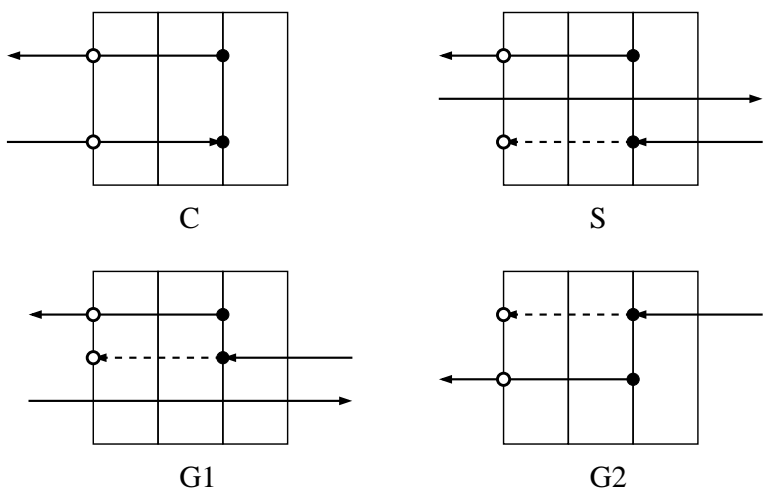}
 }
\caption{\label{EndChains}
A return path ending inside a chain can 
easily be modified to start and finish  on 
the $Q$-end of the chain (the extensions 
are shown as dashed lines and follow 
$P$-paths). Black dots are the original 
endpoints of the path an the white dots
are the modified endpoints. 
} 
\end{figure}

We say that a return region is {\defit standard} if 
is of one of the types (C, S, G1, G2) discussed 
above and if it begins and ends on segments that 
are  the $Q$-ends of some chains (possibly the same 
chain or  two different chains). 
The following is one of the key estimates of this paper.

\begin{lemma} \label{return regions lemma} 
If $\Omega$ has an isosceles dissection into $n$ pieces with $M$
chains, then 
there are $O(M)$ standard return regions 
with disjoint interiors  so that
any $P$-path with more than $5n+1$ segments  must 
hit one of the regions.
%has a sub-path that crosses one of these return regions. 
\end{lemma}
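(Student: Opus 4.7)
The plan is to construct the required $O(M)$ disjoint standard return regions by identifying the maximal-width return tubes emanating from the $Q$-ends of the $M$ chains. By Lemma \ref{must cross 2}, every $P$-path with more than $3n+1$ segments contains a sub-path that is a standard return path beginning and ending on $Q$-ends of some chains. It therefore suffices to select a collection of disjoint standard return regions that catches every standard return sub-path, with a modest margin (the extra $2n$ segments in the statement) to absorb the initial and final portions of the long $P$-path that may lie outside the selected regions before entering one.

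For each of the at most $2M$ $Q$-ends, I would parametrize the first-return behavior of $P$-paths starting on that $Q$-end. Within any chain, all $P$-paths are parallel and the side-to-side map is affine (by the same convexity argument as in Lemma \ref{split quad}), so the combinatorial type of a return path can change only when the $P$-path passes through a vertex where distinct chains meet. This partitions the $Q$-end into sub-intervals, each corresponding to a maximal-width tube of parallel return paths of uniform combinatorial type. After passing to a sub-path if necessary, each such tube yields a standard return region of type C, S, G1, or G2, or else is a non-returning family that can be discarded.

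The key counting step is to show the total number of such maximal-width tubes is $O(M)$. Each $P$-boundary of such a tube must be bounded on each side either by $\partial \Omega$ or by passing through the endpoint of some $Q$-end at a chain-junction vertex (otherwise the tube could be widened). Since the $M$ chains contribute at most $4M$ such $Q$-end endpoints in total, and each endpoint can bound only $O(1)$ distinct maximal-width tubes, the $O(M)$ bound follows. To enforce pairwise disjoint interiors, I would then select a sub-family greedily, processing candidate tubes in decreasing order of width and keeping a tube only if it does not overlap any previously kept tube; any discarded tube is properly absorbed by a wider kept tube of the same family, so coverage of every standard return path is preserved.

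The main obstacle will be the $O(M)$ count: one must rigorously verify that each chain-junction vertex really does bound only a bounded number of distinct maximal-width return tubes, and that the greedy selection preserves the covering property of Step 2 when the original tubes are replaced by their disjoint ``envelopes''. Once these combinatorial facts are checked, the passage from $3n+1$ to $5n+1$ in the bound on segment count is a straightforward accounting for the segments that may occur before the return sub-path fully enters the interior of a selected disjoint tube and after it leaves.
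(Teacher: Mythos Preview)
Your overall architecture is right and matches the paper's: build a family of maximal-width standard return tubes, then greedily process them by decreasing width to obtain disjointness.  The crucial difference is the organizing principle for the $O(M)$ count.  You index the candidate tubes by the $2M$ $Q$-ends of chains and then try to bound the number of sub-intervals (combinatorial types) per $Q$-end via the claim that each chain-junction vertex bounds only $O(1)$ maximal-width tubes.  As you yourself flag, this is the weak point: a single such vertex can lie on the $P$-boundary of many combinatorially distinct return tubes winding through the dissection in different ways, and no $O(1)$ bound is evident.  The paper sidesteps this entirely by indexing instead by the $2M$ \emph{$P$-boundaries} of the chains: to each chain $P$-boundary that lies in the $P$-boundary of some standard return region, it simply associates one maximal-width such region.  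That gives at most $2M$ regions immediately, with no further counting.  Coverage is then argued as follows: any maximal-width standard return tube $T$ must, by maximality, contain the full $P$-boundary of some chain $C$ in its own $P$-boundary; the region $T'$ chosen for that chain is at least as wide as $T$, so $T\cap C\subset T'\cap C$ and every $P$-path crossing $T$ hits $T'$.  The extra $2n$ segments (from $3n+1$ to $5n+1$) are then exactly what is needed to extend a path that merely \emph{hits} $T'$ into one that \emph{crosses} $T'$.

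Your disjointification step also needs adjustment.  Discarding an overlapping narrower tube entirely does not preserve coverage, since two overlapping tubes need not be nested and ``of the same family'' is not a well-defined relation here.  The paper instead \emph{trims}: when an unprotected region $R_k$ overlaps a protected $R_j$, it removes from $R_k$ only those parallel $P$-paths that hit $R_j$, leaving a (possibly empty) connected sub-region $R_k'$, and re-sorts.  The covering property survives because any $P$-path in the deleted portion of $R_k$ already hits the protected $R_j$.
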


\begin{proof}
Each chain in the dissection has two $P$-boundaries. 
Each of these $P$-boundaries may or may not be part of 
the $P$-boundary of a standard return region.  If it
is, then associate to the $P$-boundary of the chain 
the maximal width standard return region that contains 
the $P$-boundary of the chain in its own boundary. 
Note that at most $2M$ return regions can be 
selected in this way, since there are $M$ chains 
and each has two $P$-boundaries.

We claim that any $P$-path  $\gamma$ in the dissection with 
$5n+1$ steps contains a sub-path that crosses one 
of the selected  return regions.  
Let $\gamma'$ be the path obtained by deleting $n$ 
steps from each end of $\gamma$.
By Lemma \ref{must cross 2}, 
 $\gamma'$  contains a sub-path 
$\gamma''$ that is a return path of one of the four standard 
types and which begins and terminates on the $Q$-end of a 
chain. Thus the set of paths parallel to $\gamma''$ forms 
a  standard return tube $T$ of maximal width.
If $T$ is one of the chosen return regions, then we 
are done since $\gamma'' \subset \gamma$ crosses $T$.

On the other hand, suppose $T$ is not one of the 
chosen regions.  Since $T$ is maximal, it contains 
the $P$-boundary of some chain  $C$ within its own 
$P$-boundary. Since $T$ was not chosen, 
there  must be another return 
region $T'$, at least as wide as $T$, that was chosen and 
$T\cap C \subset T'\cap C$. Thus every 
path crossing $T$ hits $T'$. Thus $\gamma''$ hits $T'$.
 If we add $n$ steps to both ends 
of $\gamma''$, the new, longer path must now cross 
$T'$, but it is still a sub-path of $\gamma$. 
Thus $\gamma$ crosses some return region in the
chosen  collection.  This proves the claim.

The collection of  maximal width 
return regions defined above may overlap. 
To get disjointness, we order the chosen regions 
$R_1, \dots , R_m$, $m=O(M)$,  from widest to narrowest 
and  label the first region ``protected'' and 
label the remainder ``unprotected''. At each stage we 
look at the first unprotected region $R_k$ in 
the current list and 
see if there are any $P$-paths  strictly crossing it that 
intersect a protected region (anything earlier in 
the list). If there is no such path, then label 
$R_k$  protected, and move to the next region. 

If there is a  $P$-path  $\gamma$ strictly crossing 
$R_k$  that hits a protected region $R_j$, then
remove from $R_k$  any  $P$-paths  that hit 
$R_j$.
 Since   $R_j$ 
is at least as wide as $R_k$, removing these paths 
gives a connected return region $R_k' \subset R_k$
(possibly empty).  Now re-sort 
the list by width.  $R_k'$ either stays where it is or 
moves later in the list; the protected regions 
all stay where they are. Since two regions will never 
overlap after the first time they were compared, this 
process stops  after at most $m^2$ steps, and gives
a collection of return regions with disjoint 
interiors.  Moreover, once a region is protected, 
it is never modified again and is part of the 
final collection.

Finally, we have check that every long enough $P$-path 
hits one of the disjoint regions.
If $\gamma$ is any path with $5n+1$ segments then it 
crossed some region in the original list. Suppose 
 $R_\gamma$ 
is the first region on the  original sorted 
list that is crossed by $\gamma$. 
If a part of $R_\gamma$ containing
$\gamma$  is deleted in the construction, 
then $\gamma$ must hit a  protected return region.
Thus $\gamma$ hits a 
return region in the final collection, as desired.
\end{proof} 

This lemma is used in the proofs of both Theorems 
\ref{NonObtuse} and \ref{Triangles}. 
In both cases we will reduce to meshing a region 
$\Omega$ that has an isosceles dissection. We will 
propagate the non-conforming vertices until they 
either hit the boundary of $\Omega$ or hit the 
$Q$-end of a return region. By Lemma \ref{return regions lemma}, 
one of these two options must occur with $O(n)$
steps. 
The  proofs of 
the two theorems differ mostly in how we construct
$\Omega$ and  how  we mesh inside the return regions.

%----------------------------------------------------------
\section{Proof of Theorem \ref{Triangles}:   reduction 
to  a meshing lemma  } \label{ideal} 

%We have now completed the common background for Theorems
%\ref{NonObtuse} and \ref{Triangles}. 
 This is the first of four sections that construct the 
 almost nonobtuse triangulation in Theorem 
 \ref{Triangles}.
 Unlike the proofs of Theorems \ref{NonObtuse} and 
 \ref{Refine Triangulation}, the proof of Theorem
\ref{Triangles} does not make use of  
Theorem \ref{BMR}
to create the triangulation; we shall construct 
the triangulation directly. However,
 we will  use the result of Bern, Mitchell and Ruppert 
\cite{BMR95} that
any simply $n$-gon has a nonobtuse triangulation
with $O(n)$ triangles.  
We will also  make use of return regions and bending paths, 
both ideas we shall use again in the proof of 
Theorem \ref{NonObtuse}.

  As in the proof of Theorem \ref{Refine Triangulation},
  we start with a PSLG that is a triangulation.
   In  that earlier proof we divided each triangle $T$ into 
   a central triangle and three isosceles triangles. Here 
  we will replace the single central triangle by a 
  simple polygon that approximates a triangle with 
  circular edges.
% The remainder of the triangle $T$ is divided
%  into a large number of isosceles triangles that all have 
%   small angles. 
%  Taking the union of these isosceles triangles over all 
%  $T$ in the original triangulation gives 
%    the region $\Omega$ (which clearly has 
%   a $\theta$-nice isosceles dissection with $O(n)$ chains 
%   and $O(n/\theta)$ pieces). 
%In the next two sections we use our earlier results on 
%return regions to triangulate  $\Omega$.

Given a triangle $T$ with 
vertices $v_1, v_2, v_3$, let $C$ be
the inscribed circle and let  $z_1, z_2, z_3$ be  the three points 
where this circle is tangent to the triangle (numbered
so that $z_k$ lies on the side of $T$ opposite $v_k$).
Any pair of the $z$'s 
 are equidistant from one of the vertices of $T$ and 
hence are connected by a circular arc  centered at this vertex.
This defines a  central region bounded by three 
circular arcs, each 
pair of arcs tangent where they meet (see the shaded area on the left
of Figure \ref{AlmostCenter8}). We will replace each of these
circular arcs by a polygonal path inscribed in the arc.
For example,
 let $\gamma_1$ be a polygonal arc inscribed in the 
circular arc connecting $z_2$ and $ z_3$.
If $\gamma_1$ consists of $m$ equal length segments, 
then the angle  subtended from $v_1$  by these segments
 is less than $ \pi/m$,
 (since $\gamma_1$  subtends at most angle $\pi$).
If we then connect the vertices of $\gamma_1$
to  $v_1$ we obtain a chain of isosceles triangles 
all with angle $\leq \theta =  \pi/m$.
See the right side of Figure \ref{AlmostCenter8}. 
  Taking the union of these isosceles triangles over all 
  $T$ in the original triangulation gives 
    the region $\Omega$, that clearly has 
   a $\theta$-nice isosceles dissection with $O(n)$ chains 
   and $O(n/\theta)$ pieces. 

\begin{figure}[htb]
\centerline{
\includegraphics[height=2in]{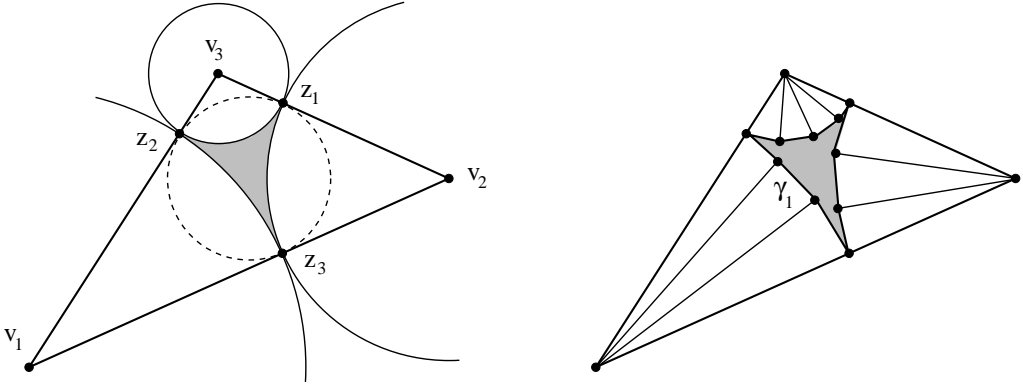}
 }
\caption{\label{AlmostCenter8}
We   define 
a simple polygon by inscribing polygon arcs on 
circular arcs as shown above. If we use $m$ 
evenly points on each arc then remaining region 
clearly has a $\theta$-isosceles dissection 
for $\theta = \pi/m$. 
% It remains to show the 
%central polygon can be  triangulated using 
%$O(\theta^{-2})$  triangles and no angles larger 
%than $90^\circ + \theta$.
}
\end{figure}

\begin{lemma} \label{mesh exists}
Suppose $\Omega$ is a region that has 
a  $\theta$-nice isosceles dissection
(both triangles and quadrilaterals are 
allowed).
Assume that the dissection has $O(n)$ chains and 
$O(n/\theta)$ pieces.
%each chain consists of at most 
% $O(1/\theta)$ dissection pieces.
Then there is a mesh of $\Omega$  using $O(n^2/\theta^2)$
$\theta$-nice quadrilaterals and triangles. 
Moreover, each dissection piece  $T$
contains  at most $O(n/\theta)$  mesh elements 
and every mesh element $Q$ 
 contained in $T$
%(whether a quadrilateral or triangle) 
is bounded by at most two sub-segments
of the $Q$-sides of $T$ (possibly points) and 
at most two  
$\theta$-paths in $T$ (possibly the vertex of $T$, if $T$ 
is a triangle).
% (recall a $P$-path is a special case of a $\theta$-path).  
\end{lemma}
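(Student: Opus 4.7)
The plan is to propagate each non-conforming vertex of the given isosceles dissection along a $P$-path (bent into a $\theta$-path only where forced by the geometry), using the return regions of Lemma \ref{return regions lemma} to guarantee that every such path terminates after at most $O(n/\theta)$ segments. Since the dissection has $O(n)$ chains and $O(n/\theta)$ pieces, Lemma \ref{return regions lemma} (applied with the number of pieces playing the role of ``$n$'' there) provides a family $\mathcal R$ of $O(n)$ disjoint standard return regions such that every $P$-path with more than $O(n/\theta)$ segments crosses some element of $\mathcal R$.

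I would first pre-mesh each $R \in \mathcal R$ into $\theta$-nice quadrilaterals and triangles with the property that a propagation path arriving at a $Q$-end of $R$ has a $\theta$-bent continuation already provided by the pre-mesh which exits through either $\partial \Omega$ or another $Q$-end of $R$. For C-, S-, and simple G-tubes this is straightforward because the tube length is at least its width (Lemmas \ref{C-tube lemma}--\ref{G-tube lemma}), so one can cut the tube with parallel $\theta$-paths spaced roughly $w_R$ apart, producing $O(\ell_R/w_R)$ mesh elements. Spirals require the additional decomposition preceding Lemma \ref{pure spiral length} into a simple G-tube and a pure spiral, and the estimate $\ell_R \geq N^2 w_R$ from that lemma is crucial for keeping the total contribution of the pre-meshes inside $O(n^2/\theta^2)$.

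Next, for each non-conforming vertex of the dissection I propagate one path; whenever the path first enters a return region $R$ I continue it along the $\theta$-bent continuation prescribed by the pre-mesh of $R$. By Lemma \ref{return regions lemma} and the bound on the number of pieces, this procedure terminates within $O(n/\theta)$ segments. Because the dissection has at most $O(1)$ non-conforming vertices per piece, it has $O(n/\theta)$ non-conforming vertices altogether, and therefore the total number of propagation segments is $O(n^2/\theta^2)$. Running the construction with the half-budget $\theta/2$ in place of $\theta$ lets me apply Lemma \ref{split quad} and its analog for triangles with a marked vertex: each $P$-cut of a $\theta/2$-nice piece yields $\theta/2$-nice sub-pieces, and each $\theta/2$-bent cut yields $\theta$-nice sub-pieces, so the final angle bound is met. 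The per-piece bound and the structural statement about each mesh element are then immediate: each piece is cut by at most $O(n/\theta)$ paths (one per non-conforming vertex), and each mesh element lies between two consecutive such cuts, so its boundary consists of at most two sub-segments of the $Q$-sides and at most two $\theta$-paths.

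The main obstacle is the spiral case: a naive parallel-cut pre-mesh of a pure spiral with $N$ windings already contains on the order of $N^2$ cells, and it is not obvious a priori that summing this over all spirals fits inside the $O(n^2/\theta^2)$ budget. The resolution lies in the fact that $N^2 w_R \leq \ell_R$, combined with the disjointness of the return regions and the bound $O(n/\theta)$ on the total number of pieces; once this counting is set up, everything else is the routine bookkeeping sketched above.
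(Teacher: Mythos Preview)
Your outline tracks the paper's strategy outside the return regions, but there is a genuine gap in the spiral case, and the fix you suggest does not work.

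You correctly flag spirals as the obstacle, then write that ``the resolution lies in the fact that $N^2 w_R \leq \ell_R$, combined with the disjointness of the return regions and the bound $O(n/\theta)$ on the total number of pieces.'' None of these ingredients bounds $N$. A pure spiral with winding number $N$ passes through a fixed set of at most $p \leq O(n/\theta)$ dissection pieces, but $N$ itself is not controlled by $n$ or $\theta$: the widths of the dissection pieces are arbitrary, so a spiral of width $w_R$ can wind as many times as $w(T)/w_R$ allows, and this ratio is unbounded. The inequality $N^2 w_R \leq \ell_R$ is a \emph{lower} bound on the spiral's length, not an upper bound you can sum. Consequently any pre-mesh whose cell count grows with $N$ (whether like $N$, $Np$, or $N^2$) cannot be absorbed into an $O(n^2/\theta^2)$ budget.

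The paper's actual argument is essentially different for large spirals. When $N = O(1/\theta)$ one does propagate the $O(1/\theta)$ sub-tube corners through all $N$ windings, costing $O(Np/\theta)=O(p/\theta^2)$ cells. But when $N \gtrsim 1/\theta$, the key observation is that after $O(1/\theta)$ windings a $\theta$-bent path has accumulated displacement $\geq w_R$ and can therefore be closed into a loop. One places such a closed $\theta$-loop near the inner end and another near the outer end; all incoming propagation paths are bent to terminate on vertices of these loops, and the annular region between the two loops carries no propagation paths at all (it contributes only one cell per dissection piece). This makes the cell count in each spiral $O(p/\theta^2)$, \emph{independent of $N$}, and summing over the $O(n)$ return regions gives $O(n^2/\theta^2)$. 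This closed-loop idea is the missing step in your proposal.

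A secondary issue: your termination claim (``exits through\ldots another $Q$-end of $R$'') is not termination. If a path merely crosses a return region and exits, it can re-enter and circulate indefinitely; Lemma~\ref{return regions lemma} only guarantees that long paths \emph{hit} a return region, not that they stop there. The paper resolves this by fixing $O(1/\theta)$ evenly spaced anchor points on each $Q$-end of each region, bending every entering path to land on a far-end anchor (this is where the subdivision into $O(1/\theta)$ narrow parallel sub-tubes and Lemma~\ref{conn corners} are used), and separately propagating the anchors themselves once through the exterior. Your ``parallel $\theta$-paths spaced roughly $w_R$ apart'' should be replaced by this picture to make the argument go through even for simple tubes.
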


We will prove this in the next three sections.
We can deduce Theorem \ref{Triangles} from 
Lemma \ref{mesh exists} as follows. 
By a result of Bern, Mitchell and Ruppert, each 
central polygon has a nonobtuse triangulation with 
at most $O(1/\theta)$ triangles. This triangulation 
may place extra vertices on the edges of the
central polygon, but not more than $O(1/\theta)$
vertices in total. 
Each such edge  $e$ is the base of one of the 
isosceles triangles $T$ in the dissection, and
we connect the extra vertex on $e$ to the opposite
vertex of $T$ by a $Q$-segment $S$.
See Figure \ref{ExtraVertices}. 
 
\begin{figure}[htb]
\centerline{
\includegraphics[height=2in]{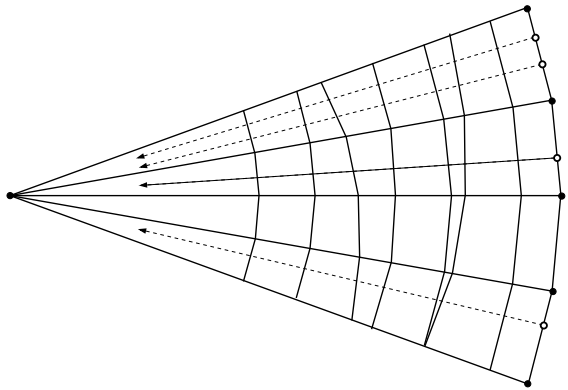}
 }
\caption{\label{ExtraVertices}
Each isosceles triangles $T$ in the
dissection of $\Omega$ is meshed by at most 
$O(n/\theta)$ quadrilaterals and triangles 
(here a chain of four triangles is shown).
Connecting ``extra'' vertices (white dots) on the  base
of $T$  to the opposite vertex thus creates 
at most $O(n/\theta)$ extra mesh pieces per 
dissection triangle. 
}
\end{figure}

 This creates a 
new $\theta$-nice quadrilateral or triangle for 
each  $\theta$-path that $S$ crosses. 
Since there are at most $O(n/\theta)$ such 
paths per piece, each extra vertex on $e$ creates 
at most $O(n/\theta)$ new elements of the mesh. 
Since there are $O(n)$ central polygons and 
each has at most $O(1/\theta)$ extra boundary points 
coming from its nonobtuse triangulation, at most 
$O( n^2/\theta^2)$ extra pieces are  created overall.  

As the final step, we add diagonals to the 
quadrilateral pieces  of the mesh, getting 
a triangulation. Since all the quadrilaterals 
are $\theta$-nice, the resulting triangles have 
maximum angle $90^\circ + \theta$, which proves 
Theorem \ref{Triangles}.

Our application of Lemma \ref{mesh exists} to 
Theorem \ref{Triangles} only needs to apply to 
dissections consisting entirely of triangles, 
but has been stated for more general isosceles dissections 
which may use both triangles and  
quadrilaterals. The extra generality does not 
lengthen the proof at all, but it is useful for 
the  application to optimal quad-meshing given 
in \cite{Bishop-quadPSLG}. That application involves
an isosceles dissection that uses only 
trapezoid pieces; the precise variant of Lemma 
\ref{mesh exists} that is needed in that paper 
will be stated and proved in 
Section \ref{quad mesh lemma sec}.

%------------------------------------------------------
\section{Proof of Lemma \ref{mesh exists}: outside the 
return regions}
%{Almost nonobtuse triangulation of the tubes}
 \label{ proof outside}

We continue with  the proof of Theorem \ref{Triangles}, 
by starting the proof of Lemma \ref{mesh exists}.
%We have a PSLG that we may assume is a triangulation 
%$\{ T_k\}_1^n$, 
%and we let $\Omega'$ be the region covered by this 
%triangulation.
%Fix $\theta >0$.
%Remove polygons from each triangle as in  the 
%previous section and let $\Omega 
% \subset \Omega'$ be the remaining region. 
%We showed above that  $\Omega$ has  a
%$ \theta$-isosceles dissection with $O(n/\theta)$ elements
%and $O(n)$ chains, and reduce the proof of 
%Theorem \ref{Triangles} to proving that $\Omega$
%has a $\theta$-nice mesh with certain 
%properties.
 In this section we will 
mesh the  part of  $\Omega$  that is outside 
the return regions.

Let $\{ R_k\}_1^N$ be the   disjoint return 
regions for $\Omega$ given by Lemma \ref{return regions lemma}.
Since there are $O(n)$ chains 
there are  $ O(n)$ return regions.  
For each triangle  $T_k$, 
and each of the three   vertices of $T_k'$ on its boundary,
 construct the $P$-path starting at this 
point and continued until it hits another cusp point, leaves
$\Omega$ or enters a return region. 
Lemma \ref{make mesh} says these paths cut the isosceles
pieces of the dissection into isosceles pieces that form a mesh. 
%Because each 
%piece is $\theta$-nice, we can add a diagonal to each 
%quadrilateral piece to get a $\theta$-triangulation
%of $\Omega$. 
   By Lemma \ref{return regions lemma},
 each $P$-path we generate 
terminates within $O(n/\theta)$ steps and there are less than 
$3n$ of these paths (at most three per triangle), 
so a total of 
$O(n^2/\theta)$  mesh pieces  are  created 
outside the return regions.  
Moreover, each such path crosses a single dissection 
piece at most $O(1)$ times. Thus each  dissection piece can 
be crossed at most $O(n)$ times but such paths.

Next we place $O(1/\theta)$ evenly spaced points on 
both $Q$-sides of each return region (the reason for this 
will be explained in the next section). Each of these
points is propagated by $P$-paths 
 outside the return region it 
belongs to, until it runs into the boundary of $\Omega$
or hits the $Q$-side of some return region (possibly the 
same one they started from). As above, this generates 
a $\theta$-nice  mesh outside the return regions. 
There are $O(n)$ return regions and   $O(1/\theta)$
points per region to be propagated. Each path continues 
for at most $O(n/\theta)$ steps, so  at most 
 $O(n^2/\theta^2)$ mesh elements are created in total.
Moreover, each dissection piece is crossed at most 
$O(1)$ times by each path, so is crossed $O(n/\theta)$
times in total by such paths.

%------------------------------------------------------
\section{Proof of Lemma \ref{mesh exists}: the simple tubes}
%{Almost nonobtuse triangulation of the tubes}
 \label{proof triangles} 

Next, we have  mesh  inside the return regions.
In this section, we deal with the return regions that 
are simple tubes and in the next section we deal 
with spirals.

For the first time we will use 
$\theta$-paths rather 
than  $P$-paths (recall that a $\theta$-path is made up of 
segments that are within $\theta$ of parallel to the base 
of the isosceles piece; when we cut a $\theta$-nice piece by 
a $\theta$-path we get two $2\theta$-nice pieces).
We need the following lemma.

\begin{lemma} \label{conn corners}
Suppose $Q$ is a tube whose width $w$ is at most $\sin \theta$ times
its  minimal-length $\tilde \ell$.
Then opposite 
corners of $Q$ can be connected by a $\theta$-path inside 
the tube. 
\end{lemma}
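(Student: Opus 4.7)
The plan is to construct the required $\theta$-path one segment at a time, choosing on each shared $Q$-side of consecutive pieces a parameter value and connecting consecutive values by a single segment within the intervening piece. I would label the pieces of $Q$ as $T_1,\dots,T_K$ with shared $Q$-sides $S_0,S_1,\dots,S_K$, all of length $w$, and parametrize each $S_k$ by $t\in[0,w]$ in a way compatible with the isometric identification induced by the defining $P$-paths of the tube, so that a $P$-path sends parameter $t$ on $S_{k-1}$ to parameter $t$ on $S_k$. Under this parametrization the four corners of $Q$ are $x_0=(S_0,0)$, $y_0=(S_0,w)$, $x_1=(S_K,0)$, $y_1=(S_K,w)$, and I would construct a $\theta$-path from $x_0$ to the opposite corner $y_1$ by choosing intermediate values $t_0=0,t_1,\dots,t_K=w$ and using in each $T_k$ the straight segment from parameter $t_{k-1}$ on $S_{k-1}$ to parameter $t_k$ on $S_k$.

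The key local fact to establish is a slope bound in each trapezoid piece: in a $\theta$-nice isosceles trapezoid $T_k$ with shorter base $b_k=\ell(T_k)$, perpendicular base-to-base distance $h_k\leq w$, and $Q$-sides of length $w$, the segment joining parameter $t_{k-1}$ on one $Q$-side to parameter $t_k$ on the other has vertical rise $|t_k-t_{k-1}|\,h_k/w$ and horizontal run at least $b_k$, so its slope relative to the base direction is at most $|t_k-t_{k-1}|/b_k$. Hence this segment is a $\theta$-segment of $T_k$ whenever $|t_k-t_{k-1}|\leq b_k\tan\theta$. For triangle pieces $T_k$ (where $\ell(T_k)=0$) I would simply set $t_k=t_{k-1}$, so the $k$th segment is the $P$-segment of the triangle at height $t_{k-1}$ and is automatically a $\theta$-segment.

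With the local fact in hand, I would use the linear allocation
\[
t_k-t_{k-1}=w\cdot\frac{\ell(T_k)}{\tilde\ell(Q)},\qquad 1\leq k\leq K,
\]
which telescopes to $t_K=w$ and, by the hypothesis $w\leq\tilde\ell\sin\theta$, yields $|t_k-t_{k-1}|\leq \ell(T_k)\sin\theta\leq \ell(T_k)\tan\theta$ in every piece (the estimate is vacuous on triangle pieces, where both sides vanish). Concatenating the resulting segments produces a polygonal arc contained in the closed tube and running from $x_0$ to $y_1$, which is the required $\theta$-path between opposite corners.

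The main obstacle will be the slope estimate in the second step: one must verify, in symmetric coordinates on a $\theta$-nice isosceles trapezoid with $Q$-sides of length $w$, that the horizontal separation of two arbitrary parametric points on opposite $Q$-sides is always at least $b_k$. This reduces to showing that a certain linear function of $t_{k-1}+t_k$ is minimized precisely when both parameters lie on the shorter-base side of the trapezoid, which is routine but requires the symmetric setup. A minor degenerate case is when $Q$ begins or ends with a triangle piece whose apex lies at $t=0$, so that $x_0$ itself is the apex: the first (or last) several segments of the constructed path then collapse to that apex, but because the hypothesis forces $\tilde\ell>0$ there is at least one trapezoid piece in $Q$ and the path genuinely moves away from $t=0$ before terminating at $y_1$.
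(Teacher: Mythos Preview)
Your proposal is correct and follows essentially the same approach as the paper: bound the achievable displacement in each piece by (a constant times) $\sin\theta\cdot\ell(T_k)$, then sum over the pieces to get total displacement at least $\tilde\ell\sin\theta\geq w$. The paper's argument is terser and more geometric (the line from $x$ at angle $\theta$ to the $P$-segment hits the opposite $Q$-side at distance at least $|x-y|\sin\theta\geq\ell_k\sin\theta$ from $y$), while you carry out the equivalent estimate explicitly in symmetric coordinates and write down the linear allocation $t_k-t_{k-1}=w\,\ell(T_k)/\tilde\ell$; the underlying idea is the same.
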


\begin{proof}
Suppose $T$ is a $\theta$-nice isosceles piece 
 and $x,y$ are the  endpoints of 
a $P$-segment  $S =[x,y]$  crossing $T$. 
Then any point $z$ on the same  side as $y$ 
and within distance $\sin(\theta)|x-y|$ can joined to $x$ by a 
$\theta$-path.  See Figure \ref{ThetaPath}.
Thus if $\{T_k\}$ is an enumeration of the pieces making up the
tube and $\ell_k$ is the minimal base length of the $k$th 
piece, then we can create a $\theta$-path that crosses the 
tube and    whose endpoints 
are displaced by $\sum_k \ell_k \sin \theta = 
\tilde \ell \sin \theta $
with respect to a $P$-path. This proves the lemma.
%See Figure \ref{Trapezoid9}.
\end{proof}

\begin{figure}[htb]
\centerline{
\includegraphics[height=1.75in]{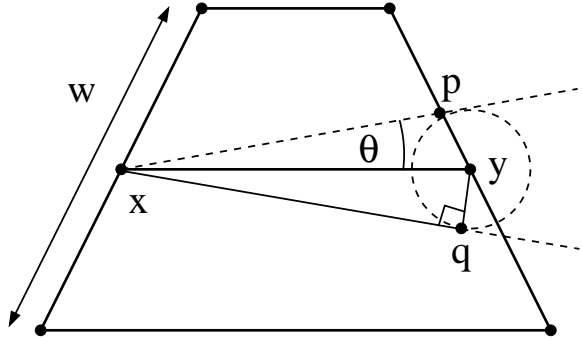}
 }
\caption{\label{ThetaPath}
Clearly  $|p-y|\geq |q-y| = |x-y|\sin \theta 
\geq \tilde \ell \sin \theta $ where $q$ is 
the closest  point to $y $ on the line making angle 
$\theta$ with the segment $xy$.
}
\end{figure}

\begin{cor}
If $R$ is return region that is a  C-tube,    S-tube or  simple 
G-tube, then we can 
cut $R$ into $O(1/\theta)$  parallel sub-tubes and connect 
opposite corners of the each tube by a $\theta$-path 
contained in that tube. 
\end{cor}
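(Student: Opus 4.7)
The plan is to cut $R$ into parallel sub-tubes by iteratively splitting along center paths, and then invoke Lemma \ref{conn corners} on each sufficiently thin sub-tube. Write $L = \ell(R)$ and $w$ for the width of $R$; Corollary \ref{long tubes} gives $L \geq w$ since $R$ is a C-tube, S-tube or simple G-tube.

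The first step is to observe that Lemma \ref{sum short} iterates cleanly. Splitting $R$ by its center path produces two parallel sub-tubes of width $w/2$, each of length at least $L$ and minimal-length at least $L/4$. Because the first inequality of Lemma \ref{sum short} shows that $\ell$ is non-decreasing under such a split, we can apply the same lemma to each sub-tube to split it again; the children now have width $w/4$, length at least $L$, and minimal-length at least $\ell(\text{parent})/4 \geq L/4$. An induction on the splitting depth $k$ therefore yields $2^k$ parallel sub-tubes, each of width $w/2^k$ and with $\tilde{\ell} \geq L/4$ (the key point being that the $L/4$ lower bound does not shrink with $k$, since $\ell$ of the intermediate ancestors is itself bounded below by $L$).

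Next I would choose $k$ so that each leaf sub-tube satisfies the hypothesis of Lemma \ref{conn corners}, namely width $\leq \sin\theta \cdot \tilde{\ell}$. It suffices to impose
\begin{equation*}
\frac{w}{2^k} \;\leq\; \sin\theta \cdot \frac{L}{4},
\qquad\text{i.e.}\qquad 2^k \;\geq\; \frac{4w}{L\sin\theta}.
\end{equation*}
Since $L \geq w$, taking $k = \lceil \log_2(4/\sin\theta) \rceil$ is enough, which produces $2^k \leq 8/\sin\theta = O(1/\theta)$ parallel sub-tubes whose union is $R$. Applying Lemma \ref{conn corners} to each leaf sub-tube then connects a pair of opposite corners by a $\theta$-path contained in that sub-tube, and the corollary follows.

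The only step that requires care is the iteration of Lemma \ref{sum short}: at first glance one might worry that $\tilde{\ell}$ degrades with each split, ruling out a $O(1/\theta)$ count. The two inequalities of Lemma \ref{sum short} work in tandem to prevent this, with the first ($\ell(Q) \leq \ell(Q_1)$) propagating the length bound down through the splitting tree and the second ($\ell(Q) \leq 4\tilde{\ell}(Q_1)$) converting it into a uniform bound on $\tilde{\ell}$ at every level. Once this uniform bound is in hand, the exponential halving of the widths against the constant lower bound on $\tilde{\ell}$ immediately yields the $O(1/\theta)$ estimate.
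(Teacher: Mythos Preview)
Your proof is correct and follows essentially the same strategy as the paper: use Corollary \ref{long tubes} to get $L\ge w$, split $R$ into $O(1/\theta)$ parallel sub-tubes, invoke Lemma \ref{sum short} to bound $\tilde\ell$ of each sub-tube below by $L/4$, and then apply Lemma \ref{conn corners}. The only difference is cosmetic: the paper splits $R$ directly into $M\ge 4/\sin\theta$ equal-width sub-tubes and observes that each one is half of a (twice as wide) sub-tube of $R$, so a single application of Lemma \ref{sum short} suffices; you instead iterate center-path bisection and track the length bound through the tree, which is slightly more work but arrives at the same place.
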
 

\begin{proof}
Suppose the return region $R$ has length $L$ and width $w$. 
Choose an even integer $M \geq  {4}/{ \sin \theta}$
and split the return region $R$ into the disjoint 
 union of $M$ thinner tubes $\{ T_j\}$  of width $w/M$.
By Lemmas \ref{C-tube lemma}, \ref{S-tube lemma} and 
\ref{G-tube lemma} each of these new  tubes has length that is at least 
$w$  and width equal to $w/M$. Thus each has length 
that is at least $ M $ times as long as its width. 

Since  $M \geq 2$, each of our thin tubes is half of 
a thicker tube  that is still inside the given tube $R$.
By Lemma \ref{sum short}
$$ \tilde \ell(T_j) \geq \frac 14 \ell(T_j) \geq
\frac 14 \ell(R) \geq \frac 14 w.$$
On the other hand, the width of $T_j$ is $w/M$.
 Hence the minimal-length of each tube $T_j$  is 
more than 
$ M/4 = 1/\sin\theta$ times its width, 
so the previous Lemma \ref{conn corners} applies 
to $T_j$, as desired.
\end{proof} 

We can now continue with the proof of  Lemma 
\ref{mesh exists}. 
%For each  C-tube, S-tube or simple G-tube, we  have a number of 
%$P$-paths from the meshing of $\Omega$ outside
%the return regions that terminate 
%on one of the $Q$-ends of the region. 
We then cut  the  return  
region into  $O(1/\theta)$ parallel tubes as described above, 
and divide each tube by a $\theta$-path connecting opposite 
corners.  This meshes each tube using $2\theta$-nice 
pieces.
Any $P$-path hitting a $Q$-end of one of these tubes 
is then propagated to a corner on the opposite end of 
the tube by standard quadrilateral propagation paths.
 This gives a $2\theta$-nice
mesh of the tube that is consistent with all the meshes 
created outside the tube.

Since there were at most $O(n/\theta)$ $P$-paths that 
might terminate  and each return region has at most 
$O(n/\theta)$ isosceles pieces, at most $O(n^2/\theta^2)$ 
$2\theta$-nice 
triangles and quadrilaterals are created inside all the return 
regions.
Moreover, each dissection piece is crossed by at most
$O(n/\theta)$ paths (there are $O(1/\theta)$ paths per return 
region and $O(n)$ return regions), so it contains at most this many 
mesh pieces. 

%After adding diagonals to the quadrilaterals,
%we have a $2\theta$-nice 
%triangulation with at most $O(n^2/\theta^2)$ triangles.

%------------------------------------------------------
\section{Proof of Lemma \ref{mesh exists}: the spirals}
%\section{Almost nonobtuse triangulation of  the spirals  } 
\label{proof spirals} 

This is the final section in the 
proof of Theorem \ref{Triangles}.
Here we   prove  Lemma \ref{mesh exists} inside 
 the spiral return regions. 

Since any spiral can be divided into a simple G-tube and 
a pure spiral, and we can treat the simple tube as 
above,  it suffices to deal with the pure spirals. 
Let $N$ be the winding number of the spiral; we may 
assume $N \geq 2$, since otherwise the spiral  can be 
treated as a 
tube and can be triangulated as in the previous section.
Let $p$ be the number of isosceles pieces that are hit
by the spiral (this is the number of steps it take to complete 
one winding of the spiral). 

The spiral can be divided into $N$ tubes joined end-to-end, 
each starting and ending on the same $Q$-edge of some 
isosceles pieces. We divide the first and last of these 
tubes into $O(1/\theta)$ parallel thin tubes. Then any 
$P$-path that enters the tube from either end can 
be $\theta$-bent so that it terminates at the corner 
of one of the thin tubes after winding once around 
the spiral. 

If $N = O(1/\theta)$, then we simply propagate all the 
interior corners of the thin tubes at one end of 
the spiral  around the spiral 
until they run into the  corners of  the 
thin tubes at the other end. This generates 
$O( N \cdot p \cdot \theta^{-1}) = (p \theta^{-2})$ 
new vertices.
See Figure \ref{SpiralMesh2}.

\begin{figure}[htb]
\centerline{
\includegraphics[height=1.8in]{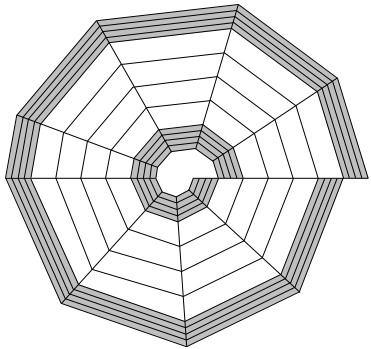}
 }
\caption{\label{SpiralMesh2}
Cut the inner and outer tubes into $O(1/\theta)$ 
parallel narrow tubes and $\theta$-bend all 
entering $P$-paths so they terminate inside these
narrow tubes. The corners of the narrow inner tubes are 
then propagated around the $N$ turns of the spiral 
until they hit the corners of the narrow outer tubes. 
This creates a $2\theta$-nice mesh of the spiral 
using $O(n N /\theta)$ pieces. 
The figure shows $p=9$ and $N=6$.
}
\end{figure}

If $N \gtrsim 2\theta^{-1} \geq 1/\sin(\theta)$
 then  after $O(\theta^{-1})$
spirals, we can create a $\theta$ curve that is a closed loop 
and we let the paths generated by the  interior 
corners of the inner thin part hit this closed 
loop. We create another $\theta$-bent closed loop at radius 
$N-2$ and let the paths generated by the corners of the 
outer thin tubes  hit this.  No propagation paths enter 
the region between the two closed loops and a total of 
$O(p  \theta^{-2})$ vertices are used.
See Figure \ref{SpiralMesh5}. 

\begin{figure}[htb]
\centerline{
\includegraphics[height=1.9in]{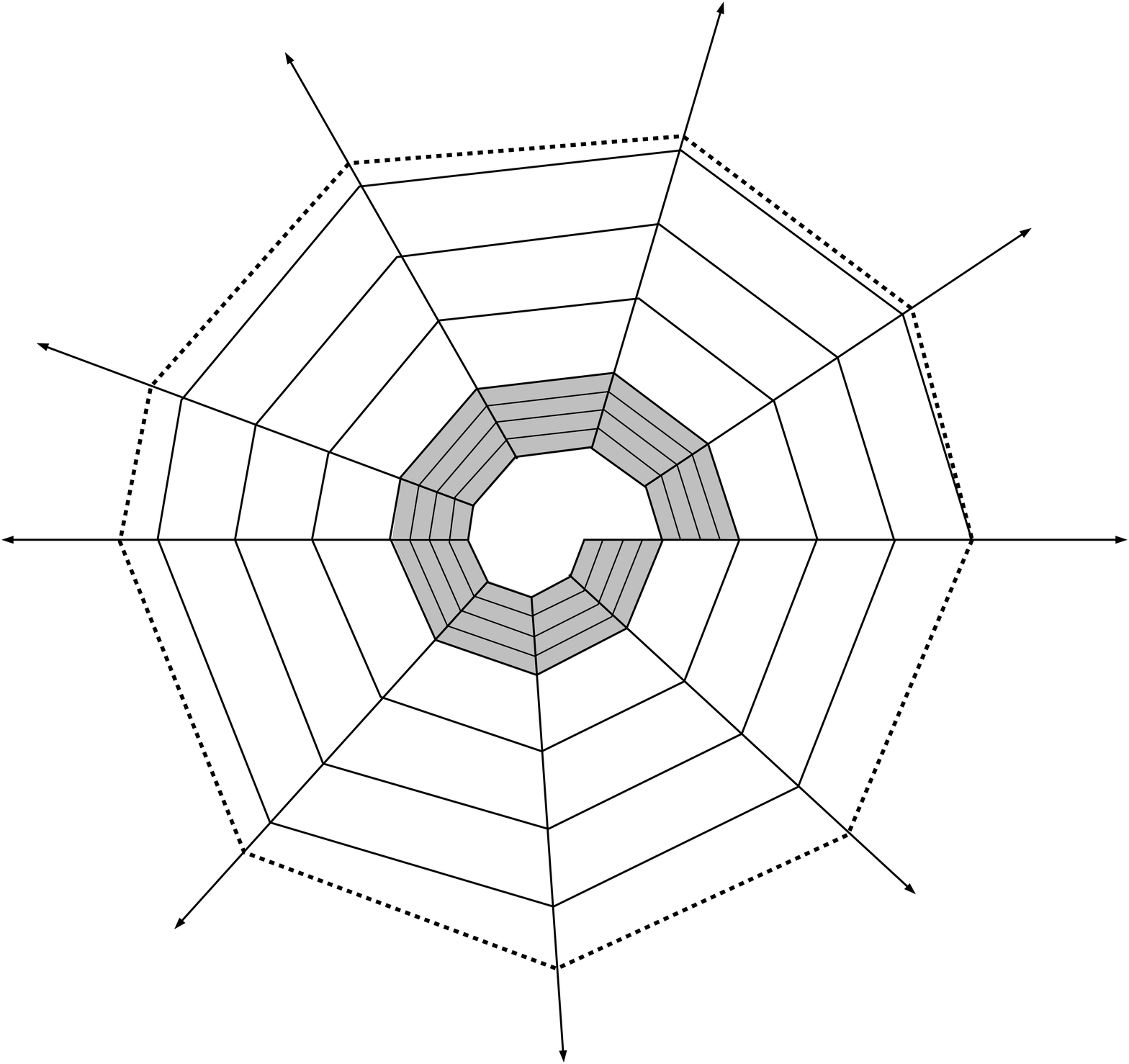}
$\hphantom{xxxx}$
\includegraphics[height=1.9in]{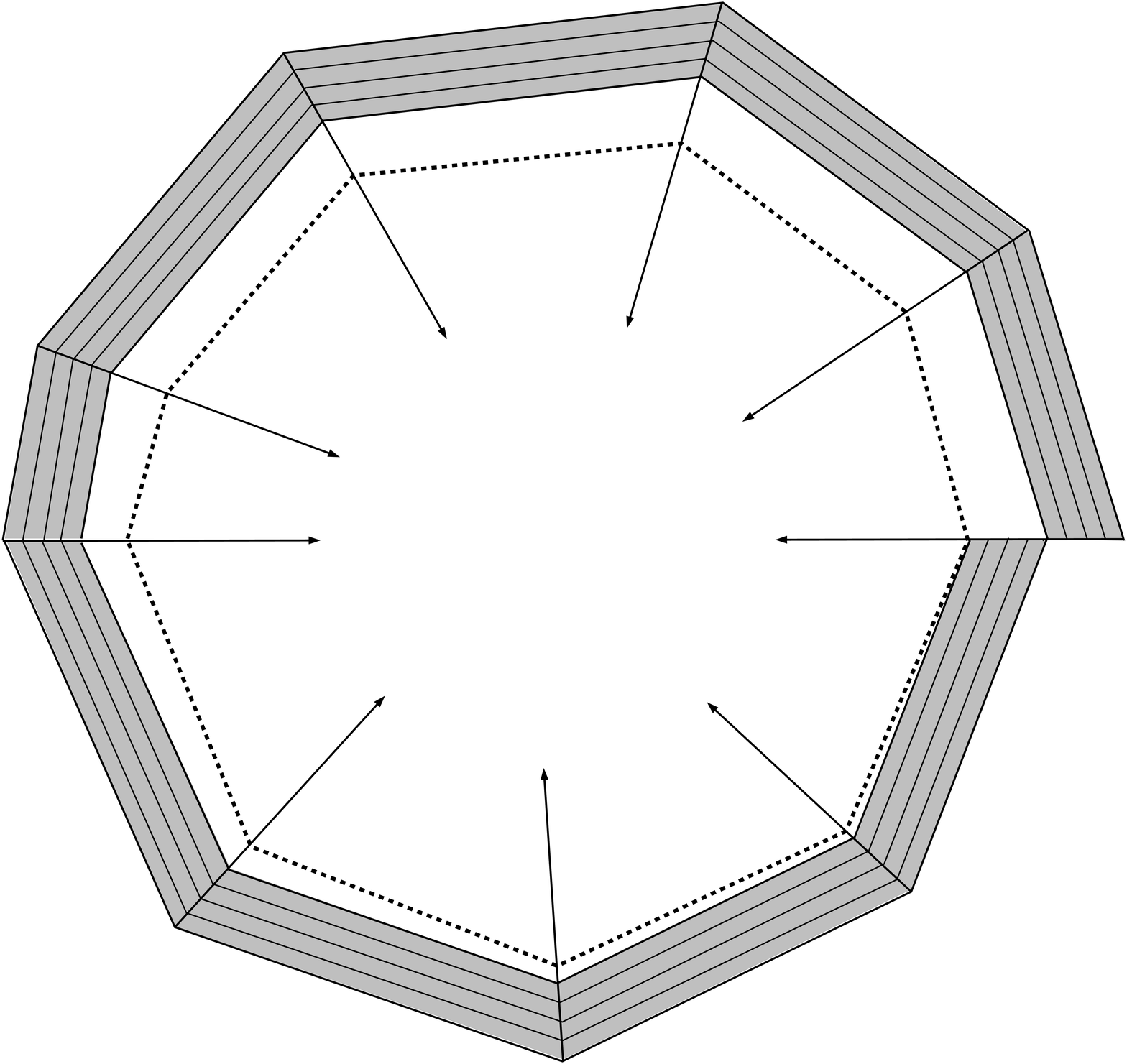}
 }
\caption{\label{SpiralMesh5}
If the winding number is  much larger than 
$  \frac 1 \theta$, then after $O(1/\theta)$
windings we can create a closed $\theta$-bent loop
(the dashed curve).
We then propagate the corners of the narrow inner
tunes until they hit a vertex of this closed loop.
We can also create a $\theta$-bent loop one 
winding in from the outer tube and use it similarly 
to  propagate the 
corners of the narrow outer tubes until they hit 
a vertex of this loop.  
}
\end{figure}

The propagation paths cut the spiral into $2\theta$-nice 
triangles and quadrilaterals. 
Moreover, as in the case of simple tubes, it is easy 
to check that each dissection piece is crossed 
by at most $1/\theta$ paths in the  construction 
of each spiral. Since there are $O(n)$ spirals, 
this means there are at most $O(n/\theta)$ such 
crossings of a dissection piece in total. The 
propagation paths that enter each spiral cross each 
dissection piece at most once, and there are 
$O(n/\theta)$ such paths in total, hence $O(n/\theta)$
such crossings of each piece. 

This completes the proof of Lemma \ref{mesh exists}, 
and hence the proof of Theorem \ref{Triangles}. 
%If we add  diagonals to all 
%quadrilaterals, then we get a $2\theta$-nice triangulation 
%of the spirals that is consistent with the 
%triangulation we have constructed outside the spirals. 
%This completes the proof of Theorem \ref{Triangles}.    

%----------------------------------------------
\section{A lemma for quadrilateral meshing} \label{quad mesh lemma sec}

We now restate our conclusions 
in a form that is useful for proving the theorem 
on optimal quad-meshing in \cite{Bishop-quadPSLG}.
Readers interested only in Theorem \ref{NonObtuse}
may skip this section.

\begin{thm} \label{quad mesh lemma}
Suppose
 that $W$ is a polygonal  domain  with an isosceles trapezoid
 dissection with $n$  pieces.
Suppose also that   $0^\circ \leq \theta \leq 15^\circ$ and
 that  every dissection  piece is  $\theta$-nice. Finally, suppose
 the number of chains in the dissection is  $M$.
Then we can remove
$O(M/\theta)$ $\theta$-nice quadrilaterals of uniformly bounded
eccentricity from $W$ so that
the remaining region $W'$  has a $2\theta$-nice quadrilateral mesh
with $O( n M / \theta)$ elements.
At most $O(M/\theta)$ new vertices are created on the
$Q$-boundary of $W'$.
At most $O(M)$ vertices are created on the $P$-boundary
of $W'$, and no more than $O(1)$ vertices are placed in
any single $P$-side of any dissection piece of $W'$.
For this quad-mesh,
any boundary point  on a $Q$-side of $W'$  propagates to another
boundary point after crossing at most  $O(n)$ quadrilaterals.
\end{thm}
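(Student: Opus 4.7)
The plan is to adapt the proof of Lemma \ref{mesh exists} to the setting where the dissection consists only of isosceles trapezoids and the output mesh must consist solely of quadrilaterals. First I invoke Lemma \ref{return regions lemma} to locate $O(M)$ disjoint standard return regions capturing every $P$-path of more than $5n+1$ segments. Outside these return regions, I propagate the $O(M)$ non-conforming vertices of the dissection together with $O(1/\theta)$ evenly spaced auxiliary points placed on each $Q$-end of each return region, for a total of $O(M/\theta)$ propagating points. By Lemma \ref{return regions lemma} each such $P$-path terminates within $O(n)$ segments, either reaching $\partial W$ or hitting the $Q$-end of a return region. Since every piece is a trapezoid and $P$-propagation cuts a piece from one $Q$-side to the opposite $Q$-side, Lemma \ref{split quad} guarantees the resulting sub-pieces are $\theta$-nice quadrilaterals; this produces $O(nM/\theta)$ mesh elements in $W$ minus the return regions.

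Inside each return region I mimic the triangle-free portion of the argument used for Lemma \ref{mesh exists}. For a simple C-, S-, or G-tube, Lemmas \ref{C-tube lemma}--\ref{G-tube lemma} together with Lemma \ref{sum short} (which bounds $\tilde{\ell}$ in terms of $\ell$) show that after subdividing into $O(1/\theta)$ parallel thin sub-tubes each thin sub-tube satisfies the hypothesis of Lemma \ref{conn corners}, so its opposite corners can be joined by a $\theta$-path. Propagation paths that enter at a $Q$-end are routed along these thin sub-tubes by standard quadrilateral propagation, cutting the tube into $2\theta$-nice quadrilaterals. For a spiral I first peel off a simple G-tube and treat the resulting pure spiral as in the proof of Lemma \ref{mesh exists}: if its winding number is $O(1/\theta)$ propagate corners directly around all turns, and otherwise use $\theta$-bent closed loops one turn from the inner and outer ends to confine propagation to narrow annular bands. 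The bounded-eccentricity ``core'' enclosed by the inner loop, together with analogous bounded-eccentricity plug quads at the $Q$-ends of each simple tube, comprise the $O(M/\theta)$ removed quadrilaterals that form $W \setminus W'$.

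The required counts then follow by direct bookkeeping. Each return region contributes $O(n/\theta)$ internal mesh pieces ($O(n)$ dissection pieces crossed, with $O(1/\theta)$ thin sub-tubes), so the interior contribution is $O(nM/\theta)$, matching the outside contribution. New vertices on $Q$-boundaries of $W'$ are exactly the $O(M/\theta)$ auxiliary points; new vertices on $P$-boundaries arise only where a propagation path crosses a base side, and since $P$- and $\theta$-paths cross interior pieces only through $Q$-sides, each $P$-side of a dissection piece of $W'$ picks up at most $O(1)$ vertices, with $O(M)$ in total. Any $Q$-boundary point of $W'$ propagates to another boundary point in at most $O(n)$ crossings, again by Lemma \ref{return regions lemma} applied to $W'$.

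The main obstacle I expect is arranging the $O(M/\theta)$ removed quadrilaterals so that each really has uniformly bounded eccentricity while the complement admits a clean $2\theta$-nice quad mesh. For pure spirals this requires choosing the inner and outer closed $\theta$-loops so that the region between them is foliated by $\theta$-paths of comparable length and the central plug is square-like rather than a long thin strip; for simple tubes, the plug near a $Q$-end must similarly be chosen with length comparable to width. The hypothesis $\theta \leq 15^\circ$ is used here to keep $\sin\theta$ comparable to $\theta$, so that the $O(1/\theta)$ thin-subtube construction of Lemma \ref{conn corners} is uniform and the resulting angles stay bounded by $90^\circ + 2\theta$.
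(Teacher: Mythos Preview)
Your overall framework matches the paper's: invoke Lemma~\ref{return regions lemma}, propagate $O(M/\theta)$ points outside the return regions, and handle the interiors by subdividing into $O(1/\theta)$ thin sub-tubes. The outside bookkeeping is fine.

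The gap is in what the removed quadrilaterals actually are and why they are needed. In the proof of Lemma~\ref{mesh exists}, when a thin sub-tube is cut by a $\theta$-path joining opposite corners and entering paths are propagated to that far corner, all those paths converge to a single \emph{point}. The resulting pieces at that end are triangles, not quadrilaterals. That was acceptable for Lemma~\ref{mesh exists} but is fatal here. The paper's fix is to excise a small $\theta$-nice quadrilateral at each such corner (one edge on the $P$-boundary, one on the $Q$-end, the interior vertex on a $P$-segment and a $Q$-segment), and then connect the \emph{interior} corners of the two excised quads by the $\theta$-path. Entering propagation paths now land on the $Q$-side of the far quad rather than at a point, so every piece created is a genuine quadrilateral. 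This is the content of Figure~\ref{QMlemma1} and is exactly the ``main obstacle'' you flagged.

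Your spiral description is where the misidentification becomes concrete. The ``core enclosed by the inner loop'' is not a $\theta$-nice quadrilateral at all; it is a multiply-wound region bounded by a closed $\theta$-loop and part of the spiral boundary, and there would be only $O(M)$ such cores, not $O(M/\theta)$. In the paper the spiral case uses the same corner-quad device: a small quad is placed with one edge on a closed $\theta$-loop, and the inner boundary of the spiral is $\theta$-bent to hit one of its corners while a second $\theta$-path winds once around to hit a side (Figures~\ref{QMlemma2}--\ref{QMlemma3}). The removed quads, in both tubes and spirals, are these small corner-plugs, two per thin sub-tube, giving the $O(M/\theta)$ count. The freedom in their size is also what lets one align their interior corners along a common $Q$-segment so that $Q$-propagating those corners puts only $O(1)$ new vertices on any single $P$-side.
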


%\begin{thm} \label{quad mesh lemma}
%Suppose that  $0 \leq \theta \leq 15^\circ$ and
% that $\Omega$ is a  region with a dissection into
% $n$ $\theta$-nice isosceles trapezoids and that there are
%$M$ chains. Then we can remove
%$O(M)$ $\theta$-nice quadrilaterals of uniformly bounded
%eccentricity from $\Omega$ so that
%the remaining region has a $2\theta$-nice quadrilateral mesh
%with $O( n M  /\theta )$ elements.
%At most $O(M/\theta)$ new vertices are created on the
%boundary of $\Omega$.
%\end{thm}

\begin{proof}
The proof is exactly the same as the argument of the last 
few sections, except for some slight modifications inside 
the return regions. 

For each return region we place $O(1/\theta)$ equally 
space points along the two $Q$-sides of the region 
and propagate these outside the return regions until 
they hit the boundary of $\Omega$ or hit the $Q$-side 
of some return region. There are $O(M/\theta)$ such 
paths and they generate at most $O(M n /\theta)$ 
quadrilaterals and $O(M/\theta)$ endpoints on 
$Q$-sides of $\partial \Omega$. 

First consider  return regions that are simple tubes. 
As before,  split each such region into $O(1/\theta)$ 
parallel sub-tubes and  so that in each sub-tube we 
can connect 
opposite corners by a $\theta$-path. Now, however, 
we remove a small quadrilaterals at a pair of 
 opposite corners of the 
tube. These quadrilaterals have one edge on a $P$-boundary
of the tube, one edge on a $Q$-end of the tube, one 
vertex in the interior of the tube and the two edges 
adjacent to this vertex are chosen to lie a $P$-segment
and a $Q$-segment. See Figure \ref{QMlemma1}.

\begin{figure}[htb]
\centerline{
\includegraphics[height=2.0in]{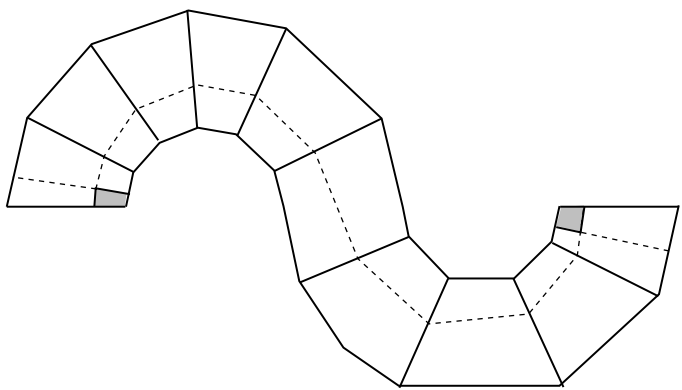}
 }
\caption{\label{QMlemma1}
We place quadrilaterals (shaded) at opposite corners 
of a tube, and connect the internal corners by a $\theta$-path.
Every path entering tube either  immediately 
hits the shaded quadrilateral at that end, or propagates 
to hit the shaded quadrilateral at the other end.  
We also have to propagate the interior corner of the 
shaded quadrilateral along a $Q$-path. This gives 
a mesh of every tube by  $2\theta$-nice quadrilaterals.
}
\end{figure}

We then connect the interior corner of each of the two 
quadrilaterals by a $\theta$-curve. This requires less displacement
than connecting the corners, so it is clearly possible
to do this  (to make 
it easier to see, we could always increase the number of 
tubes and decrease their width by a fixed factor). 
We have freedom in choosing the size of the quadrilaterals, 
and so we can arrange for all the quadrilaterals chosen 
in the same dissection trapezoid to have sides along  
the same $Q$-segment. Thus when we $Q$-propagate the 
corners of the quadrilaterals,  only two extra points 
will be created on the $P$-side of the dissection piece
containing such a quadrilateral.

If we  apply  quadrilateral propagation 
to  each  $P$-path entering the tube from either end, 
it crosses the tube and hits a $Q$-side of the removed 
quadrilateral at the other end of the tube.
 See Figure \ref{QMlemma1}.
This gives a $2\theta$-nice quadrilateral mesh inside
the modified  tubes.

 Inside the spirals we do a similar thing. In the 
previous proof,  paths 
inside spirals were terminated by bending them in a sub-tube 
of the spiral until they hit a corner on the opposite side 
of the tube from where they entered, in order to form a loop.
 so the same construction 
works. Outside the return regions, the $P$-paths convert the 
$\theta$-nice  dissection into a $\theta$-nice quadrilatal 
mesh (previously the only triangles created by the $P$-paths 
were in triangular pieces of the dissection, which we now assume 
don't exist). 
See Figures \ref{QMlemma2} and \ref{QMlemma3}.
\end{proof} 

\begin{figure}[htb]
\centerline{
\includegraphics[height=2.5in]{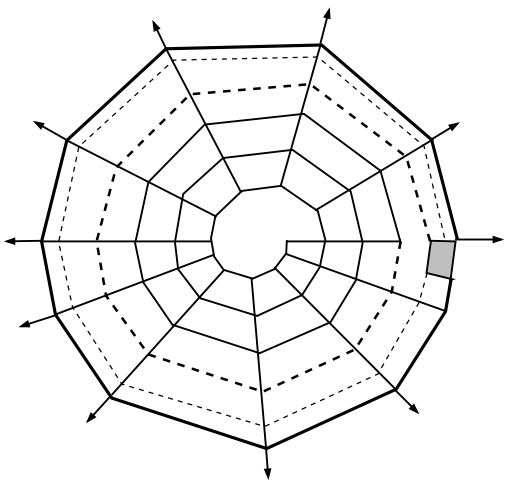}
 }
\caption{\label{QMlemma2}
For spirals with $ \gg \theta^{-1} $ windings, 
we can make a  $\theta$-path loop 
in the $j$th spiral when $j \gtrsim 1/\theta$ 
(solid thick curve). 
Then place a quadrilateral as shown with one 
edge on the loop; one corner is $\theta$-propagated 
around the spiral once to hit the center of a 
side of the same quadrilateral (thin dashed curve).
 The boundary of 
the spiral is $\theta$-bent to hit the other corner
(thick dashed curve).
%A similar construction, but can do it in the sub-tube 
%adjacent to the outermost one.
%This gives a $2\theta$-nice mesh of the entire 
%spiral (minus the two quadrilaterals) using 
%$O(k/\theta)$ elements.
}
\end{figure}

\begin{figure}[htb]
\centerline{
\includegraphics[height=2.5in]{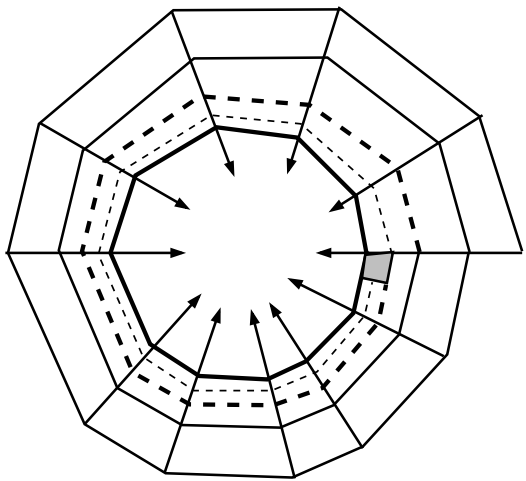}
 }
\caption{\label{QMlemma3}
The quadrilateral construction near the outside of 
a large spiral.
This is  similar to the  construction in 
Figure \ref{QMlemma2}, but we can do it in the sub-tube 
adjacent to the outermost one.
The two constructions give a $2\theta$-nice mesh of the entire 
spiral (minus the two quadrilaterals).
% using $O(k/\theta)$ elements.
}
\end{figure}

%------------------------------------------------------------------

\section{Overview of the proof of Theorem \ref{NonObtuse}} 
\label{intro NonObtuse}

The remainder of the paper gives 
the proof of Theorem \ref{NonObtuse}. 
 In this section we give the  overall 
strategy of the proof  and we
 will provide the details in 
the following sections. 

The proof combines ideas already seen in the proofs of 
Theorems \ref{Refine Triangulation} and \ref{Triangles}
but requires a different displacement estimate 
in tubes and  a more intricate construction in the 
spirals.
As explained in Section \ref{Gabriel edges}, it suffices to prove
Theorem \ref{ETS Gabriel}: assume $\Gamma$ is a triangulation 
and  show we can place $O(n^{2.5})$ points along the edges so 
that each triangle becomes Gabriel. As in the proof of 
Theorem \ref{Refine Triangulation} we start taking
the dissected domain $\Omega$ to be the original 
triangles $\{ T_k\}$  with the central triangles
$\{ T_k'\}$  removed 
(recall  the vertices of $T_k$  are the three points
where the inscribed circle touches the triangle $T_k$). 
We do not use the ``approximate circular-arc triangles'' 
that were used in the proof of Theorem \ref{Triangles}.

 For each triangle 
$T_k$, remove the closed triangle $T_k'$ as in Section 
\ref{Refine tri}.  
As before, $T_k \setminus T_k'$ is a union of 
three isosceles triangles. Keep the isosceles triangles
with angle $< 90^\circ$; as explained at the end of 
Section \ref{Refine tri},
 isosceles triangles with angles $\geq 90^\circ$
can be ignored because adding any set of points to the 
$Q$-edges will make the triangle Gabriel. The remaining region 
$\Omega $ thus has an isosceles dissection by 
$O(n)$  acute triangles.
We construct return regions for $\Omega$ just as before.

Each vertex of each $T_k'$ are propagated by 
$P$-paths until then leave $\Omega$ or hit 
the $Q$-side of a return region. This creates
$O(n^2)$  crossing points on $\Gamma$. 

If  a return region has $k$ isosceles pieces
then we will place $O(\sqrt{k})$ even spaced 
points in each $Q$-end of the region and propagate 
these until they leave $\Omega$ or hit a return 
region. 
Since $k = O(n)$, this  creates at most $O(n^{2.5})$ 
new points. If different return regions 
had to use distinct isosceles pieces this 
estimate  would be $O(n^2)$ instead.  
Improving the exponent in Theorem \ref{NonObtuse} 
seems to be entirely a matter
of understanding the behavior of distinct 
return regions that share isosceles pieces.

Why do we split the $Q$-ends of the return 
regions into $O(\sqrt{k})$ pieces? 
When we bend the $P$-paths inside the 
return regions, 
we must verify that the Gabriel condition is 
satisfied by the points that we generate.
This is a more restrictive condition than the 
$\theta$-bending of the earlier proof, so  
paths can be bent less and hence take a more 
steps to terminate.
The difference is  illustrated 
in  Figure \ref{Bending1}. The left side   
shows the range of options for 
a $\theta$-segment crossing a single rectangle; the allowable 
displacement is roughly $\theta|a-b|$.
The center and right  pictures 
of Figure \ref{Bending1}
 show the restrictions on a Gabriel path. Note that
there are two such restrictions: the exit point $b$ must be 
between the Gabriel disks tangent at $a$ and the entrance point 
$a$ must be between the disks tangent at $b$. This restricts 
$b$ to an interval of length approximately $|a-b|^2/w$, 
where $w$ is the width of the piece. This estimate 
will be made more precise in the next section; the main  point 
is that it  shrinks quadratically with  $|a-b|$ whereas
the estimate for $\theta$-paths  decreased linearly. Thus
the proof of Theorem \ref{NonObtuse} requires longer, 
narrower tubes than the proof of Theorem \ref{Triangles}.

\begin{figure}[htb]
\centerline{
\includegraphics[height=2.5in]{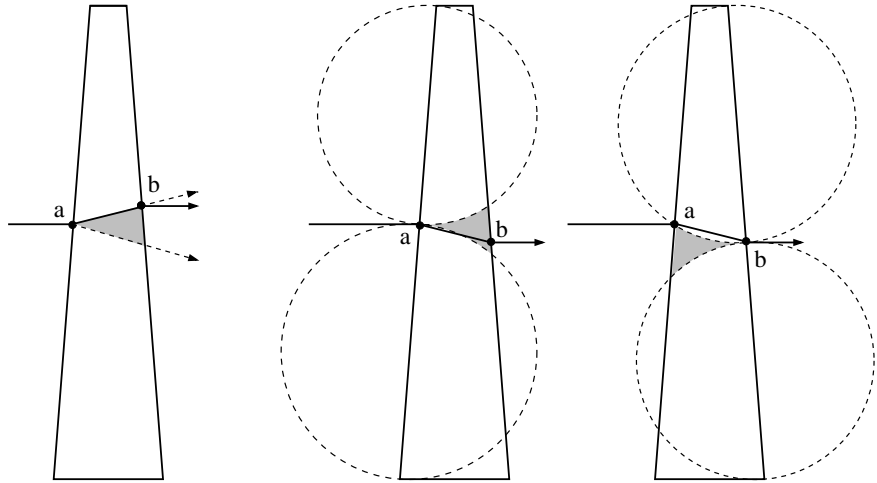}
 }
\caption{\label{Bending1}
A $\theta$-bent path can reach any point defined by 
a cone with angle $2 \theta$, but a Gabriel bent 
path can only  reach points defined by the cusp between 
two tangent disks. Moreover, this is a two part 
condition: the exit point must be the cusp defined 
by the entrance point and the entrance point must
be in the cusp defined by the exit point.
%In $\theta$-bending the allowable displacement is 
%$\simeq \theta |a-b|$, but for Gabriel bending it is 
%$\simeq |a-b|^2/r$ where $r$ is the  maximum radius  
%of the  disks tangent at $a$ and $b$.
}
\end{figure}

To illustrate the idea, consider a simple case: 
a square divided into $k$ thin parallel 
rectangles. See Figure \ref{Bending3}. 
A Gabriel path crossing the square takes $k$ steps,
each with displacement
$ \simeq 1/k^2$, so  the total displacement is $\simeq 1/k$.
At first glance, this  seems  to 
say  we should cut the square into $O(k)$ parallel 
tubes; then we could get all entering paths to terminate before
hitting the far  side of tube. This works, but  leads to the 
estimate $O(n^3)$ in Theorem \ref{NonObtuse}.

\begin{figure}[htb]
\centerline{
\includegraphics[height=2.0in]{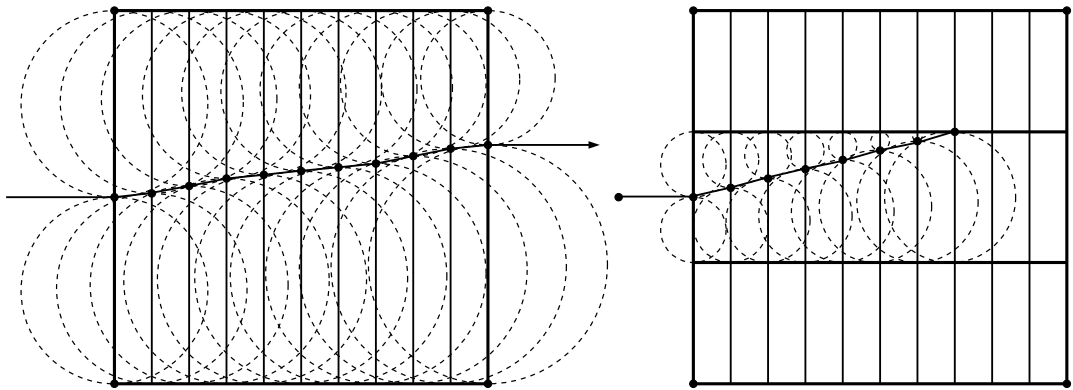}
 }
\caption{\label{Bending3}
A Gabriel-bent path must stay outside certain pairs 
of tangent disks. When we cut a unit square into 
$\frac 1k \times 1$ rectangles  then  a Gabriel  path 
has total displacement at most 
$O(1/k)$. If we cut the square 
into $O(\sqrt{k})$ horizontal tubes,  paths  in each tube 
take $k$ steps with displacement of $k^{-3/2}$ and hence 
total displacement of $1/\sqrt{k}$. Since this 
is also the width of the tube, we can Gabriel-bend a 
path to hit the side of a tube before leaving it.
}
\end{figure}

We can do better. Cut the square into $ \sqrt{k}$ 
tubes instead. Now the tangent disks have diameter $k^{-1/2}$
and the cusp regions where we choose our next point have 
height $\simeq \frac 1{\sqrt{k}} (\frac 1{\sqrt{k}}/\frac 1k)^2
 =  k^{-3/2}$. Thus a Gabriel path 
takes $k$ steps,  each with displacement $\simeq k^{-3/2}$,
for a total displacement  $\simeq k^{-1/2}$, which 
is  the approximate width of the tube. Thus using only 
$O(\sqrt{k})$ tubes, we can bend Gabriel paths enough to  
hit the far corner of the tube (and thus terminate).

We shall prove in the next two sections this holds 
for any  return regions that are simple tubes,
 not just squares with rectangular pieces.
Each return region that is a C-tube, S-tube 
or simple G-tube will be split into  $O(\sqrt{k})$
narrow parallel tubes and the entering propagation paths 
will be  Gabriel bent until they  a far corner of
the tube; here  $k$  
is the number of isosceles pieces forming the tube.

We also place $O(\sqrt{k})$ 
narrow parallel sub-tubes at the two ends of spiral 
return regions, i.e., we subdivide the innermost and 
outermost windings of the spiral. As with   simple 
tubes, all  paths entering the spiral  can be 
bent within these narrow tubes to terminate within $O(k)$ 
steps. But then we 
have to propagate both the external  and internal corners 
of the narrow tubes. The external corners propagate outside 
the spiral until they terminate  just as described for 
the corners for narrow tubes  in the 
previous paragraph. 

The most difficult part  of the proof of Theorem 
\ref{NonObtuse} is dealing with the  $\sqrt{k}$ internal 
corners that  propagate through the spiral; since we have no 
bound for the number  $N$ of windings of the spiral in terms 
of $n$, this could produce arbitrarily many new vertices.
Thus  propagation paths of the internal corners  must be bent 
to  terminate earlier. Consider the case of paths that start 
near the inner end of the spiral (the outer part is handled 
in the same way, but is easier, since the windings of the
spiral are longer). We consider what happens for very large 
spirals (where the number of windings $N$ is bigger than 
the number $k$ of isosceles pieces in the spiral; for smaller 
values of $N$ we truncate the construction at the appropriate stage.)

We first bend the propagation paths  so that adjacent paths merge,
and then merge adjacent merged paths, and continue until all 
the propagation paths generated by the $O(\sqrt{k})$ internal 
corners have merged into a single path. This occurs around 
winding $k^{1/3}$. This path is then propagated as 
a $P$-path out to winding $k^{1/2}$. 
See Figure \ref{BigPicture}.

\begin{figure}[htb]
\centerline{
\includegraphics[height=4in]{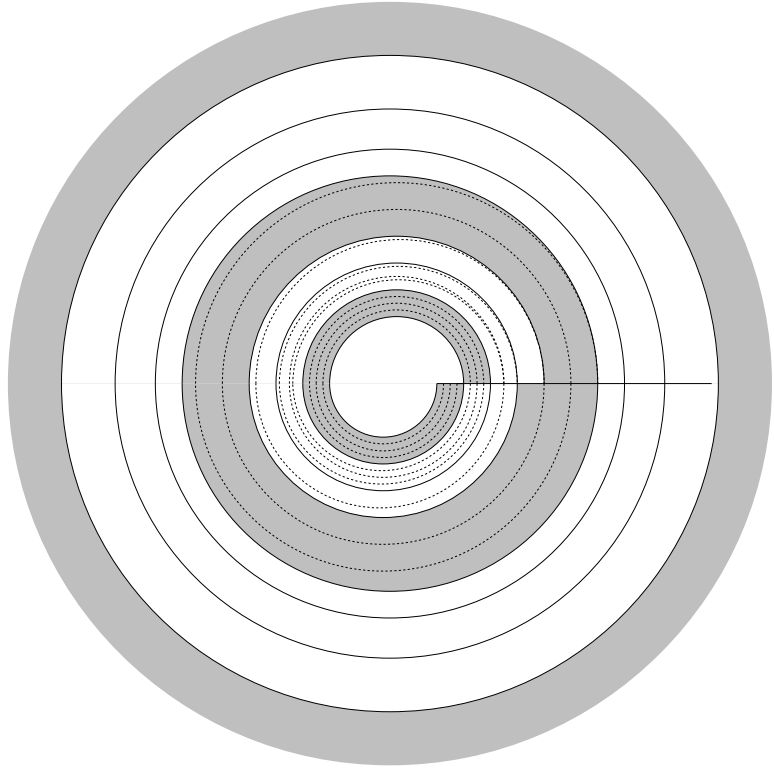}
 }
\caption{\label{BigPicture}
This illustrates the stages in the spiral construction.
 First (light gray), we divide the tube
into thinner tubes and entering paths are bent to hit
the sides of these. Next  (white), 
the thin tubes are bent and collapsed 
in pairs; in this figure four tubes are merged into 
one after two windings. In the third stage (gray),
the single tube is propagated until we can bend it 
to intersect itself. Next (white) comes a 
sequence of closed loops that gradually grow further 
apart. 
Finally we reach the empty region (gray), where no 
paths propagate.
This figure gives a rough idea of the construction,
 but scales have 
been drastically compressed to make all the stages visible 
in the same picture.
% This shows the construction near 
%the inner boundary of the spiral; a similar construction 
%takes place near the outer boundary working inwards. 
}
\end{figure}

 At this stage we have 
enough freedom to bend the curve to hit itself, forming a closed
loop that wraps once around the spiral.  
This is similar to 
what we did in the proof of Theorem  \ref{Triangles}, but 
in this case,  in order for this closed loop to  be Gabriel, 
there must be  another (larger)  closed loop parallel to it.
This did not occur in Theorem \ref{Triangles}.  This 
constraint requires us to   construct a sequence of 
parallel closed  loops in the spiral between windings $k^{1/2}$
and $k$. The closed loops gradually can become farther and 
farther apart; only $O(\sqrt{k})$ loops are used in all.
At winding $k$, there is no need 
for  a ``next'' loop  and the sequence of closed loops ends.
 The part of the spiral beyond winding $k$ is  an ``empty'' region
until we reach a closed loop coming from  the analogous 
construction in the outer half of the spiral.

In the remainder of the paper we give the details 
of the argument sketched above.

%-----------------------------------------------------------
\section{Gabriel bending in isosceles pieces} \label{Perturb sec} 

This section contains the main estimate   used in
proof of Theorem \ref{NonObtuse}.

Suppose $a$ and $b$ are endpoints of a $P$-path  in an 
isosceles piece $T$.
If  we keep $a$ fixed,  how far we can move $b$ and still have the
Gabriel condition hold? More precisely, can we 
 find an  $\epsilon >0$ so that all points on  the 
same $Q$-side as $b$  that 
are within distance $\epsilon$ of $b$ can be connected 
to $a$ by a   Gabriel segment?  If this holds 
we say that the {\defit allowable displacement} 
for the piece is at least $\epsilon$. 

\begin{lemma} \label{G tube bending}
Suppose $T$ is an isosceles piece  of width $w$.
 Suppose   $[a,b]$ is a
$P$-segment crossing $T$ and $R$ is the distance of 
$a$ from the vertex of the piece ($R=\infty$ if the 
piece is a rectangle). Then if $c$ is a point on the 
same $Q$-side as $b$ and is within distance 
\begin{eqnarray} \label{displacement est}  
\epsilon = 
\frac {|a-b|^2}{4} \max(\frac 1w, \frac 1R),
\end{eqnarray}
of $b$, then $[a,c]$ is a Gabriel segment crossing $T$. 
In particular, the allowable displacement is 
at least $\epsilon$.
\end{lemma}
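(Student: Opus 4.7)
The proof is an elementary but careful analytic-geometry computation. I would begin by setting up coordinates adapted to the piece $T$: place the vertex $v$ of $T$ at the origin (``at infinity'' if $T$ is a rectangle) and take the two $Q$-sides to be rays from $v$ making half-angle $\phi$ with the axis of symmetry of $T$. Then $a$ lies at radial distance $R$ on one ray and $b$ at the same radial distance on the other ray, so that $|a-b|=2R\sin\phi$ (this reduces to the perpendicular distance between the two parallel $Q$-sides in the rectangular limit). Write $c$ for the point on $b$'s $Q$-side at radial distance $R+\delta$ from $v$, so $|b-c|=|\delta|$.

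Next, I would translate the Gabriel condition for the pair $\{a,c\}$ into inequalities on $\delta$. Since $a$ and $c$ are the only added points on the $Q$-sides, the Gabriel condition says that the four disks whose diameters are the subsegments of the $Q$-sides cut off by $a$ and $c$ contain no added point other than their own endpoints. By the reflectional symmetry $a\leftrightarrow c$, only the two disks on $c$'s $Q$-side actually restrict the motion of $c$. Expanding the inequality $|a-M|^{2}\ge\rho^{2}$ for the center $M$ and radius $\rho$ of each of these disks, and using $|a-b|^{2}=4R^{2}\sin^{2}\phi$ together with $\cos 2\phi = 1-2\sin^{2}\phi$, one obtains in the small-$|\delta|$ regime a bound of the form $|\delta|\le |a-b|^{2}/(2R)$ when $\delta$ moves $c$ away from $v$, and an analogous bound with $R$ replaced by the distance from $a$ to the far corner when $\delta$ moves $c$ toward $v$. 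The rectangular limit ($\phi\to 0$ and $R\to\infty$) yields the analogous pair with the distances measured from $a$ to the two base corners of the rectangle.

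Finally, I would assemble the two one-sided bounds. Both of the relevant distances (to the near corner and to the far corner) are at most $w$, and in the triangular or small-$R$ case both are at most $R$, so the smaller of the two right-hand sides is at least $|a-b|^{2}/(2w)$ in all cases and at least $|a-b|^{2}/(2R)$ whenever $R\le w$. Taking the better of these two lower bounds and absorbing an additional factor of $2$ into a safety margin recovers the stated formula $\epsilon=(|a-b|^{2}/4)\max(1/w,1/R)$. The one subtle point, which I regard as the main obstacle in the calculation, is that one of the two inequalities carries a coefficient $\cos(2\phi)$ that degenerates as the piece angle $2\phi$ approaches $90^{\circ}$; however, the reduction at the end of Section~\ref{Refine tri} lets us restrict attention to strictly acute isosceles pieces ($\phi<45^{\circ}$), so $\cos(2\phi)$ stays uniformly positive and the estimate is uniform across all pieces.
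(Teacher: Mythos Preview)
Your symmetry reduction is the gap. The reflection across the axis of the isosceles piece swaps $a$ (at radial distance $R$ on one $Q$-side) with $b$ (at the same radial distance on the other), not with $c$ (at radial distance $R+\delta$). Hence ``the disks on $c$'s $Q$-side avoid $a$'' is \emph{not} equivalent to ``the disks on $a$'s $Q$-side avoid $c$'', and for one sign of $\delta$ it is strictly weaker. Concretely, take a triangular piece and let $c$ move toward the vertex ($\delta<0$). A direct expansion of $|a-M|^{2}\ge\rho^{2}$ for the disk with diameter $[c,\text{far corner}]$ yields $|a-M|^{2}-\rho^{2}=2R(R+w+\delta)\sin^{2}\phi$, i.e.\ only the trivial constraint $\delta\ge-(R+w)$; the disk with diameter $[\text{vertex},c]$ gives no constraint for $\delta<0$ either. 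The binding constraint comes instead from the disk on $a$'s side with diameter $[\text{vertex},a]$, and that is precisely what produces the $1/R$ term in the lemma. Since your sketch never examines the disks on $a$'s side, it cannot recover $\epsilon\ge|a-b|^{2}/(4R)$ when $R\ll w$; this is exactly the regime used in the spiral estimates later (Sections~\ref{dyadic merging}--\ref{empty}), so the stated bound, not a weaker one, is really needed.

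The paper's proof does work with the disks on $a$'s side. In the rectangular case a Pythagorean computation gives $|c-b|\ge|a-b|^{2}/(2r)$ for a disk of radius $r\le w/2$. In the non-rectangular case the paper first disposes of disks too small to reach the opposite $Q$-side, and then applies the inversion $z\mapsto 1/z$ with $a$ placed at the origin: this sends all circles through $a$ to parallel vertical lines whose abscissae $-1/(2r)$, $-1/(2R)$, $1/(2s)$ are immediate, and sends the opposite $Q$-side to a circle of radius comparable to $1/|a-b|$. The $1/R$ contribution then appears transparently as the separation between the vertex-line and the axis. A direct coordinate expansion along your lines can be made to work, but you must expand $|c-M|^{2}\ge\rho^{2}$ for disks centered on $a$'s $Q$-side (with arbitrary radii $r,s\le w/2$), not on $c$'s; once you do that, the two one-sided bounds you need are $|a-b|^{2}/(8r)$ and $|a-b|^{2}(1/R+1/s)/4$, from which the stated $\epsilon$ follows.
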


\begin{proof} 
First suppose the isosceles piece is a rectangle
($R = \infty$).
Consider disjoint  sub-segments of the $Q$-side containing
 $a$ that have $a$ as a common endpoint and consider 
the disks  $D_1, D_2$ with these segments as diameters. See 
Figure \ref{GabrielBend1}.  Assume these disks have
 radii $r$ and $s$. The diameters of these disks 
are  disjoint segments  that both lie on the same 
non-base side of an isosceles piece, so their
length adds up to be less than the width of the 
piece, i.e., $ 2r+2s \leq w$. Thus  $\max(2r,2s) \leq w$.

\begin{figure}[htb]
\centerline{
\includegraphics[height=1.5in]{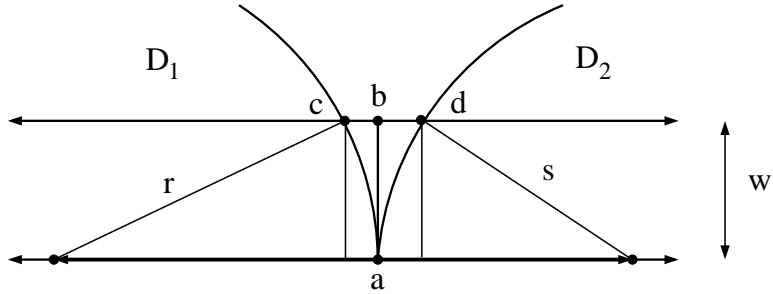}
 }
\caption{\label{GabrielBend1}
The segment $[a,b]$ is a $P$-segment for a rectangular 
piece. A simple estimate shows  
$|c-b| \geq |a-b|^2/r$ and $|d-b|\geq |a-b|^2/s$.
}
\end{figure}

Suppose 
$[c,d]$ is the $Q$-segment containing $b$ that is 
disjoint from these disks (again, see Figure 
\ref{GabrielBend1}). 
We want to estimate  $|c-b|$ and $|d-b|$ from below.  
Such a lower bound gives  the desired  lower bound 
on the  allowable displacement.

If the disk $D_1$ is too small, i.e.,  $r < |a-b|$, 
 then the disk $D_1$ does not hit the 
$Q$-side containing $b$ and the Gabriel condition is 
automatically satisfied.  Thus we may assume 
$r \geq  |a-b|$. 
Then by the Pythagorean theorem 
$$ |c-b| = r- \sqrt{ r^2 -|a-b|^2} ,$$
or  (using $1-\sqrt{1-y} \geq y/2$ on $[0,1]$), 
$$  \frac 1r |c-b| = 1- \sqrt{  1 - \frac{|a-b|^2}{r^2} } 
         \geq  \frac{|a-b|^2}{2 r^2}   $$
so  $  |c-b| \geq |a-b|^2/2r$.
The calculation for the other disk is identical, so the 
two disks omit  all points within distance  
$$  \epsilon = |a-b|^2 \min( \frac 1r, \frac 1s)  \geq 
\frac{|a-b|^2}{2w}.$$
of  $b$.
Since $1/R=0$ in this case, this implies 
(\ref{displacement est}).

Next we consider what happens when the piece is not 
a rectangle. To be concrete, we assume one $Q$-side
lies on the real axis, the  vertex of the piece is 
at $-R$ and the $P$-path connects $a=0$ to $b$ in 
the upper half-plane. 
Suppose the piece has angle $\theta$.
Some elementary trigonometry shows that the disk $D_1$ 
does not hit the $Q$-side containing $b$ if 
(see Figure  \ref{GabrielBend3})
$$ r <   (R-r)\sin \theta. $$

\begin{figure}[htb]
\centerline{
\includegraphics[height=2.0in]{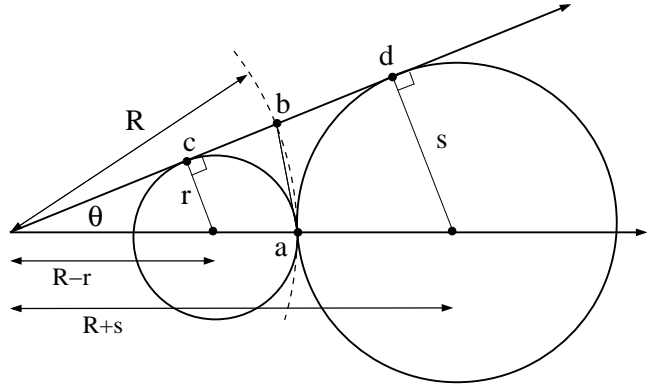}
 }
\caption{\label{GabrielBend3}
If $r,s$ are small enough then the Gabriel disks 
for one $Q$-side don't hit the other $Q$-side.
}
\end{figure}

Since $|a-b| = 2 R \sin \theta/2$, this is equivalent to
\begin{eqnarray} \label{r ineq 1}
 r < R\frac {\sin \theta}{1 + \sin \theta} 
      =    \frac {|a-b|\sin\theta}{2\sin(\theta/2) (1 +\sin \theta)}.
\end{eqnarray} 
By the double angle formula,  for $0 \leq \theta \leq \pi/2$ 
we have  $\sin \theta = 2 \sin \frac \theta 2 \cos  \frac \theta 2$, so
 \begin{eqnarray} \label{sin est}
  \frac  {\sin\theta}{2\sin(\theta/2) (1 +\sin \theta)} 
  =  \frac  {\cos\theta/2}{(1 +\sin \theta)} 
   \geq   \frac  {\cos \pi/4}{(1 +\sin  \pi/2 )} 
            =  \frac 1{2 \sqrt{2}}.
\end{eqnarray}
Hence (\ref{r ineq 1}) holds if $r < |a-b|/2\sqrt{2}$.  
If this condition  holds, then  the point $c$ is a corner of
the piece $T$, so the estimate holds trivially to the
left of $b$. Therefore 
we may assume $r \geq |a-b|/2 \sqrt{2}$ in what follows.
Note also that 
this implies   $w \geq |a-b|/\sqrt{2}$ since $r\leq w/2$.

A similar calculation shows $D_2$ does not hit the 
opposite $Q$-side if 
$$ s <   \frac {|a-b|\sin\theta}{\sin(\theta/2) (1- \sin \theta)}
  = 2 |a-b| \frac {\cos \theta/2}{1-\sin \theta}.$$
See Figure \ref{GabrielBend3}. 
The trigonometric
 function on the far right is increasing for $\theta 
\in [0, \pi/2]$
(compute the derivative), so it takes its minimum 
value $1$ at $\theta =0$.
Hence $D_2$ does not hit the opposite $Q$-side if 
$s < 2 |a-b|$.
In this case $d$ is a
corner of the isosceles piece and the lemma holds 
trivially to the right of $b$.
 Therefore we may assume $s \geq |a-b|$ in what follows.

Now suppose we have an isosceles piece with angle 
$\theta >0$. We normalize the picture as in 
Figure \ref{GabrielBend5} with one $Q$-side along
the real axis, the vertex of the piece at $-R$. 
The other $Q$-side is labeled $L$.
We consider a $P$-path with one endpoint at 
the origin and the other endpoint (labeled $b$ in the 
figure)  on $L$ in the upper half-plane. We also consider 
disks  $D_0$, $D_1$, $D_2$ 
centered at points $-R$,$-r$, $s$ on the real
line that are tangent at the origin. 
%We assume that 
%$R \geq 2r$ so that $D_1 \cap \reals \subset [-R, 0]$.
We let $[c,d]$ be the segment of $L \setminus (D_1
\cup D_2)$ that contains $b$.
 See Figure \ref{GabrielBend5}.

\begin{figure}[htb]
\centerline{
\includegraphics[height=2.5in]{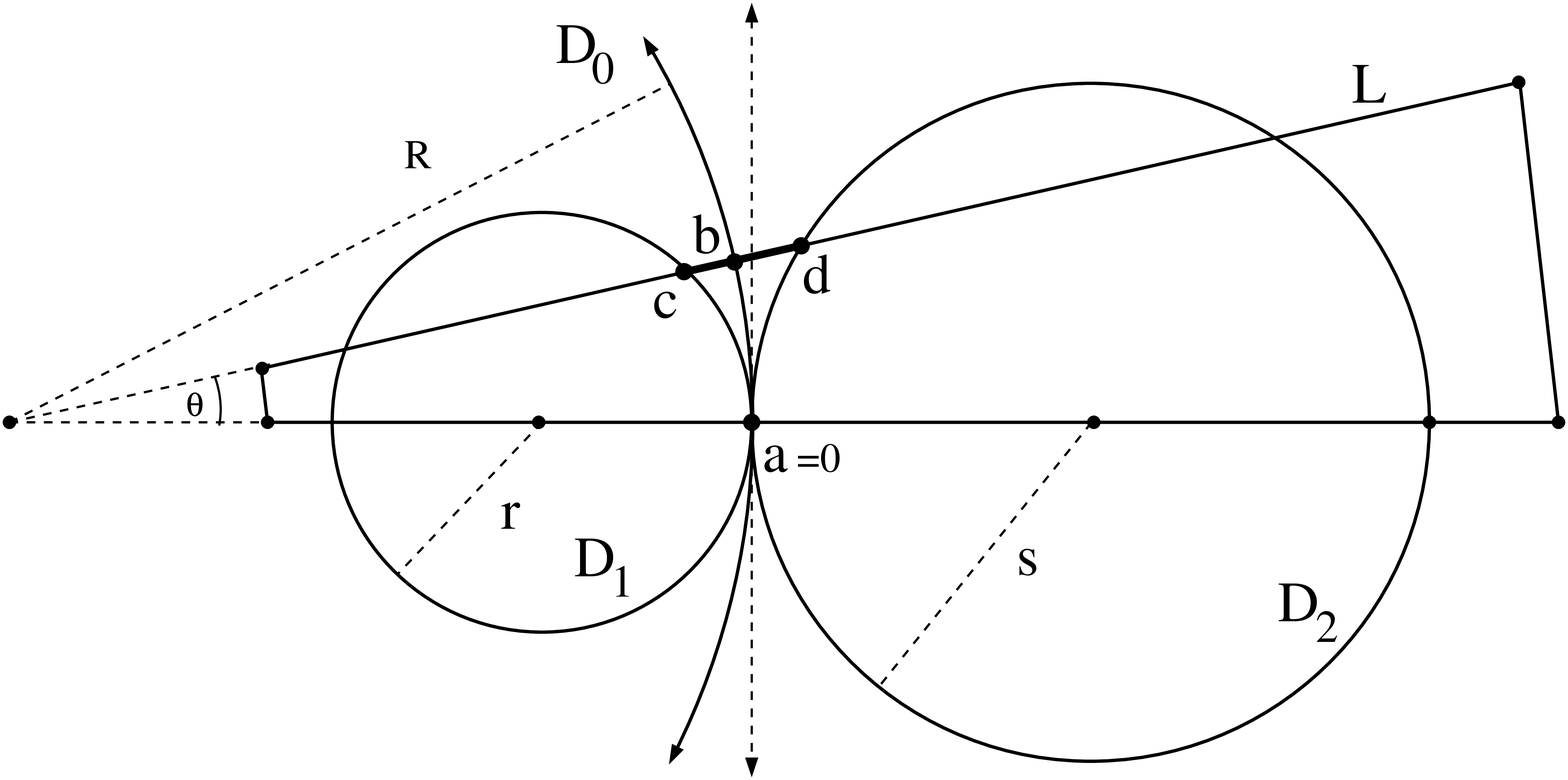}
 }
\caption{\label{GabrielBend5}
We want a lower bound on $|b-c|$ and $|b-d|$ 
in terms of $|a-b|$, $r$, $s$ and $R$. We prove this 
by applying the transformation $z \to 1/z$ to this 
picture, to get the picture in Figure 
\ref{GabrielBend7}.
}
\end{figure}

\begin{figure}[htb]
\centerline{
\includegraphics[height=3.5in]{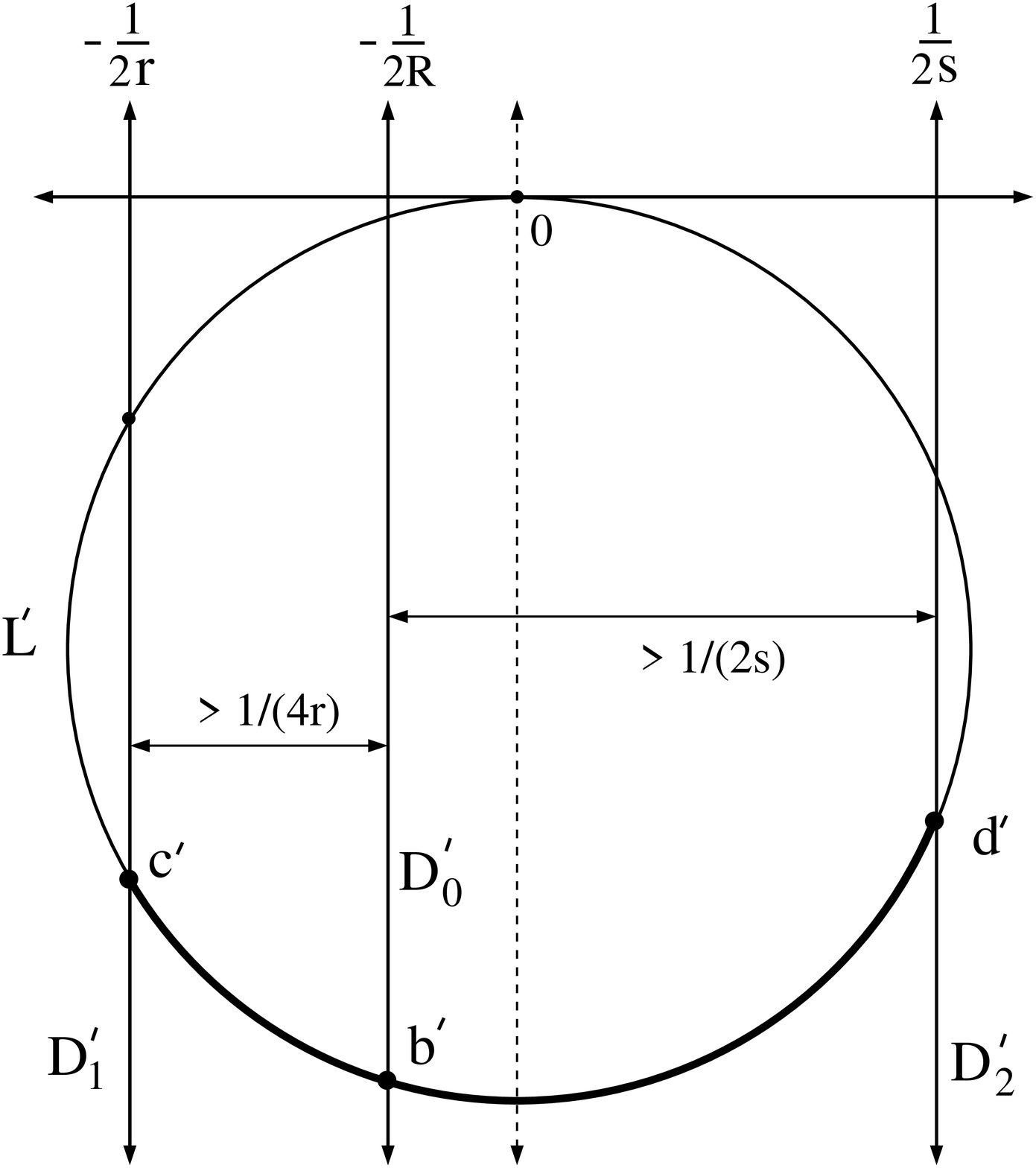}
 }
\caption{\label{GabrielBend7}
This is the inversion of Figure \ref{GabrielBend5}. 
The point $a=0$ maps to $a'=\infty$ and 
the circles through $0$  map to vertical lines, 
whose distance apart is easy to compute. The line 
$L$ maps to a circle $L'$. 
}
\end{figure}

Now apply the map $z \to 1/z$. The origin is mapped to 
infinity and the circles centered on the real line 
passing  through $0$ now map to vertical lines.
(The map $z \to 1/z$ is a linear fractional transformation
that maps $0$ to $\infty$, so circles through 
$0$  map to circles through $\infty$, i.e., lines.) 
See Figure \ref{GabrielBend7}.
The boundary of $D_1$ maps to $L_1=\{ x= -1/2r\}$,
the boundary of $D_2$ maps to $L_2=\{ x= 1/2s\}$
and the boundary of $D_0$ maps to $L_0= \{ x = 1/2R\}$. 
Since  $R \geq 2r$, we have the distance between 
$L_0$ and $L_1$ is at least $1/4r$.

The line $L$ is distance $ \rho= R \sin \theta$ from the origin, 
whereas $|a-b| = 2 R \sin(\theta /2), $
so  by a similar calculation to (\ref{sin est}), we get
$$ \rho=  \frac {|a-b| \sin \theta}{2 \sin (\theta/2)} 
       \geq \frac{|a-b|}{\sqrt{2}},$$
since we assume $ 0 \leq \theta \leq \pi/2$.
This means that on the line $L$, the derivative of 
$1/z$ is bounded above by $2/|a-b|^2$. Therefore the 
image of the segment $[c,b]$ has length at 
most $ 2  |c-d|/|a-b|^2  $. Thus
$$ |c'-b'| \leq 2 \frac {|c-d|}{|a-b|^2}.$$
 However,  the image  of this 
segment is a circular arc that connects the lines $L_1$
and $L_0$ and hence has length  at least 
$$
    |c'-b'| \geq |\frac 1{2r} - \frac 1{2R}| \geq \frac 1{4r}
%\geq \frac 1{4R},
$$
since $r  \leq R/2$.
Combining these inequalities gives 
$$ 2   \frac{|c-b|}{|a-b|^2}   \geq  \frac 1{4r}.$$
Since $r \leq w/2$ and $r \leq R/2$ this gives 
$$
 |c-b| \geq   \frac{|a-b|^2}{8 r}
\geq \max(   \frac{|a-b|^2}{4w}, \frac{|a-b|^2}{4R})
.$$
Similar calculations  show 
$$  \frac { 2|d-b| }{ |a-b|^2}   \geq |b'-d'|,$$
and 
$$ |b'-d'| \geq \frac 1{2R} + \frac 1{2s} 
\geq \frac 1{2R} + \frac 1{w} 
$$
from which we deduce 
$$ |d-b| \geq    \frac {|a-b|^2}4 ( \frac 1{R} + \frac 1w)
 \geq    \frac {|a-b|^2}4 \max( \frac 1{R} , \frac 1w).$$
\end{proof}

%-----------------------------------------------------------
\section{Gabriel bending in tubes} \label{Gabriel tubes} 

Next we apply the Cauchy-Schwarz inequality to our 
displacement estimate  for pieces, to get a
displacement estimate for tubes:

\begin{lemma} \label{bend in tube} 
Suppose $T$ is  a tube of width $w$, minimal-length 
$L = \tilde \ell(T) $ and  consists of $p$ isosceles pieces. 
 Let $a,b$ be points on opposite ends 
of $T$ that are connected by a $P$-path in $T$. Then $a$ can 
be connected  by a Gabriel path to any point on the opposite 
end of $T$ that is within distance $d=L^2/(4pw)$ of $b$.
\end{lemma}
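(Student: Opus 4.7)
The plan is to apply the per-piece displacement bound of Lemma~\ref{G tube bending} inside each of the $p$ isosceles pieces $T_1,\dots,T_p$ making up $T$, concatenate the allowable bendings, and then apply the Cauchy--Schwarz inequality to convert the resulting sum of squares into the desired quadratic-in-$L$ bound.

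First I would fix a $P$-path $\gamma$ from $a$ to $b$, and let $\ell_k$ be the length of the $P$-segment of $\gamma$ inside $T_k$. Since every $P$-segment crossing $T_k$ has length at least the shorter base length $\ell(T_k)$, we have $\sum_k \ell_k = \ell(\gamma) \geq \tilde\ell(T) = L$. Next, I would construct a Gabriel path from $a$ by replacing, in each $T_k$, the $P$-segment by a slightly tilted segment whose exit on the next $Q$-side is displaced by a chosen amount $\delta_k$ from where the $P$-segment would exit. All pieces share the common width $w$, so Lemma~\ref{G tube bending} (using $\max(1/w,1/R)\geq 1/w$) guarantees that any choice $|\delta_k|\leq \ell_k^2/(4w)$ produces a Gabriel segment across $T_k$. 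Because the $P$-foliation of the tube identifies opposite $Q$-sides by an isometry, these local displacements add up and the total displacement of the final exit from $b$ equals $\sum_k \delta_k$. By Cauchy--Schwarz,
$$\sum_{k=1}^p \ell_k^2 \;\geq\; \frac{1}{p}\Bigl(\sum_{k=1}^p \ell_k\Bigr)^2 \;\geq\; \frac{L^2}{p},$$
so the set of reachable endpoints on the far $Q$-end is an interval around $b$ of half-width at least $\sum_k \ell_k^2/(4w)\geq L^2/(4pw)=d$, giving the lemma.

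The main obstacle I expect is a small bookkeeping issue. The bound from Lemma~\ref{G tube bending} is phrased for a $P$-segment in a single piece, but after bending in earlier pieces the segment actually used in $T_k$ starts at a shifted entry point, so its $P$-segment length is some $\ell_k'$ slightly different from $\ell_k$. Since the total displacement is at most $d$ and the $P$-segment lengths vary only linearly along the $Q$-sides of a trapezoid, this difference is small and uniformly controlled. The cleanest way to finish the argument is to view $(\delta_1,\dots,\delta_p)\mapsto\text{total displacement}$ as a continuous map that vanishes at the origin and, by Lemma~\ref{G tube bending} applied to the actual segments, can be made to take any signed value up to $\sum_k (\ell_k')^2/(4w)$ along each coordinate; an intermediate-value argument together with the estimate $\ell_k'\geq \ell_k(1-o(1))$ for the small perturbation regime then yields any target displacement in $[-d,d]$.
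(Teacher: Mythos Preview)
Your approach is essentially the paper's: apply the per-piece displacement bound from Lemma~\ref{G tube bending} and sum via Cauchy--Schwarz. The one difference worth noting is your choice of $\ell_k$. You take $\ell_k$ to be the length of the $P$-segment of $\gamma$ inside $T_k$, which is why you run into the bookkeeping issue in your last paragraph: after bending in earlier pieces, the entry point into $T_k$ shifts and the relevant $P$-segment length becomes some $\ell_k'\neq\ell_k$. Your intermediate-value patch works, but it is unnecessary. The paper instead sets $\ell_j$ equal to the shorter base length of the $j$th piece, so that $\sum_j \ell_j = \tilde\ell(T)=L$ exactly. The point is that \emph{every} $P$-segment crossing $T_j$, from any entry point, has length at least this $\ell_j$; hence the bound $\ell_j^2/(4w)$ from Lemma~\ref{G tube bending} is available at each step regardless of how the path has been bent so far, and the displacements simply add. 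With this choice the proof is three lines and no continuity argument is needed.
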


\begin{proof}
For the $j$th piece in the tube, 
 let $\ell_j$ be the length of the piece (the 
shorter of its two base lengths, zero for triangles).
Then  $L = \sum_j \ell_j$ by definition.
By  the Cauchy-Schwarz inequality
$$ L^2  = (\sum_j \ell_j)^2 
       \leq (\sum_j 1^2) \cdot (\sum_j \ell_j^2) =
       p \sum_j \ell_j^2  .$$
The  allowable displacement of each piece is at least $\ell_j^2/4w$, 
so the total  allowable displacement is at least  
$$ \sum_j  \frac {\ell_j^2} {4w} \geq \frac {L^2}{4p w}.$$ 
\end{proof} 

\begin{cor} \label{corner cor}
If $T$ is a tube with minimal-length $L$, width $w$ composed 
of $p$ pieces and $ w  \leq L/2\sqrt{p}$, then opposite corners 
of the tube can be connected by a Gabriel path inside the tube.  
\end{cor}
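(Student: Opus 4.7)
The plan is to apply Lemma \ref{bend in tube} directly, using the $P$-path that runs along one $P$-side of the tube as the reference path and noting that the opposite corner lies at distance exactly $w$ from the endpoint of that reference path on the far $Q$-end.

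First I would fix notation matching the definition of a tube from Section \ref{tubes}: let the corners be $x_0, y_0, x_1, y_1$ with $[x_0,y_0]$ and $[x_1,y_1]$ the two $Q$-ends, and let $\gamma_0$, $\gamma_1$ be the two $P$-sides with endpoints $\{x_0, x_1\}$ and $\{y_0, y_1\}$ respectively. Pick a pair of opposite corners to connect, say $x_0$ and $y_1$. The $P$-side $\gamma_0$ is a $P$-path in $T$ with endpoints $x_0$ on one $Q$-end and $x_1$ on the other, so the hypothesis of Lemma \ref{bend in tube} (a $P$-path connecting points on opposite ends) is satisfied with $a = x_0$ and $b = x_1$.

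Next I would compute the distance that must be covered by the Gabriel path. The target $y_1$ lies on the same $Q$-end as $b = x_1$, and since $[x_1,y_1]$ is a $Q$-end of the tube its length equals the width, giving $|b - y_1| = w$. By Lemma \ref{bend in tube}, $x_0$ can be joined by a Gabriel path inside $T$ to any point on the far $Q$-end that is within distance $d = L^2/(4pw)$ of $x_1$. Therefore it suffices to verify $w \le L^2/(4pw)$, equivalently $4pw^2 \le L^2$, equivalently $w \le L/(2\sqrt{p})$, which is precisely the assumed hypothesis. Hence $y_1$ lies in the allowable displacement disk around $x_1$ and the desired Gabriel path from $x_0$ to $y_1$ exists.

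There is essentially no obstacle here: the corollary is just a re-packaging of Lemma \ref{bend in tube} in which the displacement allowance $L^2/(4pw)$ is compared against the one thing we actually need to bridge, namely the full width $w$ of a $Q$-end. The only minor care is to note that the path produced stays inside $T$ (built into the statement of Lemma \ref{bend in tube}) and that the choice of which pair of opposite corners is taken (either $\{x_0,y_1\}$ or $\{x_1,y_0\}$) does not matter, since the argument is symmetric under swapping the two $P$-sides.
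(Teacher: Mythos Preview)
Your proof is correct and follows essentially the same approach as the paper: both apply Lemma \ref{bend in tube} and verify that the allowable displacement $L^2/(4pw)$ is at least the width $w$, which is equivalent to the hypothesis $w \le L/(2\sqrt{p})$. The paper's version is terser (it substitutes $L \ge 2w\sqrt{p}$ directly into the displacement formula), but the underlying computation is identical.
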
 

\begin{proof}
By assumption we  have $L \geq 2w \sqrt{p}$ so the 
allowable displacement is  at least
$$   \frac {(2w\sqrt{p})^2}{ 4 w p}  \geq  w,$$
where $w$ is the 
width of the tube. Thus opposite corners can be connected. 
\end{proof}

\begin{cor} \label{narrow tubes} 
Suppose $T$ is a C-tube, S-tube or simple G-tube composed
of $p$ isosceles pieces  and we 
cut $T$ into $M$ parallel, equal width sub-tubes with $M 
\geq  8 \sqrt{p}$. Then the opposite corners of each sub-tube 
can be connected  by a Gabriel path in that sub-tube.
\end{cor}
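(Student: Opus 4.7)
The plan is to apply Corollary \ref{corner cor} inside each of the $M$ equal-width sub-tubes. Let $w$ be the width of $T$, so each sub-tube has width $w' = w/M$ and is composed of at most $p$ isosceles pieces. Since $M \geq 8\sqrt{p}$ we have $w' \leq w/(8\sqrt{p})$, so it suffices to show that the minimal-length $\tilde\ell'$ of each sub-tube is at least $w/4$; then $w' \leq \tilde\ell'/(2\sqrt{p})$ and Corollary \ref{corner cor} produces the desired Gabriel path connecting opposite corners.

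To bound $\tilde\ell'$ from below, I would iterate Lemma \ref{sum short}. Assume first that $M$ is a power of $2$. Split $T$ by its center $P$-path into two halves; by Lemma \ref{sum short} each half $H$ satisfies $\ell(H) \geq \ell(T)$ and $\tilde\ell(H) \geq \ell(T)/4$. Applying the same lemma to $H$ yields a quarter-tube $H'$ with $\ell(H') \geq \ell(H) \geq \ell(T)$ and $\tilde\ell(H') \geq \ell(H)/4 \geq \ell(T)/4$. Iterating $\log_2 M$ times, each of the resulting $M$ sub-tubes has $\tilde\ell \geq \ell(T)/4$. Since $T$ is a C-tube, an S-tube, or a simple G-tube, Corollary \ref{long tubes} gives $\ell(T) \geq w$, so $\tilde\ell \geq w/4$ for every sub-tube, which is exactly what is needed.

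The main obstacle is the case when $M$ is not a power of $2$. One clean way to handle it is to replace $M$ by $M' = 2^{\lceil \log_2 M \rceil}$: this still satisfies $M' \geq 8\sqrt{p}$ and $M' \leq 2M$, and the halving argument above produces $M'$ sub-tubes in each of which opposite corners can be connected; the $M$ originally requested sub-tubes are unions of these finer sub-tubes, so the same Gabriel corner-connection persists. Alternatively, one can extend the proof of Lemma \ref{sum short} to a direct $M$-way split by the same longer-base/shorter-base dichotomy applied at fraction $1/M$, which already yields $\tilde\ell \geq \ell(T)/4$ (in fact something closer to $\ell(T)/2$) for each sub-tube. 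Either route closes the argument and delivers Corollary \ref{narrow tubes}.
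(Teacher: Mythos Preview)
Your overall strategy matches the paper's: bound the minimal-length of each sub-tube below by $w/4$ via Lemma \ref{sum short} and Corollary \ref{long tubes}, then invoke Corollary \ref{corner cor}. The power-of-$2$ case you wrote out is correct.

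However, your handling of general $M$ has a genuine gap. When you pass to $M' = 2^{\lceil \log_2 M\rceil}$, the $M$ equal-width sub-tubes are \emph{not} unions of the $M'$ equal-width sub-tubes (for example, with $M=3$, $M'=4$, the interval $[0,w/3]$ is not a union of intervals of the form $[jw/4,(j+1)w/4]$). So the Gabriel paths you build in the $M'$ sub-tubes connect the wrong corner pairs and do not transfer to the $M$ sub-tubes. Your second suggested fix (redo the proof of Lemma \ref{sum short} for an $M$-way split) could be made to work, but it is unnecessary.

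The paper avoids the whole issue with a single application of Lemma \ref{sum short}, and this is worth noting because it is both simpler and cleaner than your iteration. For any $M\ge 2$, each of the $M$ sub-tubes is exactly one half (obtained by the center path) of a wider tube contained in $T$, namely its union with an adjacent sub-tube. Lemma \ref{sum short} applied once to that wider tube gives $\tilde\ell(\text{sub-tube}) \ge \tfrac14 \ell(\text{wider tube})$. Since $\ell(\gamma_t)$ is affine in $t$, every sub-tube of $T$ has length at least $\ell(T)$, and Corollary \ref{long tubes} gives $\ell(T)\ge w$. Thus $\tilde\ell(\text{sub-tube})\ge w/4$ directly, with no iteration and no restriction on $M$.
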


\begin{proof}
Since $M \geq 2$, each sub-tube is half of a wider tube inside $T$
and hence has length that is at   most four times its 
minimal length by Lemma \ref{sum short}. 
Moreover, the length of each sub-tube is 
bounded below by the length of $T$. 
  By Corollary \ref{long tubes}
 every simple return tube has length 
that is at least its  width $w$ and hence 
its  minimal-length is at least $w/4$.
Thus each sub-tube has minimal-length 
at least $M/4 \geq 2\sqrt{p} $ times longer than its width. 
The conclusion then follows from  Corollary \ref{corner cor}.
\end{proof}

%-----------------------------------------------------------------
\section{Gabriel bending in spirals} \label{G bend spirals} 

Finally we have to consider bending in pure spirals. 
This is the final, but most complicated, step in the proof
of Theorem \ref{NonObtuse}, and we break the construction 
into several steps described in different sections. 

\begin{lemma}
Suppose $S$ is a pure spiral made up of at most 
$k$ isosceles pieces.  Then we can mesh the interior of 
the spiral using at most $O(k^{1.5})$ quadrilaterals
and triangles so that the added vertices make every isosceles
piece in the spiral Gabriel. 
Also,  every path entering the spiral can 
be Gabriel bent to terminate within one winding.
The bound  on the number of points added 
is independent of $N$,  the number of windings 
of the spiral.
\end{lemma}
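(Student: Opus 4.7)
The plan is to carry out in detail the four‑stage construction sketched in Section \ref{intro NonObtuse}, using Lemma \ref{G tube bending} (the quadratic displacement estimate in a single isosceles piece), Lemma \ref{bend in tube} (its tube version), and Corollary \ref{narrow tubes} (Gabriel paths between opposite corners of sufficiently thin sub‑tubes). Set $M=\lceil 8\sqrt{k}\,\rceil$. \emph{Stage 1.} I would split the innermost winding and the outermost winding of the spiral $S$ into $M$ parallel equal‑width sub‑tubes. Each such sub‑tube has length comparable to the width of $S$ and is made up of at most $k$ pieces, so Corollary \ref{narrow tubes} furnishes a Gabriel path joining opposite corners. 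Using these paths as ``termination barriers'' and combining with standard quadrilateral propagation, every propagation path that enters $S$ from either end can be Gabriel‑bent to terminate at an interior corner of a thin sub‑tube within one winding, adding only $O(k)$ vertices.

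\emph{Stage 2.} The remaining difficulty is the $M=O(\sqrt{k})$ internal corners produced on each collar in Stage 1, whose propagation paths would otherwise wind around $S$ all $N$ times. I would eliminate them by pairwise merging: in a sub‑tube of width $w/M$ and winding‑length comparable to $w$, Lemma \ref{bend in tube} gives a Gabriel‑bending budget on the order of $w/M$, which is exactly the gap between adjacent propagation paths, so adjacent paths can be fused by a Gabriel arc. Iterating this pairing $\lceil\log_{2} M\rceil$ times collapses the $M$ paths into a single path, all inside an annular region of $S$ of radial width well below $k^{1/3}$ windings; the symmetric construction is carried out from the outer collar.

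\emph{Stage 3 and counting.} The single surviving inner path is propagated radially outward until, around winding $k^{1/2}$, the accumulated Gabriel displacement over a full winding is large enough to Gabriel‑bend the path back to meet itself, producing a polygonal closed loop around $S$. A single closed loop is not by itself Gabriel, since the disks on its outer $Q$‑sides must be shielded; I would therefore construct a sequence of concentric closed loops, each supplying the shielding for the previous one. Because windings grow in length as we move outward, the allowable spacing between consecutive loops grows, and only $O(\sqrt{k})$ loops are needed before we reach a winding where the Gabriel disks no longer reach the next winding, at which point no further loop is required; the region between the last inner loop and the first outer loop is empty and requires no further propagation. Adding $O(k)$ vertices from the two collars, $O(k)$ from the merging and radial propagation, and at most $O(\sqrt{k})$ loops of length $O(k)$ each, yields the total bound $O(k^{1.5})$; since the outermost loop in this construction depends only on $k$, the estimate is independent of $N$.

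The main obstacle will be the Gabriel verification in Stage 3: the radial spacing between consecutive closed loops has to be chosen small enough to be realized by the displacement budget of Lemma \ref{bend in tube} on a single winding at that radius, yet large enough that the total number of loops stays $O(\sqrt{k})$, and the loops must fit together so that no Gabriel disk across any edge contains an interior vertex of the construction. A secondary nuisance is that pieces near where two propagation paths merge in Stage 2, and near where the final loop terminates, belong to two bent arcs simultaneously; one has to check by direct inspection that the disks on both $Q$‑sides of each such piece remain disk‑disjoint, which again reduces to the quadratic displacement bound of Lemma \ref{G tube bending}.
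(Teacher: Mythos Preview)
Your four–stage outline is the paper's construction, and Stages 1, 3 and the ``empty region'' are described correctly in spirit. There are, however, two genuine gaps.

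\textbf{Counting.} The collars alone already cost $O(k^{1.5})$ vertices, not $O(k)$: each of the $M\asymp\sqrt{k}$ narrow sub-tubes carries a Gabriel path with $k$ segments. Likewise the single-path stage runs from winding $\asymp k^{1/3}$ to winding $\asymp k^{1/2}$, which is $\asymp k^{1/2}$ windings of $k$ pieces each, again $O(k^{1.5})$. Your total is still $O(k^{1.5})$, but only because all of your underestimates happen to land below the loop contribution; the breakdown you give is not right.

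\textbf{The merging rate.} The real difficulty is Stage 2, not Stage 3. Your computation that one winding gives a Gabriel budget of order $w/M$ is correct for the \emph{first} merge, but it does not iterate. After $j$ merges the required displacement is $2^{j}w/M$, while the budget from a single further winding is only $\asymp (2j{+}1)^2 M/(4k\cdot 2^{j})$; comparing these, the ``one merge per winding'' scheme fails once $j\gtrsim 6$, long before the $\tfrac12\log_2 k$ merges you need. The paper resolves this by grouping windings into blocks $S_j$ running from winding $\lambda^{j}$ to $\lambda^{j+1}$ with $\lambda=2^{2/3}$: block $S_j$ has length $\gtrsim\lambda^{2j}$ and $\asymp k\lambda^{j}$ pieces, so the bending condition $w^2\le L^2/(4p)$ becomes exactly $2^{-2m}\le \lambda^{3j}/(4k)=2^{2j-2-2K}$, allowing $2^{K-j+1}$ sub-tubes in $S_j$ and hence one halving per block. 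This specific rate is what makes the merging possible at every level \emph{and} is what forces the vertex count in this stage to be $\sum_j \lambda^{j}\cdot 2^{K-j}\cdot k = O(k^{1.5})$, not $O(k)$. Your sentence ``all inside an annular region of radial width well below $k^{1/3}$ windings'' is the right target, but nothing in your argument explains how to distribute those $\asymp k^{1/3}$ windings among the $\asymp\log k$ merge levels; that distribution is the content of the proof.

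For Stage 3 the paper does exactly what you sketch, choosing indices $\{j_p\}\subset[\sqrt{k},k]$ with $j_{p+1}-j_p\le j_p^{2}/(128k)$ and verifying by a dyadic-block count that $O(\sqrt{k})$ indices suffice; this is routine once stated. So the obstacle you flagged is real but secondary; the step you glossed over in Stage 2 is the one that needs an actual idea.
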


Without loss of generality we will assume the bound
$k$ on the number of isosceles pieces is a power 
of $4$, $k = 4^K = 2^{2K}$ and that it is at least 
16 times larger than that actual number of isosceles
pieces in the spiral.   We do this so that we can 
apply  Corollary \ref{narrow tubes} with the value $M= \sqrt{k}$ 
instead of $M =16 \sqrt{k}$. This will make notation 
slightly easier and does not affect the statement
of the  lemma.

For simplicity, we 
rescale so that the entrance and exit segments of the 
spiral have length $1$, i.e., the width of the tubes 
in the spiral is $w=1$.
The spiral is a topological 
annulus, with one bounded and one unbounded 
complementary component. The two ends of the spiral corresponding 
to these two regions  will be 
called the inner-end and outer-end respectively.
The spiral is a union of $N$ simple G-tubes joined 
end-to-end. We number these  consecutively 
starting at the inner-end and denote them $T_1, 
\dots, T_N$. We will call $T_1$ and $T_N$ the {\defit innermost} 
and {\defit outermost} tubes (or windings) respectively. 

If the number of windings satisfies  $1 < N \leq 100 $
then we can treat the spiral like a simple tube. 
We divide both  $T_1$ and $T_N$  into 
$\sqrt{k} =2^K$ parallel sub-tubes just as we did for simple 
G-tubes. We call these the {\defit narrow tubes}. 
Then every $P$-path entering the  innermost tube 
either hits a corner of one of the  narrow tubes, or it 
enters one of the narrower sub-tubes. In the latter case, it
can be Gabriel bent to hit a far  corner of this  narrow 
sub-tube. Similarly for paths entering the outermost tube.
The $\sqrt{k} $ corners of the narrow tubes are propagated 
as $P$-paths through the spiral until they hit the corners 
of the narrow tubes at the other end. This creates 
at most $O(k^{1.5})$ vertices in the spiral.
See Figure \ref{NarrowTubes2}.

\begin{figure}[htb]
\centerline{
\includegraphics[height=2.0in]{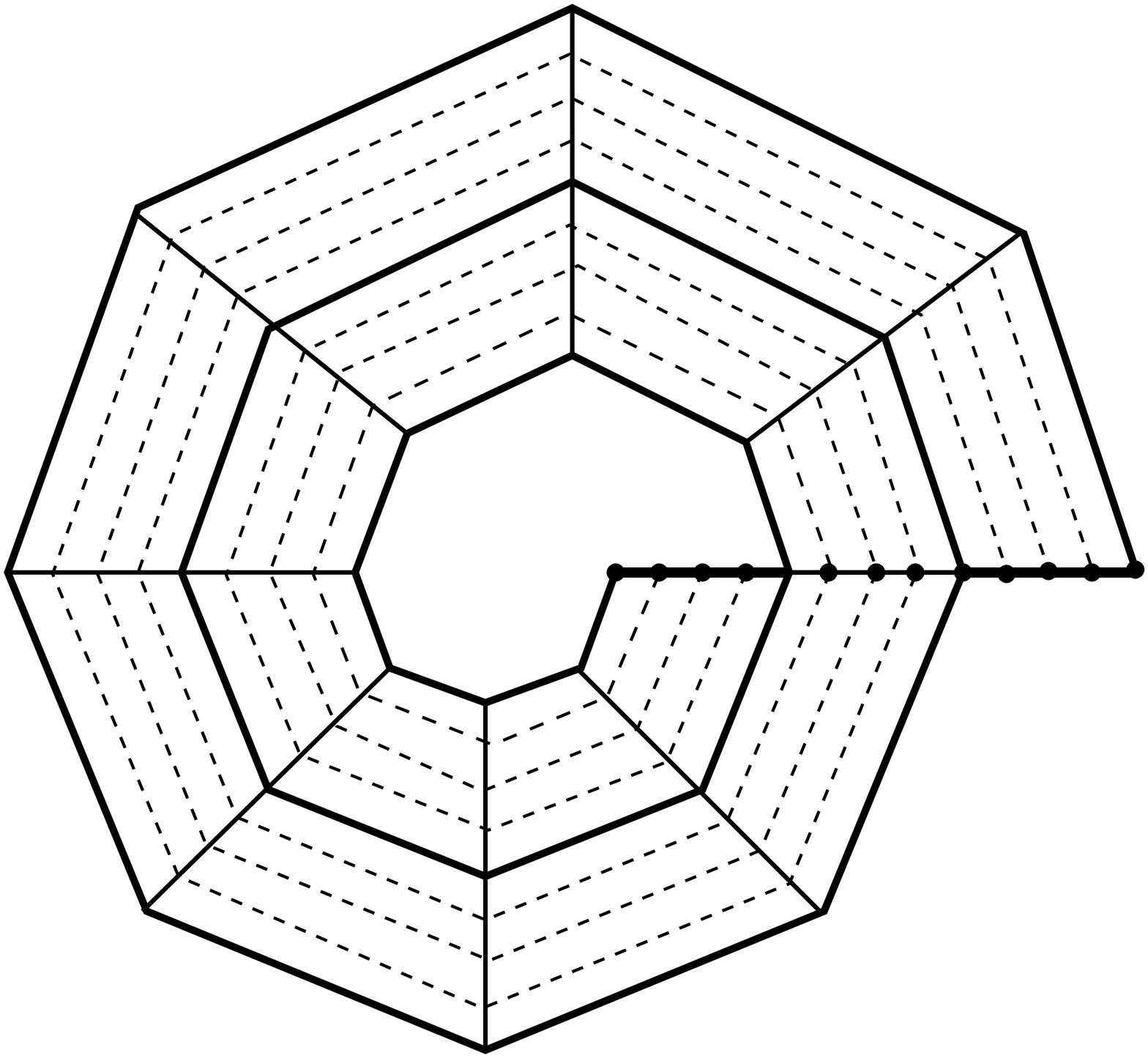}
 }
\caption{\label{NarrowTubes2}
For small spirals (less than 100 windings)
 we divide the tubes into $\sqrt{k}$
narrow sub-tubes. Every entering path can be Gabriel
bent to terminate within one winding.  Here 
we show $N=2$.
}
\end{figure}

For larger $N$ there are four different types of 
construction we will use, divided by certain ``phase
transitions''  in our ability to bend curves as $N$ 
increases. The transitions occur at $N \simeq 
k^{1/3}$, $k^{1/2}$ and $k$, as described in the next few sections.

%----------------------------------------------------
\section{ $100< N \leq 16 k^{1/3}:$ Dyadic merging of paths}
 \label{dyadic merging}

We now consider the case {\bf $100< N \leq  16 k^{1/3} $}.
As above, we subdivide the innermost and outermost tubes 
into $\sqrt{k} = 2^K$ narrow, parallel sub-tubes, we terminate 
all entering paths within these narrow sub-tubes and 
we propagate the $2^K$ corners of the narrow sub-tubes
through the spiral. 
%See Figure \ref{NarrowTubes}. 
However, if we  propagate each  corner
of a narrow tube as a separate path 
all the way through the spiral we 
will generate too many new vertices. The 
solution is to  merge paths as they propagate 
through the spiral.

The general  idea is to take pairs of adjacent paths and bend them 
towards each other until they meet; this reduces the number 
of paths from $\sqrt{k}$ to $\sqrt{k}/2$. We then define new 
adjacent pairs and bend these towards each other, until 
they merge, creating $\sqrt{k}/4$ paths.  
We continue 
merging pairs of paths, reducing the number of paths by 
a factor of two each time, but needing more steps to 
accomplish this each time. Since each  pair of paths 
to be merged at some stage starts 
twice as far apart as pairs in the previous stage, we need 
a total displacement that is twice as long  as before.
Moreover, since 
the tubes are twice as wide, the allowable displacement 
is half as large per piece. Since the formula 
for displacement in Lemma \ref{bend in tube} involves
a factor of $L^2$, the length only needs to double to 
merge the next set of tubes.

  However, this does 
not mean we need twice as many windings. Since 
the windings of the spiral become longer as we move 
outwards, fewer windings are needed to achieve a
given length, so  the number of pieces crossed grows, but 
does not double, i.e., it grows at a geometric rate that 
is strictly less than $2$. This will allow us to obtain 
the desired bound.  We do the same construction 
starting at both ends of the spiral, so that when paths 
meet at the center of the spiral, the surviving paths from 
both ends match up. Next we give the details about how 
the merging and matching process works. 
See Figure \ref{Merging}.

If $N$ is even write $N = 2M$; otherwise write $N=2M+1$.
We will give the details for the merging process in 
the inner-half of the spiral, i.e., $ T_1 \cup 
\dots \cup T_M$. The same procedure is used in 
the outer half
$ T_{N-M} \cup \dots T_N$; these tubes are all longer 
than the corresponding tubes in the inner half of the 
spiral, and so the allowable displacements will be larger; 
thus any displacement that can be achieved in the inner
half can also (more easily) be attained in the outer half.
Thus we can assume the merged paths in  the inner half terminate 
at the same points as the merged paths in the outer half when 
$N$ is even and that they can be joined by $P$-paths in 
$T_{M+1}$ when $N$ is odd. 

\begin{figure}[htb]
\centerline{
\includegraphics[height=2.0in]{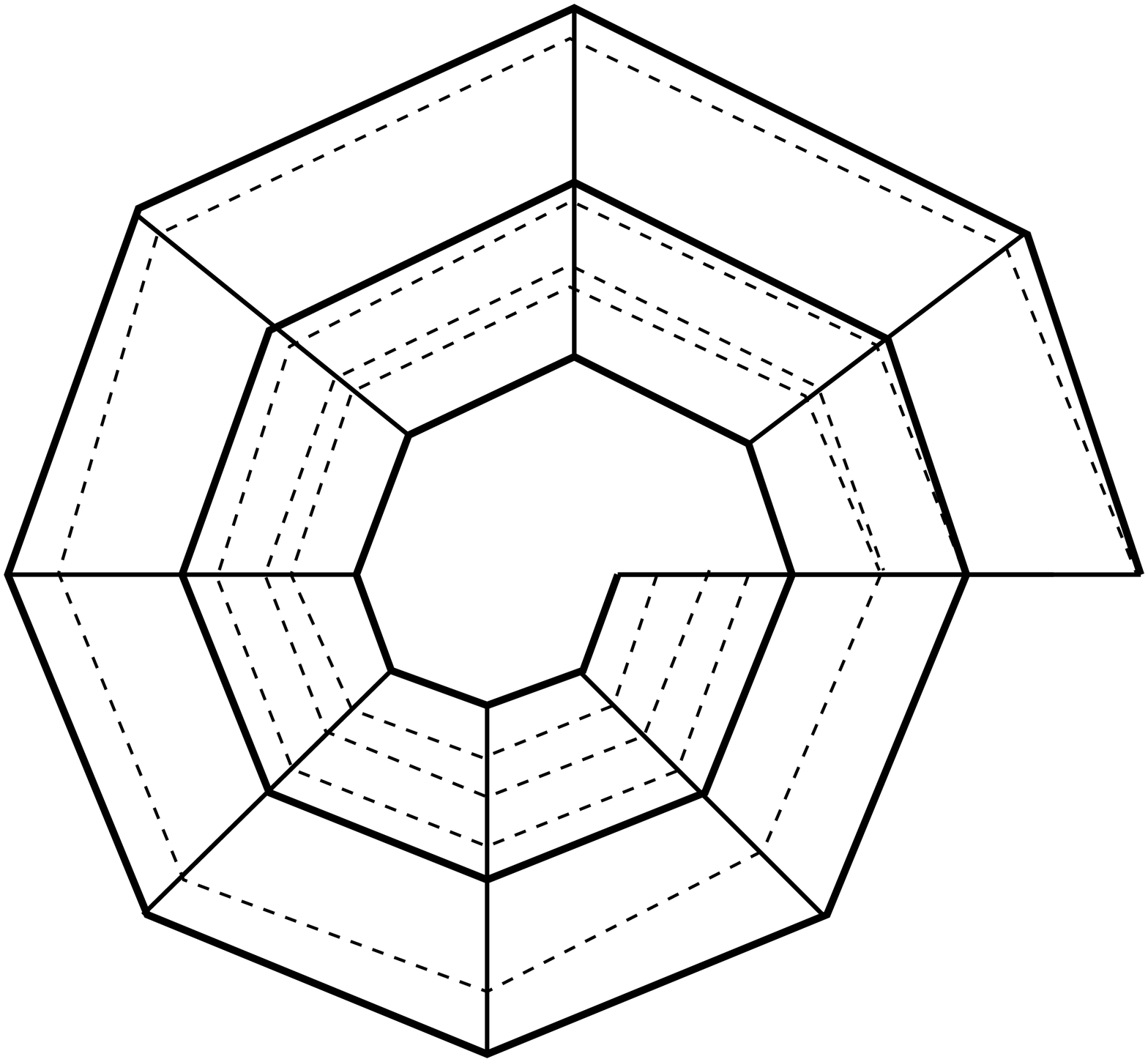}
 }
\caption{\label{Merging}
Between windings  $100$ and $16k^{1/3}$  we merge the paths
generated by the internal corners of the narrow tubes 
at the entrances to the spiral. Here we show four tubes 
merging into two tube in the first winding and the remaining 
two tubes merging into one in the second winding. (In general 
though, it will take more windings to merge wider tubes.)
}
\end{figure}

Let $\lambda = 2^{2/3} < 2$ and let $\lambda_j$ be the 
integer part of $\lambda^j$. Let $S_j = T_{\lambda_{j}+1} 
\cup \dots \cup T_{\lambda_{j+1}}$, i.e., this 
is  a sub-tube of the spiral that goes from winding $\lambda_j$
to winding $\lambda_{j+1}$. By the argument 
proving  Lemma \ref{pure spiral length}
this part of the tube has length  at least 
 \begin{eqnarray*}
  \sum_{i = \lambda_j+1}^{\lambda_{j+1}} (2i-1) 
   &\geq&     \lambda_{j+1}^2 -  (\lambda_{j}+1)^2 
   \geq  (\lambda^{j+1}-1)^2 -  (\lambda^{j}+1)^2 \\
   & = &  [\lambda^{2j+2}- 2 \lambda^{j+1}+1 
        -\lambda^{2j} - 2\lambda^j -1]   \\
    &=&   \lambda^{2j}  [\lambda^{2}- 2 \lambda^{-j+1}
        -1 - 2\lambda^{-j} ]  
\end{eqnarray*} 
An explicit calculation shows that for $j=5$ 
$$ \lambda^{2}- 2 \lambda^{-j+1} -1 - 2\lambda^{-j}  \approx 1.00644
  \geq 1 ,$$
and since this function is increasing in $j$, we deduce
that $S_j$ has length $\geq \lambda^{2j}$ 
if $j \geq 5$.  Moreover $\lambda^{10} = 2^{20/3} 
 \approx  101.594 \geq 100$.  

Suppose $S_j$ is divided into $2^{m}$ parallel sub-tubes
of width $ w = 2^{-m}$. By Lemma \ref{bend in tube} a path entering 
one of these sub-tubes can be Gabriel bent within the tube
to hit either of the far corners of the tube  if 
$$  w \leq \frac {L^2}{4pw} .$$
In this case $L  \geq  \lambda^{2j}$, $p = k \lambda^j$ 
(each piece is repeated in the tube $k \lambda^j $ times; 
once per spiral),
and $w = 2^{-m}$, so this  inequality becomes 
$$  2^{-m} \leq 
 \frac {(\lambda^{2j})^2}{4 \lambda^{j} k 2^{-m} }
$$
and this occurs iff 
$$  4 k 2^{-2m} \leq \lambda^{3j}.$$
Since  $2^{2j} = \lambda^{3j}$ and $k = 2^{2K}$, this 
is equivalent to  
$$    2^{-m} \leq 2^{j-1}/ \sqrt{ k}=2^{j-1-K}  .$$

So for $j\geq 5$
 we divide $S_j$ into $2^{K-j+1}$ parallel sub-tubes.
The tube $S_j$ shares an end with $S_{j-1}$. The  corners of
 the narrow sub-tubes of $S_j$ on this end 
form a subset of the corners of
the narrow sub-tubes of $S_{j-1}$ on the shared end.
The  remaining  corners are mid-points of  one end 
of the narrow tubes in $S_j$ and   can be propagated 
by Gabriel paths through the  narrow sub-tubes of $S_j$ so that
they hit far corners of these sub-tubes, merging another set
of paths. 

The total number of vertices created in $S_j$ is therefore
$O( \lambda^j \cdot 2^{K-j} \cdot k)$ (the number of windings
of the tube, times the number of paths propagating through the 
tube, times the number of pieces crossed in each winding).
The inner half of the spiral contains tubes $S_5, \dots, 
S_m$  as long as  $\lambda^m = 2^{2m/3} < M$, or
 $ m  \leq   \frac 32 \log_2 M$.
Summing over all $j \in [5,m]$ gives the upper bound 
$$ O(\sum_{j=5}^{\frac 32 \log_2 M} \lambda^j 2^{K-j} 2^{2K}  )
    =  O( 2^{3K} \sum_{j=5}^\infty 2^{\frac 23j -j})
    =O(2^{3K}) = O(k^{1.5}).$$
Therefore  if $N = O(k^{1/3})$, 
we can terminate every path entering 
the spiral using only $O(k^{1.5})$ new vertices in the spiral.

%----------------------------------------------------
\section{ $16  k^{1/3}< N \leq 8 k^{1/2}:$ a single path}

%In the previous section we started with $2^{K}$ paths propagating 
%through the spiral and in each of the sub-tubes $S_j$ for 
%$j \geq 5$, we reduced  the number of paths by a factor 
%of $2$. In $S_{K+5}$, we reduce the number to a single 
%path. The radius where this happens is 
%$$ \lambda_{K+5}  \leq \lambda^{K+5} = 2^{\frac 23 (K+5)} 
%   = k^{1/3} \cdot 2^{10/3} \leq 16 k^{1/3}.$$

If $ 16 k^{1/3}< N \leq  8 k^{1/2}$ 
 we  duplicate the construction 
of the previous section up to and including $S_{K+5}$ and
after this point we  simply let the single  remaining
path propagate as a $P$-path. Since the path winds
around the spiral at most $\sqrt{k}$ times, 
 at most $O(k^{1.5})$  new vertices
are created between $2k^{1/3}$ and $k^{1/2}$,
 and the path in the inner half eventually hits the analogous 
path from the outer half of the spiral.

%----------------------------------------------------
\section{ $8 k^{1/2}< N \leq k:$  multiple closed curves}

When $j =8 \sqrt{k} $ we hit a new transition point: $S_j$  now has 
length $8\sqrt{k} = 2^{K+3}$, 
hence minimal length $ \geq 2\sqrt{k}$. Hence  Lemma \ref{G tube bending} 
says that if $j \geq 8 \sqrt{k}$, then 
the allowable displacement in $S_j$ is at least
$$ \frac {\tilde \ell(S_j)}{ 4k} = 
 \frac {(2 j)^2}{ 4 k}  \geq 1,$$
i.e., the allowable displacement is larger than the width 
of the tube, so we can connect opposite corners of $S_j$
by a Gabriel path (Gabriel assuming the tangent disks have 
diameter $1$, the width of the tube).

The idea is that for $j \geq 8 \sqrt{k}$ we cut the spiral 
by closed curves $\gamma_j$ in $S_j$. See Figure \ref{Rings}.
If we choose one such curve from every $S_j$ for 
$\sqrt{k} \leq j \leq k$  then 
we generate $\simeq k$ curves, each with  $k$  vertices, 
giving a total of $k^2$ new vertices. This will eventually 
lead to a $O(n^3)$ estimate in Theorem \ref{NonObtuse}
(still a polynomial bound, but larger than we want).
Instead, we will select a subsequence of tubes $\{S_{j_p}\}$
to contain closed curves. We want to choose at most 
$O(\sqrt{k})$  indices between $\sqrt{k}$ and $k$ and we 
want $\gamma_{j_p}$ to be 
Gabriel with respect to  annular region 
bounded between $\gamma_{j_{p-1}}$ and $\gamma_{j_{p+1}}$. 
The length of the $Q$-segments connecting these two
curves is exactly $ \delta_p  \equiv j_{p+1}  - j_{p-1}$
and the minimal-length of $S_{j_p}$ is at least 
$\frac 14 j_p$, so  by Lemma \ref{bend in tube}, 
(taking $L = j_p/4$, $p =  k$, $w= \delta_p$)
the Gabriel condition can be met if
$$ \frac {(j_p/4)^2} {4 k\delta_p} \geq 1,$$
or equivalently
$$ \delta_p \leq \frac {j_p^2}{64k} $$

\begin{figure}[htb]
\centerline{
\includegraphics[height=2.0in]{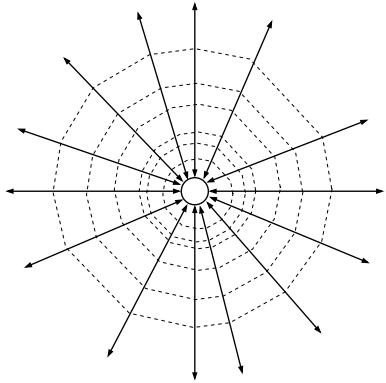}
 }
\caption{\label{Rings}
Between windings $k^{1/2}$ and $k$ we use closed 
Gabriel loops that start with unit spacing near
$k^{1/2}$  and eventually reach spacing $\simeq k$.
We shall see in the next section that beyond
radius $\simeq k$, no more loops are needed.
}
\end{figure}

 Suppose we have an  increasing  sequence of integers
$\{ j_i\}_{i=1}^P$ with
\begin{enumerate}
\item  $j_1   =  \sqrt{k} = 2^K$,
\item  $j_P   =  k = 2^{2K}$, 
\item  $P = O(\sqrt{k})$, 
\item $j_{i+1}-j_i \leq \frac 1{128k} j_i^2$.
\end{enumerate}
The fourth condition implies 
$$ |j_{i+1} - j_{i-1}  | \leq 
 |j_{i+1} - j_i|  + |j_{i} - j_{i-1}|  
\leq \frac 1{128k} (j_i^2 +j_{i-1}^2)
\leq \frac 1{64k} j_i^2 ,$$
since $j_{i-1} \leq j_i$.
So if we can find such a set of integers, we 
will have constructed the desired closed loops. 

We divide the interval $[\sqrt{k}, k] = [2^K, 2^{2K}]$ into 
geometrically increasing blocks 
$$ [2^K, 2^{K+1}],
 \dots, [2^{K+q}, 2^{K+q+1} ], 
  \dots, 
 [2^{2K-1} , 2^{2K}] .$$
In  the $q$th block we can take our sequence to be 
separated by gaps of size 
$ \frac 1{128 k} (2^q \sqrt{k})^2 = 2^{2q-7}$
and hence $2^{K+7-q}$ evenly spaced integers 
suffice to cross this block with the desired spacing.
Summing over all $q$'s shows that at most 
$$  \sum_{q=0}^K 2^{K+7-q} \leq
       2^7 \cdot   2^K = 128 \sqrt{k},$$
integers $j_p$ are needed. Thus at most $O(k^{1.5})$ new 
vertices are created in this phase of the construction.

%----------------------------------------------------
\section{ $ N > k:$  the empty region} \label{empty}

In the previous section we created Gabriel loops in the 
spiral, assuming there were other loops nearby to prevent 
the Gabriel disks from getting too  large; thus we 
needed a sequence of larger and larger loops. However, once
$j \geq k$, we no longer need 
to limit the size of the Gabriel disks and this sequence 
of loops can end. More precisely,

\begin{lemma}
Suppose the spiral has $k$ pieces and the spiral 
has more than $k/4$ windings. Then 
each sub-tube $T_j$ with $j \geq  k/4$ is crossed by a 
closed loop  so that the vertices of the loop make 
each isosceles piece Gabriel.
\end{lemma}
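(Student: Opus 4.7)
The plan is to apply the Gabriel displacement estimate of Lemma \ref{bend in tube} directly to the one‐winding tube $T_j$, using the large displacement budget that becomes available when $j\ge k/4$ to close a $P$-path into a loop that winds exactly once around the spiral.

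First I would set up the parameters. After the normalization fixed in Section \ref{G bend spirals}, the tube $T_j$ has width $w=1$, consists of $k$ isosceles pieces (one copy of each piece of the spiral), and has $P$-path length at least $2j-1$ by the summation used in the proof of Lemma \ref{pure spiral length}. To convert this into a bound on $\tilde\ell(T_j)$ rather than $\ell(T_j)$, I would view $T_j$ as half of a wider sub-tube of the spiral of width $2$ and invoke Lemma \ref{sum short}, giving $\tilde\ell(T_j)\ge \ell(T_j)/4 \ge (2j-1)/4$. Feeding this into Lemma \ref{bend in tube} yields an allowable displacement of at least $\tilde\ell(T_j)^2/(4kw)\ge j^2/(64k)$, which for $j\ge k/4$ is at least $k/1024$. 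In particular, once $k$ exceeds a small absolute constant this displacement comfortably exceeds $w=1$.

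Next I would use the excess budget to close a path into a loop. Since $T_j$ is a simple G-tube, its two $Q$-ends are disjoint sub-intervals on a common $Q$-segment separated by exactly $w=1$; an ordinary $P$-path starting at a point of the outer end lands one unit away on the inner end. The displacement estimate above lets us Gabriel-bend the path so that it lands at exactly the same planar point, producing a simple closed polygonal curve with one near-$P$-segment in each isosceles piece of the spiral. By construction each such segment is a Gabriel chord of its piece; together with the two endpoints it introduces on the two $Q$-sides, the piece becomes Gabriel once we also rule out the three piece-corners as bad points, which follows from the reflection/tangent-disk argument already used in Section \ref{Refine tri}.

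The main obstacle I anticipate is ensuring that the Gabriel disks of the new loop do not swallow Steiner points previously placed in the neighbouring windings $T_{j-1}$ and $T_{j+1}$ by the construction of Sections \ref{dyadic merging}--\ref{G bend spirals}. This is precisely why the threshold $j\ge k/4$ appears in the statement: a loop of total length $\Theta(j)$ subdivided into $k$ near-$P$-segments has Gabriel disks of typical diameter $\Theta(j/k)$, and for $j\gtrsim k/4$ this must be compared against the quadratically growing spacing $j_p^2/k$ of the preceding closed loops. I would choose the loop to track the centre of $T_j$ and would distribute the displacement uniformly across the winding, so that no disk extends more than $O(j/k)$ from the loop; the spacing estimate from the end of Section~\ref{G bend spirals} then certifies disjointness from earlier vertices. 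Once this separation is verified the loop meets the lemma's claim.
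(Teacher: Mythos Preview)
Your approach has a genuine gap: you apply Lemma \ref{bend in tube} with $w=1$, the width of the single-winding tube $T_j$, but that lemma only certifies the Gabriel condition relative to pieces of width $w$. The conclusion you need is that the loop vertices make the \emph{full} isosceles dissection pieces Gabriel, and those pieces have width equal to the total radial extent of the spiral, not $1$. Concretely, the Gabriel disks in question have as diameters the $Q$-side segments between the new loop vertex and the nearest other vertex on that same $Q$-side; in the ``empty region'' context this neighbour may sit many windings away (or be the outer corner of the piece), so the effective $w$ in Lemma~\ref{G tube bending} is unbounded and the $\tilde\ell(T_j)^2/(4k\cdot 1)$ estimate does not apply. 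Your final paragraph also misreads the Gabriel condition: the disks are on the $Q$-side segments, not on the near-$P$-segments of the loop, so their diameter is governed by the spacing of loops, not by $j/k$.

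The paper's argument sidesteps this precisely by switching to the other term in the displacement estimate of Lemma~\ref{G tube bending}, namely $|a-b|^2/4R$, which is independent of $w$. It then exploits a feature your approach never uses: because the loop winds once around the center of the spiral, the angles $\theta_i$ of the inward-pointing pieces sum to at least $2\pi$. Writing $|a-b| = 2R\sin(\theta/2)$ gives a per-piece displacement $\gtrsim R\theta_i^2$, and Cauchy--Schwarz on $\sum\theta_i \ge 2\pi$ yields total allowable displacement $\gtrsim R_{\min}/k$. Since inward-pointing vertices at winding $j$ have $R_{\min}\gtrsim j$, this exceeds $1$ exactly when $j\gtrsim k$, giving the threshold in the statement without any assumption on nearby loops. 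That $w$-free route is the essential idea you are missing.
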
 

\begin{proof}
As above, we assume $T_j$ has width $1$.
Lemma \ref{G tube bending}  implies that the
 allowable displacement 
for an isosceles piece 
is at least $|a-b|^2/4R$, where $R$ is the distance to the 
vertex of the piece. We also have 
$ |a-b| =  2 R  \sin(\theta/2)$ where $\theta \in [0, \pi)$
is the angle of the piece. 
Since $\sin(\theta/2) \geq \sqrt{2} \theta /\pi$, 
the allowable displacement for such a piece is at least
$$    R \sin^2(\theta/2) \geq 
          R \frac  { 2}{\pi^2} \theta^2  .$$
  If we only bend the path for pieces 
where the vertex direction points towards the center of
the spiral, the sum of the angles is at least $2 \pi$, so 
by the Cauchy-Schwarz inequality,
the sum of the angles squared is at least $(2\pi)^2/k$.
Hence the total  allowable displacement in the tube $T_j$ 
 is at least $8R/k$ times the 
minimal-length of  $T_j$. In $T_j$ every piece with vertex 
towards the center of the spiral has $R \geq m-1 \geq m/2$,
(since all our pieces have angle $\leq \pi/2$, there are at
least $4$ of them)
so the allowable displacement is at least $4m/k \geq 1$, 
the width of $T_j$.
\end{proof} 

Up to this point, we have divided the spiral into 
inner and outer halves, but at this stage this is 
not necessary. The outermost tube  contains a Gabriel 
curve that connects opposite corners, forming  a loop
and every  entering path can be Gabriel propagated to 
hit this same corner. The region between this loop and 
the loop formed in $S_k$ is left empty.

The number of points created in a spiral constructed 
from $k$ isosceles pieces is $O(k^{1.5})$ with a 
constant that is bounded independent of $N$, the number
of windings. Since for each spiral $k =O(n)$ ($n$ 
recall that $n$ is the number of vertices in the PSLG)
and since there can only be $O(n)$ spirals (each is a 
return region and there are only $O(n)$ return regions),
the total number of points added due to all spirals is 
$O(n^{2.5})$.  This completes the proof of Theorem 
\ref{NonObtuse}.

%-----------------------------------------------------------------------------

\bibliography{nonobtuse}
\bibliographystyle{plain}
\end{document}